\def\biblio{\bibliography{bibliography}}
\definecolor{lightergray}{rgb}{.9,.9,.9}
\definecolor{lighterblue}{rgb}{.9,.9,1}
\definecolor{lightergreen}{rgb}{.9,1,.9}
\definecolor{lightgray}{rgb}{.75,.75,.75}
\newcommand{\rref}[1]{\protected@edef\@currentlabel{#1}}
\DeclareMathAlphabet{\mathpzc}{OT1}{pzc}{m}{it}
\DeclareMathOperator{\Id}{Id}
\DeclareMathOperator{\ii}{i}
\theoremstyle{plain}
\newtheorem{theorem}{Theorem}[section]
\newtheorem{lemma}[theorem]{Lemma}
\newtheorem*{theorem*}{Theorem}
 \numberwithin{equation}{section}
\let\emptyset\varnothing
\newcommand{\ZZ}{\mathbb{Z}}
\newcommand{\CC}{\mathbb{C}}
\newcommand{\gl}{\mathfrak{gl}}
\newcommand{\es}{{\emptyset}}
\newcommand{\fullset}{{\bar\es}}
\newcommand{\groupG}{\ensuremath{\mathsf{G}}}
\newcommand{\groupB}{\ensuremath{\mathsf{B}}}
\newcommand{\groupP}{\ensuremath{\mathsf{P}}}
\newcommand{\groupN}{\ensuremath{\mathsf{N}}}
\newcommand{\Gl}{{\mathsf{GL}}}
\newcommand{\Sl}{{\mathsf{SL}}}
\newcommand{\SO}{{\mathsf{SO}}}
\newcommand{\gm}{{\mathsf{m}}}
\newcommand{\gn}{{\mathsf{n}}}
\renewcommand{\sl}{{\mathfrak{sl}}}
\newcommand{\so}{{\mathfrak{so}}}
\newcommand{\glmn}{{\mathfrak{\gl}_{\gm|\gn}}}
\newcommand{\Y}{{\mathcal{Y}}}
\newcommand{\CF}{{\mathcal{F}}}
\newcommand{\Tau}{{\mathcal{T}}}
\newcommand{\aast}{a_\ast}
\newcommand{\Ac}{\mathsf{A}}
\newcommand{\Bc}{\mathsf{B}}
\newcommand{\bc}{\mathsf{b}}
\newcommand{\algg}{{\mathfrak{g}}}
\newcommand{\be}{\begin{eqnarray}}
\newcommand{\ee}{\end{eqnarray}}
\newcommand{\eg}{{\it e.g. }}
\newcommand{\ie}{{\it i.e. }}
\newcommand{\cf}{{\it cf. }}
\newcommand{\etc}{{\it etc}}
\newcommand{\via}{{\it via }}
\newcommand{\rhs}{{r.h.s. }}
\newcommand{\lhs}{{l.h.s. }}
\newcommand{\wrt}{{w.r.t. }}
\newcommand{\apriori}{{\it a-priori }}
\newcommand{\Plucker}{{Pl{\"u}cker }}
\newcommand{\Backlund}{{B{\"a}cklund }}
\newcommand{\Schrodinger}{{Schr{\"o}dinger }}
\def \ba {\begin{aligned}}
\def \ea {\end{aligned}}
\newcommand{\Qs}{q} 
\newcommand{\Q}{Q} 
\newcommand{\Ts}{t} 
\newcommand{\T}{T} 
\newcommand{\q}{\ensuremath{\mathsf{q}}} 
\newcommand{\spm}{z} 
\newcommand{\spa}{u} 
\newcommand{\rank}{r} 
\newcommand{\CD}{\mathcal{D}} 
\newcommand{\orthb}{\varepsilon} 
\newcommand{\evp}{\ensuremath{\lambda}}
\newcommand{\syman}{\ensuremath{\q}}
\newcommand{\EAff}{E_{\alpha_{0}}}
\newcommand{\Cox}{\ensuremath{h}}
\newcommand{\WG}{\ensuremath{\mathcal W}}
\newcommand{\qV}{V}
\newcommand{\qS}{\psi}
\newcommand{\qC}{\eta}
\newcommand{\Vandermonde}{\Delta}
\newcommand{\dressing}{\sigma}
\newcommand{\metric}{g}
\newcommand{\vecf}[1]{#1}
\newcommand{\qZ}{\zeta}
\def\@fpheader{\relax}
\tikzset{
  edge node/.code={%
      \expandafter\def\expandafter\tikz@tonodes\expandafter{\tikz@tonodes #1}}}
\tikzset{
  subseteq/.style={
    draw=none,
    edge node={node [sloped, allow upside down, auto=false]{$\subset$}}},
  Subseteq/.style={
    draw=none,
    every to/.append style={
      edge node={node [sloped, allow upside down, auto=false]{$\subset$}}}
  }
}
\title{Extended systems of Baxter Q-functions and fused flags I: simply-laced case}
\author{Simon Ekhammar$^{a}$,}
\author{Hongfei Shu$^{b}$,}
\author{Dmytro Volin$^{a,b}$}
\affiliation[a]{Department of Physics and Astronomy,\\
Uppsala University, Box 516, SE-751 20 Uppsala, Sweden}
\affiliation[b]{Nordita, KTH Royal Institute of Technology and Stockholm University,\\
Roslagstullsbacken 23, SE-106 91 Stockholm, Sweden}
\emailAdd{simon.ekhammar@physics.uu.se}
\emailAdd{hongfei.shu@su.se}
\emailAdd{dmytro.volin@physics.uu.se}
\abstract{
The spectrum of integrable models is often encoded in terms of commuting functions of a spectral parameter that satisfy functional relations. We propose to describe this commutative algebra in a covariant way by means of the extended Q-system that comprise Q-vectors in each of the fundamental representations of the (Langlands dual of) the underlying symmetry algebra. These Q-vectors turn out to parameterise a collection of complete flags which are fused with one another in a particular way. We show that the fused flag is a finite-difference oper in a particular gauge, explicit identification depends on a choice of a Coxeter element.

The paper considers the case of simple Lie algebras with a simply-laced Dynkin diagram. For the $A_r$ series, the construction coincides with already known results in the literature. We apply the proposed formalism to the case of the $D_r$ series and the exceptional algebras $E_r$, $r=6,7,8$. In particular, we solve Hirota bilinear equations in terms of Q-functions and give the explicit character solution of the extended Q-system in the $D_r$ case. We also show how to build up the extended Q-system of $D_r$ type starting either from vectors, by a procedure similar to the $A_r$ scenario which however constructs a fused flag of isotropic spaces, or from pure spinors, \via fused Fierz relations.

Finally, for the case of rational, trigonometric, and elliptic spin chains, we propose an explicit ansatz for the analytic structure of Q-functions of the extended Q-system. We conjecture that the extended Q-system constrained in such a way is always in bijection with the Bethe algebra of commuting transfer matrices of these models and moreover can be used to show that the Bethe algebra has a simple joint spectrum.
}
\begin{document} 
\maketitle

\newpage
\section{Introduction}
Existence of commuting charges is a landmark feature of integrable systems which is even used sometimes as their defining property. While this point of view works for \eg  Liouville integrability, it is not sufficient in other situations where one needs also a controlled way to construct the commuting charges. Take for instance a quantum system with a finite-dimensional Hilbert space. If the physical Hamiltionian has a non-degenerate spectrum, we can construct a maximal commutative algebra by simply taking powers of the Hamiltonian meaning that commuting charges are always at hand in principle, while integrability should be a more stringent property.

A good example of a controlled construction of the commuting charges is the quantum inverse scattering method \cite{springerlink:10.1007/BF01018718,KBI93} or an equivalent procedure to generate transfer matrices of integrable spin chains. When performed properly it should provide us with a maximal commutative subalgebra of Hilbert space endomorphisms which shall be called the Bethe algebra. Examples are the Bethe algebras in trigonometric XXZ-type models that are representations of $U_\q(\hat\algg)$ and rational XXX-type models that are representations of the Yangian $\Y(\algg)$ \cite{Kulish:1983md,Jimbo:1985zk,Drinfeld1988,Chari:1994pz,Faddeev:1996iy}. Being generated in the particular manner, the Bethe algebra has a specific internal structure and a natural objective is to comprehensively describe it.

In this paper, we describe the Bethe algebra in terms of a collection of Baxter Q-functions. Whereas this approach is already understood for the cases when $\algg=\sl_{\rank+1}$ it is less developed for other Lie algebras. We fill in  gaps in this knowledge by developing a concept of the extended Q-system. This paper shall focus on simply-laced cases only.

While we refer to the XXX/XXZ-type spin chains for concrete examples, our results are of universal nature. They do not depend on a particular physical model but only on the representation theory of $\hat\algg$ (more accurately, of its Langlands dual). For instance, we use ODE/IM correspondence to derive the main statements, and the employed differential equation does not correspond to the mentioned spin chains except in a certain limit.
\newline
\newline
Universal nature of the Q-system already manifests in the fact that equivalent questions and answers we face while studying it emerge in a broad variety of research directions. We will be able to exhibit only some of the connections, and for those that do not get enough of discussion in the main text, we refer to: \cite{maulik2018quantum,Nekrasov:2013xda} for Bethe/gauge correspondence and quantum geometry, \cite{Frenkel:2005pa,Chervov:2006xk} for geometric Langlands, \cite{Bazhanov:2001xm} for CFT, \cite{Lacroix:2018njs} for Gaudin models~\footnote{The cited works are not necessarily the first ones in the corresponding subject but contain a connection to our discussion and/or can be used as reviews.}.

\subsection{Concept of a Q-system}
As the first illustration, let us consider one of the most famous equations in the theory of quantum integrable systems -- Baxter's TQ-relation \cite{Baxter:1971cs,Baxter:1972hz}
\be
\label{eq:TQ}
\Ts\,\Qs=\phi^- \Qs^{++}+\phi^+ \Qs^{--}.
\ee
This relation appears in numerous studies. In some, like spin chain models,  $\Ts$ and $\Qs$ have the operatorial meaning of transfer matrices generating commuting charges and in some, like TBA \cite{Zamolodchikov:1989cf} and ODE/IM correspondence \cite{Dorey:1998pt,Bazhanov:1998wj}, $t$ and $q$ are convenient functions of a spectral parameter having a different physical origin. Eventually, it is the analytic properties of $\Ts,\Qs,\phi$ that decide which system the TQ-relation describes. Demanding concrete analytic properties often goes under the name of analytic Bethe Ansatz \cite{Reshetikhin1983}.

For XXZ-type models based on $U_\q(\hat\algg)$ (with $\algg=\sl_2$ for the above TQ-relation), $\Ts,\Qs,\phi$ are  polynomials in multiplicative spectral parameter $\spm$ while the spectral parameter shift is defined as $f^{\pm}(\spm):=f(\q^{\pm 1/2} \spm)$. For XXX-type models based on $\Y(\algg)$, $\Ts,\Qs,\phi$ are polynomials in an additive spectral parameter $\spa$, and then $f^{\pm}:=f(\spa\pm\frac{\hbar}{2})$, with $\hbar\in\CC^{\times}$. In this paper, explicit dependence on the spectral parameter is often of little relevance, and so we shall exploit the implicit notation $f^\pm$ without specifying how the shift is realised, we shall also use $f^{[n]}$ which means applying the shift $n$ times. In fact, one can go as far as to consider only a discrete set of points that are related to one another by translations $f^{[n]}$, $n\in\ZZ$. \label{page:singularities}For clarity of exposition though and for a comfortable treatment of potential singularities via analytic continuation we shall assume that the functions are holomorphic functions of the spectral parameter in a large enough simply-connected domain that allows applying the shift operation for as many times as is needed. If the space $\Sigma$ of the spectral parameter values is not simply-connected, this domain is meant to be in the universal cover of $\Sigma$.

It will be convenient to absorb the source term $\phi$ into the definition of Baxter Q-function and T-function
\be
\label{eq:Ga}
Q = \dressing\,q\,,\quad T= \dressing^{++}\dressing^{--}\,t\,,
\ee
where $\dressing^{+}\dressing^{-}=1/\phi$, and so Baxter relation becomes
\be
\label{eq:TQ2}
T\,Q=Q^{++}+Q^{--}\,.
\ee
Rescaling \eqref{eq:Ga} is often called gauge transformation however we warn the reader that the gauge transformations we are going to speak about later are different operations.

The function $Q$ is in many aspects a more fundamental object than $T$. One can readily see it from \eqref{eq:TQ2}: If $Q_1,Q_2$ are two independent solutions of \eqref{eq:TQ2} then $T$ can be written as the determinant
\be
\label{eq:TfromQ}
T=\left|\begin{matrix}Q_1^{++} & Q_2^{++} \\ Q_1^{--} & Q_2^{--}\end{matrix}\right|\,
\ee
provided that we normalised solutions to satisfy
\be
\label{eq:Plucker}
W(Q_{1},Q_{2})=1\quad\text{with}\quad W(f_{1},f_{2}):=\left|\begin{matrix}f_1^{+} & f_2^{+} \\ f_1^{-} & f_2^{-}\end{matrix}\right|\,.
\ee
On the example of Baxter TQ-relation we observe that there is not one but two Q-functions $Q_1,Q_2$ obeying certain relations. In higher-rank cases there will be many more Q-functions. A collection of these functions, together with relations they obey and symmetry transformations they enjoy shall be called a Q-system~\footnote{not to be confused with Q-systems that are a character limit  of T-systems \eg in \cite{Kuniba:2010ir}.}.

\subsection{State of the art for $\sl_{\rank +1}$ Q-systems}
Generalisation of the above discussion from $\sl_2$ to $\sl_{\rank +1}$ can be done in several conceptually different and yet deeply interrelated ways.

\paragraph{Quantum characters} One possibility is to perceive $T$ in \eqref{eq:TQ2} as a ``quantum character'', see \cite{KNIGHT1995187,1998math.....10055F} for precise definitions. Think about the Schur polynomial $\chi_{{\tiny\yng(1)}}=x+y$ for the defining representation of $\sl_2$ and \eqref{eq:TQ2} written as $T=\frac{Q^{++}}{Q}+\frac{Q^{--}}{Q}$ as its quantum generalisation. For higher ranks, the character of finite-dimensional representation of $\sl_{\rank+1}$ labelled by an integer partition $\lambda$ is given by
\be\label{eq:chichar}
\chi_\lambda=\sum_{\Tau}\prod_{(a,s)\in\lambda}x_{\Tau_{a,s}}\,,
\ee
where the sum runs over all semi-standard Young tableaux $\Tau$ of shape $\lambda$ and the product runs over all boxes of $\lambda$ parameterised using the Cartesian coordinates $(a,s)$. The quantum version of this character is the transfer matrix in representation $\lambda$ which can be computed as \cite{Bazhanov:1989yk}~\footnote{An overall shift of spectral parameter in the transfer matrix as well as numeration of $\Lambda_a$ is a matter of convention. We choose it to match the rest of the paper.}
\be\label{eq:Tchar}
\T_{\lambda}^{[\lambda_1-\lambda_1'-1]}=\sum_{\Tau}\prod_{(a,s)\in\lambda}\Lambda_{\rank+2-\Tau_{a,s}}^{[2(s-a)]}\,,
\ee
with $\lambda'$ being the transposed Young diagram and
\be\label{eq:Ladef}
\Lambda_{a}=\left(\frac{Q_{\leftarrow\vphantom{(} a}^{+}Q_{\leftarrow\vphantom{(} (a-1)}^{[-2]}}{Q_{\leftarrow\vphantom{(} a}^{-}Q_{\leftarrow\vphantom{(} (a-1)}^{}}\right)^{[-a+\frac{\rank+1}2]}\,,\quad a=1,\ldots, \rank+1\,,
\ee
where, in our normalisation choice that generalises \eqref{eq:Ga}, $Q_{\leftarrow\vphantom{(} 0}=Q_{\leftarrow\vphantom{(} (\rank+1)}=1$. By continuing the analogy, $\Lambda_a$ shall be called quantum eigenvalues \cite{Sklyanin:1992sm}.

By imposing that $\T_\lambda$ are non-singular functions (probably up to a well-controlled prefactor like $\dressing^{++}\dressing^{--}$), one should require that poles coming from denominators of \eqref{eq:Ladef} are cancelled out which results in the conventional nested Bethe Ansatz equations
\be\label{eq:NBAE}
\frac{Q_{\leftarrow\vphantom{(} (a-1)}^+Q_{\leftarrow\vphantom{(} (a+1)}^+Q_{\leftarrow\vphantom{(} \vphantom{(}a}^{[-2]}}{Q_{\leftarrow\vphantom{(} (a-1)}^-Q_{\leftarrow\vphantom{(} (a+1)}^-Q_{\leftarrow\vphantom{(} \vphantom{(}a}^{[+2]}}=-1\quad\text{at zeros of}\quad Q_{\leftarrow\vphantom{(} a}\,.
\ee
Here we come to the most simplistic way of introducing Baxter Q-functions by analytic Bethe Ansatz: these are functions of type $Q=\dressing\, q$, where $q$ is a polynomial (or, more generally, a non-singular function) at whose zeros (Bethe roots) the nested Bethe Ansatz equations \eqref{eq:NBAE} should be satisfied, and $\dressing$ provides source terms in these equations. At generic point (of a parameter space describing a model), solutions to the nested Bethe equations with non-coinciding Bethe roots are expected to correctly describe the spectrum of the model but at special points which are of relevance for applications one needs to be more careful, a more detailed discussion is present in Section~\ref{sec:AnalyticBetheAnsatz}.

Whereas existence of Baxter Q-functions and of the generating sequence \eqref{eq:Tchar} can be guessed from Bethe equations, their to-date derivation is far beyond pure guesswork, Q-functions are constructed \cite{Bazhanov:1996dr,Bazhanov:1998dq,Bazhanov:2001xm,Derkachov:2003qb, Yang:2005ce, Kojima:2008zza, Derkachov:2010qe, Bazhanov:2010jq, Kazakov:2010iu}  as Q-operators or, almost equivalently \cite{Boos:2017mqq},  as characters \cite{Hernandez:2011ama} of infinite-dimensional representations of $U_\q(\hat{\mathfrak{b}})$ (or of a shifted Yangian \cite{Brundan2004,Gerasimov:2005qz}), see also \cite{Frassek20} and references therein. By the construction, the Q-functions have  analytic properties that justify the analytic Bethe Ansatz requirements.

\paragraph{Quantum spectral curve} Another way for introducing Q-functions is to generalise \eqref{eq:TQ2} as a finite-difference linear equation. For $\sl_{\rank+1}$, this equation becomes of degree $\rank+1$ \cite{Krichever:1996qd}
\be\label{eq:KLWZ}
\sum_{a=0}^{\rank+1} (-1)^a  \CD^{-a}\,T_{(a)}\, \CD^{-a}\,  Q^{[\frac{\rank+1}2]}=0\,,
\ee
where $\CD f:=f^{+}$. For spin chains, $T_{(a)}$ have interpretation of transfer matrices in the $a$'th fundamental representation of $\sl_{\rank+1}$ and the above equation can be equivalently written as \cite{Talalaev:2004qi, Chervov:2007bb}
\be\label{eq:Tal}
\det (1-\CD^{-1}\mathcal{M}\CD^{-1})\,Q^{[\frac{\rank+1}2]}=0\,,
\ee
where $\mathcal{M}$ is the monodromy matrix of the model (which is the universal $R$-matrix in a particular representation) and $\det$ is a column-ordered determinant \cite{Maninbook}. The higher-rank Baxter relation written as \eqref{eq:Tal} clearly suggests to interpret it as a quantisation of the classical spectral curve $\det(\lambda-\mathcal{M})=0$ with $\mathcal{M}$ being the classical Lax matrix, in pretty much the same way as the \Schrodinger equation is a quantisation of $\frac{p^2}{2}+V-E=0$. This idea was one of the key ingredients for the Sklyanin's separation of variables program which he realised for $\sl_2$ \cite{Sklyanin:1984sb,Sklyanin:1991ss} and partially for $\sl_3$  \cite{Sklyanin:1992sm} cases. For rational XXX-type $\sl_{\rank+1}$ spin chains in arbitrary finite-dimensinal representation, an SoV basis which features $Q$ as the wave function was built in \cite{Ryan:2018fyo, Ryan:2020rfk}~\footnote{Based on the recipe of \cite{Maillet:2018bim}, an equivalent SoV basis, with $Q$ playing the role of the wave function, can be constructed for XXZ case as well, \cf \cite{Maillet:2018rto}. However, the proof of \cite{Ryan:2018fyo, Ryan:2020rfk} that this SoV basis also diagonalises the higher-rank version of Sklyanin's B-operator \cite{Smi01,Gromov:2016itr} cannot be that easily generalised and hence interpretation of separated variables as a quantisation of the classical dynamical divisor is not yet justified beyond the rational case.}.

Equation \eqref{eq:Tal} has $\rank+1$ independent solutions $Q_a$ and we normalise them to satisfy
\be\label{eq:Qcond}
W(Q_1,\ldots, Q_{\rank+1})=1\,,
\ee
where $W(f_1,\ldots, f_{k}):=\det\limits_{1\leq a,b\leq k}f_{a}^{[k+1-2b]}\,.$

The solutions can be also used to formulate a quantum analog of Weyl-Jacobi determinant character formula $\chi_{\lambda}=\frac{\det\limits_{1\leq a,b\leq \rank+1}x_a^{\lambda_b+1-b}}{\det\limits_{1\leq a,b\leq \rank+1}x_a^{1-b}}$ \cite{Krichever:1996qd}:
\be\label{eq:WJ}
\T_{\lambda}^{[\lambda_1-\lambda_1'+1-\frac{r+1}{2}]}=\det\limits_{1\leq a,b\leq \rank+1} Q_a^{[2(\lambda_b+1-b)]}\,.
\ee
Clearly, \eqref{eq:Qcond} is specialisation of \eqref{eq:WJ} to the trivial representation. 

Being solutions of a linear equation, the functions $Q_a$ are defined  ambiguously, up to linear $\sl_{\rank+1}$ transformations~\footnote{We want to keep normalisation \eqref{eq:Qcond} intact, that is why these are not $\gl_{\rank+1}$ transformations.}. As we are dealing with finite-difference equations, linear transformations with periodic ($f^{++}=f$) functions are also allowed which promotes this symmetry to the loop algebra $\hat\sl_{\rank+1}$~\footnote{potentially subject to an appropriate completion or restriction dictated by imposed analytic properties of Q-functions.}. However, one should not confuse it with $\hat\sl_{\rank+1}$ used to construct the quantum algebra $U_\q(\hat\sl_{\rank+1})$. To start with, algebra of symmetries of Baxter equation exists independently of which quantum algebra was used to construct the integrable model. Also, the quantum algebra has generators that commute with Hamiltonians, in particular with Baxter operators $Q_k$, whereas $(Q_1,\ldots, Q_{\rank+1})$ transform as a vector representation under action of  $\hat\sl_{\rank+1}$ symmetry of Baxter equation. Even in the XXZ case, we are nominally dealing with two different algebras, one is $\q$-deformed and the other one is not. In the latter one, the parameter $\q$ appears instead as a period of elements of the non-deformed $\hat\sl_{\rank+1}$, \ie $\hat\sl_{\rank+1}\simeq \sl_{\rank+1}\otimes\CC[t,t^{-1}]$ with $t=z^{\frac{2\pi i}{\log \q}}$. Although the two mentioned $\hat\sl_{\rank+1}$ algebras are conceptually different, they are nevertheless related: they are Langlands dual of one another. To get a better feeling about this statement we remark that the underlying zero-level algebras share the same Weyl group (the permutation group $\mathsf{S}_{\rank+1}$). Identification goes beyond a formal isomorphism: On the level of the quantum algebra, the action of the Weyl group changes the representation in which Baxter operator is computed (explicitly this can be seen in the constructions of \cite{Bazhanov:2001xm,Bazhanov:2010jq}) which can be literally mapped to taking a different solution of Baxter equation.

\paragraph{Weyl transform}  
Let us take another look on \eqref{eq:Plucker}. It relates two Q-functions. One of them, say $Q_1$, solves conventional Bethe equations \eqref{eq:NBAE} and so we identify $Q_1\equiv Q_{\leftarrow\vphantom{(} 1}$. On the other hand, by applying exactly the same logic as in derivation of \eqref{eq:NBAE} we see that $Q_2$ also solves Bethe equations of the same form. To summarise, starting from $Q_1$ which satisfies \eqref{eq:NBAE}, we use \eqref{eq:Plucker} to compute $Q_2$ which satisfies an equivalent of \eqref{eq:NBAE}. The preference of one set of the equations over another may exist (if \eg $Q_1$ is a polynomial of lower degree than $Q_2$) but definitely it is not meaningful for as long as we are mostly ignoring explicit analytic structure of Q-functions. 

\label{bosonicdualitypage}
 This generalises to the higher-rank case as follows: Starting from the functions $Q_{\leftarrow\vphantom{(} (a-1)}$, $Q_{\leftarrow\vphantom{(} a}$, $Q_{\leftarrow\vphantom{(} (a+1)}$, one introduces a new Q-function  $\bar Q_{\leftarrow\vphantom{(} a}$ as the one satisfying
\be\label{eq:Plucker2}
W(Q_{\leftarrow\vphantom{(} a},\bar Q_{\leftarrow\vphantom{(} a})=Q_{\leftarrow\vphantom{(} (a-1)}Q_{\leftarrow\vphantom{(} (a+1)}\,.
\ee
Equation \eqref{eq:NBAE} can be seen as a consequence of \eqref{eq:Plucker2} using the following argument \cite{Voros_2000}: shift \eqref{eq:Plucker2} in two different directions, $Q_{\leftarrow\vphantom{(} a}^{[\pm 2]}\bar Q_{\leftarrow\vphantom{(} a}-\bar Q_{\leftarrow\vphantom{(} a}^{[\pm 2]} Q_{\leftarrow\vphantom{(} a}=\pm Q_{\leftarrow\vphantom{(} (a-1)}^{\pm}Q_{\leftarrow\vphantom{(} (a+1)}^{\pm}$\,, evaluate at zeros of $Q_{\leftarrow\vphantom{(} a}$ thus cancelling one term assuming non-singularity of $Q$-functions, and divide the shifted equation in one direction by the shifted equation in the other direction. $\bar Q_{\leftarrow\vphantom{(} a}$ and $Q_{\leftarrow\vphantom{(} a}$ enter symmetrically (up to a sign) the \Plucker\!\!-type relation \eqref{eq:Plucker2} and hence we can derive using the same procedure Bethe equations \eqref{eq:NBAE} of exactly the same form but with $Q_{\leftarrow\vphantom{(} a}$ replaced everywhere  with $\bar Q_{\leftarrow\vphantom{(} a}$ ~\footnote{This includes \eqref{eq:NBAE} at zeros of $Q_{\leftarrow\vphantom{(} (a\pm 1)}$, where $\frac{Q_{\leftarrow\vphantom{(} a}^+}{Q_{\leftarrow\vphantom{(} a}^-}$ is replaced with $\frac{\bar Q_{\leftarrow\vphantom{(} a}^+}{\bar Q_{\leftarrow\vphantom{(} a}^-}$.}.

The transformation from $Q_{\leftarrow\vphantom{(} a}$ to $\bar Q_{\leftarrow\vphantom{(} a}$ appeared numerous times in the literature and has several different names, we are aware that it was called: beyond equator \cite{Pronko:1998xa}, reproduction procedure \cite{mukhin2002populations}, bosonic (as opposed to fermionic) duality of Bethe equations \cite{Gromov:2007ky}, and \Backlund\!\!-type transformation \cite{Frenkel:2020iqq}. We shall refer to it under yet another name ``Weyl transform'' of Q-functions in attempt to settle a name that reflects the group-theoretical meaning of what is happening. Indeed, for the $\sl_2$ case we can readily notice that the transform permutes the two solutions of Baxter equation. The Weyl symmetry interpretation for higher rank cases shall become clear as we proceed ~\footnote{Since $\bar Q_{\leftarrow\vphantom{(} k}+\alpha\, Q_{\leftarrow\vphantom{(} k}$ is a solution of \eqref{eq:Plucker2} for any periodic function $\alpha$, the transform gets true meaning of the Weyl group action only after this symmetry is taken under control. For instance, in the case of spin chains with twisted boundary conditions, large-$z$ asymptotics of Q-functions and a prescription for Borel resummation of $1/\log z$ expansion fix $\alpha$ \cite{Kazakov:2015efa}. In the case of twist-less rational spin chains, $\alpha$'s should be constants and all options to choose them are organised into a complete flag \cite{Mukhin2003,MV05}: In its Bruhat decomposition $[B\sigma]$, $\sigma$ corresponds to a composition of Weyl transforms and $B$ controls the ambiguity in choosing $\alpha$'s, see more on page~\pageref{pg:miura}.}. 

\paragraph{Miura transform}
Introduce now a suggestive notation \cite{Tsuboi:2009ud}
\begin{align}
\label{eq:BosWeyl}
Q_{1,2,\ldots, a-1,a} &:=Q_{\leftarrow\vphantom{(} a} \,,
\nonumber\\
Q_{1,2,\ldots, a-1,a+1} &:=\bar Q_{\leftarrow\vphantom{(} a}
\end{align}
which alludes to the orthonormal basis $\varepsilon_a$  of the $\sl_{\rank+1}$ root lattice (simple roots are $\alpha_a=\varepsilon_a-\varepsilon_{a+1}$) and to the transition from $Q_{\leftarrow\vphantom{(} a}$ to $\bar Q_{\leftarrow\vphantom{(} a}$ corresponding to an action of the Weyl reflection that permutes $\varepsilon_a$ and $\varepsilon_{a+1}$.

By performing in total $\rank+1 \choose 2$ Weyl transforms in a special way one can generate $\rank+1 \choose 2$ new Q-functions that contain, among others, $Q_1,\ldots, Q_{\rank+1}$ solving Baxter equation \eqref{eq:KLWZ}, see \eg (69)-(71) of \cite{Marboe:2017dmb} for an illustration. Thus we say that $Q_1,\ldots, Q_{\rank+1}$ can be derived from $Q_{\leftarrow\vphantom{(} 1}$, $\ldots$, $Q_{\leftarrow\vphantom{(} \rank}$ and $Q_{\leftarrow\vphantom{(}(\rank+1)}=1$. The reverse procedure is neatly organised into the following determinants
\be
\label{eq:QkQ}
Q_{\leftarrow\vphantom{(} a}=W(Q_1,\ldots, Q_a)\,,\quad a=1,\ldots, \rank+1\,.
\ee
These remarkable relations appeared numerously in the literature  under different disguises, and in particular they are a special case of Theorem 3.2 in \cite{Tsuboi:2009ud}. To better understand the meaning of \eqref{eq:QkQ}, it is instructive to rewrite Baxter equation \eqref{eq:KLWZ} in a factorised form
\be\label{eq:Miura20}
(1-\Lambda_{\rank+1}\CD^{-2})\ldots(1-\Lambda_2\CD^{-2})(1-\Lambda_{1}\CD^{-2})Q_a^{[\frac{\rank+1}{2}]}=0\,
\ee
which is also known as Miura transform \cite{1996CMaPh.178..237F}. The conditions specifying the factorisation are the following ones
\be
\label{eq:varconstant}
\nonumber
&0&=(1-\Lambda_{1}\CD^{-2})Q_1^{[\frac{\rank+1}{2}]}\,,
\\
&0&=(1-\Lambda_{2}\CD^{-2})(1-\Lambda_{1}\CD^{-2})Q_2^{[\frac{\rank+1}{2}]}\,,
\\
&&\nonumber\ldots
\\
&0&=(1-\Lambda_{a}\CD^{-2})\ldots(1-\Lambda_{1}\CD^{-2})Q_a^{[\frac{\rank+1}{2}]}\,,
\nonumber
\\
&&\nonumber\ldots\,.
\ee
The factorisation procedure hence reduces the symmetry algebra $\hat\sl_{\rank +1}$ to $\hat{\mathfrak b}$, where $\mathfrak b$ is the Borel subalgebra of $\sl_{\rank +1}$.

The reader is welcome to verify that  $\Lambda_a$ are precisely the ones given by \eqref{eq:Ladef} and then
\be
(1-\Lambda_{a}\CD^{-2})\frac{Q_{\leftarrow\vphantom{(} a}^{[1-a+\frac{\rank+1}2]}}{Q_{\leftarrow\vphantom{(} (a-1)}^{[-a+\frac{\rank+1}{2}]}}=0\,.
\ee
Alternatively, if $Q_1,\ldots, Q_{\rank-a}$ are solutions of Baxter equation which is an equation of degree $\rank+1$, the functions $\tilde Q_b=\frac{W(Q_1,\ldots, Q_{\rank-a},Q_b)}{W(Q_1,\ldots,Q_{\rank-a})^{-}}$, for $b=\rank-a+1,\ldots,\rank+1$ are solutions of the degree $a+1$ equation
\be
(1-\Lambda_{\rank+1}\CD^{-2})\ldots(1-\Lambda_{\rank+1-a}\CD^{-2})\tilde Q_b^{[a-r+\frac{\rank+1}{2}]}=0\,.
\ee
This can be viewed as a \Backlund flow from $\sl_{\rank+1}$ to $\sl_{a+1}$ Q-systems \cite{Krichever:1996qd}, but, in simplest possible terms, it is just the method of variation of constants \cite{Lagrange}.

\paragraph{Extended Q-system on the Weyl orbit} Relation \eqref{eq:QkQ} suggests an immediate generalisation. For any multi-index $A=a_1\ldots a_k$ of no more than $\rank+1$ distinct entries, one can define a Q-function $Q_A$
\be\label{eq:QA}
Q_{A}=W(Q_{a_1},\ldots, Q_{a_k})\,.
\ee
A collection of $2^{\rank+1}$ such Q-functions (with $Q_{\emptyset}=Q_{\fullset}=1$) shall be called the extended Q-system or the Q-system on the Weyl orbit (these two names will become distinct for other Lie algebras) because $Q_{A}$ with $|A|=a$ constitute the orbit of $Q_{\leftarrow\vphantom{(} a}$ under the action of Weyl transforms in the sense of \eqref{eq:BosWeyl}. They generalise \eqref{eq:Plucker2} to  \cite{Pronko:1999gh,Bazhanov:2001xm,Tsuboi:2009ud}
\be
\label{eq:Plucker5}
W(Q_{Aa},Q_{Ab})=Q_{A}Q_{Aab}\,.
\ee
The Q-functions $Q_A$ with $|A|=a$ can be viewed as components of an $a$-form thus transforming under $a$'th fundamental representation of $\hat\sl_{r+1}$. We see that the Weyl group acting on $Q_A$ gets promoted to the full symmetry group $\hat\sl_{r+1}$ of Baxter equation. 

Although the Q-functions $Q_A$ are definitely not functionally independent, the gained covariance has its own benefits. To illustrate some of them, let us also introduce contra-variant Hodge-dual Q-functions \cite{Krichever:1996qd,Kojima:2008zza,Gromov:2014caa}
\be\label{eq:Hodge}
Q^{A}:=\frac 1{|\bar A|!}\epsilon^{A\bar A}Q_{\bar A}\,.
\ee
Both $Q_a$ and $Q^a$ were recently used simultaneously for computation of scalar products \cite{Gromov:2019wmz}. One of the reasons for which this computation was possible is that $Q^a$ satisfy  the ``conjugate'' Baxter equation \cite{Kuniba:2001ub}
\be\label{eq:KLWZ2}
Q^{b\,[-\frac{\rank+1}2]}\sum_{a=0}^{\rank+1} (-1)^a {\overleftarrow\CD}^{-a}\, T_{a,1} \, {\overleftarrow\CD}^{-a}  =0\,,
\ee
where $f \overleftarrow\CD=f^{-}$.

Furthermore, one can form singlets from Q-functions and their Hodge duals which, by inspection, provide us with a compact bilinear formula for transfer matrices $T_{\lambda}$ with $\lambda=(s^a)$ being a Young diagram of rectangular shape (\ie a Kirillov-Reshetikhin module \cite{Kirillov:1990}):
\be\label{eq:TQQ}
T_{a,s}=\frac 1{a!} \sum\limits_{|A|=a}Q_{A}^{[s+{\frac{\rank+1}{2}}]}(Q^A)^{[-s-{\frac{\rank+1}{2}}]}\,.
\ee

Supersymmetric version of the extended Q-system \cite{Tsuboi:2009ud} was, with further elaboration, instrumental in  solution of the AdS$_5$/CFT$_4$ spectral problem: First, generalisation of \eqref{eq:TQQ} allowed to solve \cite{Gromov:2010km} the T-system on T-hook \cite{Gromov:2009tv} and then to exploit this solution to formulate a finite set of nonlinear integral equations \cite{Gromov:2011cx}. Then the integral equations were simplified further into the AdS/CFT quantum spectral curve \cite{Gromov:2013pga,Gromov:2014caa} -- a $\mathfrak{psl}_{4|4}$ extended Q-system supplemented with a certain Riemann-Hilbert problem fixing the analytic properties of the Q-functions (the latter can be viewed as an analytic Bethe Ansatz). Further analysis of this curve (using the extended Q-system and not the nested Bethe equations!) allowed getting explicit solutions, up to numbers thus providing exact results for spectrum of planar $\mathcal{N}$=4 SYM, see \eg \cite{Gromov:2015wca,Marboe:2017dmb,Marboe:2018ugv} and reviews \cite{Gromov:2017blm, Levkovich-Maslyuk:2019awk}.

\paragraph{Fused flags and opers} The extended Q-system has also a natural geometric interpretation. Recall that $Q_A$ with $|A|=a$ are components of an exterior $a$-form in $\CC^{\rank+1}$ which we shall denote as $Q_{(a)}$. Based on \eqref{eq:QA}, this form is not arbitrary but such that it defines an $a$ dimensional hyperplane $\CC^a\subset \CC^{\rank+1}$ which we shall also denote as $Q_{(a)}$. The embedding $Q_{(a)}\subset \CC^{\rank+1}$ naturally depends on the spectral parameter.

Determinant relation \eqref{eq:QA} also informs us that the hyperplanes $Q_{(a)}$ are embedded into one another in a special way:
\be
\label{eq:fus}
\begin{tikzcd}
  \ldots \drar[Subseteq]&        & Q_{(a)}^{[n+2]} \drar[Subseteq] &       & \ldots
\\
& Q_{(a-1)}^{[n+1]}  \drar[Subseteq]\urar[Subseteq] &             & Q_{(a+1)}^{[n+1]} \drar[Subseteq]\urar[Subseteq] &
\\
  \ldots \drar[Subseteq]\urar[Subseteq]&         & Q_{(a)}^{[n]}  \drar[Subseteq]\urar[Subseteq]      &           & \ldots
\\
&Q_{(a-1)}^{[n-1]} \urar[Subseteq]  &            & Q_{(a+1)}^{[n-1]} \urar[Subseteq] &
\end{tikzcd}
\ee
The embeddings are non-degenerate meaning that $Q_{(a-1)}^{[n-1]}$ and $Q_{(a+1)}^{[n+1]}$ span $Q_{(a)}^{[n]}$. This agrees with the normalisation condition \eqref{eq:Qcond}. Enforcing the normalisation condition \eqref{eq:Qcond} potentially introduces poles into the Q-functions and non-degeneracy is allowed to fail at such poles (which, in concrete examples, is a discrete or even a finite set of points).

A chain of embeddings, for instance 
\be
0\subset Q_{(1)}\subset Q_{(2)}^+\subset Q_{(3)}^{[2]}\subset\ldots\subset Q_{(\rank)}^{[\rank-1]}\subset\CC^{\rank+1}\,,
\ee
defines a maximal flag of $\CC^{\rank+1}$. Given this observation, we shall call a collection $Q_{(a)}(z)$ obeying \eqref{eq:fus} a {\it fused flag}. ``Fusion'' alludes here to the fusion procedure \cite{Kulish:1981gi,Kulish86,Cherednik86,Zabrodin:1996vm} used to construct higher representations and which involves shifting spectral parameter by an integer.

The concept of a fused flag for $\sl_{r+1}$ system was described in \cite{Kazakov:2015efa} though this name was not used there (it is a new name that we propose in this paper).  A very similar geometric construction appeared later and independently from \cite{Kazakov:2015efa} in \cite{Koroteev:2018jht}, from study of this work we concluded that the fused flag is gauge-equivalent to a finite-difference oper \cite{STS1,STS2,Koroteev:2018jht}. We postpone a detailed  discussion of this correspondence until Section~\ref{sec:opers}. Finite-difference opers in turn generalise differential opers which go back (at least) to the work of Drinfeld and Sokolov \cite{Drinfeld1985}, see \cite{Frenkel:2003qx,Frenkel:2004qy,2005math......1398B} and references therein. 

\subsection{State of the art for Q-systems based on other simple Lie algebras}
For arbitrary simple Lie algebras, Bethe equations are known \cite{Ogievetsky:1986hu,Ogievetsky:1987vv}. By taking the most simplistic point of view on Q-functions as $Q=\dressing\,q$, where zeros of polynomial $q$ are Bethe roots, Bethe equations can be written as
\be\label{eq:BAEOW}
\prod_{b} \frac{Q_{(b),1}^{[+A_{ab}]}}{Q_{(b),1}^{[-A_{ab}]}}=-1\quad\text{at all the zeros of}\quad q_{(a),1}\,,
\ee
where the product runs over the nodes of the Dynkin diagram and $A_{ab}$ is the symmetrised Cartan matrix. The functions $Q_{(a),1}$ are analogs of $Q_{\leftarrow\vphantom{(} a}$, they shall be called Q-functions on the Dynkin diagram. The choice of notation $Q_{(a),1}$ is done for future convenience.

The functions $Q_{(a),1}$ are by now \cite{Frenkel:2013uda} well-established as characters of prefundamental representations, moreover corresponding generalisations of TQ-relations \eqref{eq:TQ2} and  quantum eigenvalues $\Lambda_a$ \eqref{eq:Ladef} were lifted  to the Grothendieck ring of $U_{\q}(\hat{\mathfrak{b}})$ representation category, where $\hat{\mathfrak{b}}$ is the Borel sublagebra of $\hat\algg$. 

A sensible complication compared to the $\sl_{\rank+1}$ case is that evaluation maps (from $U_{\q}(\hat\algg)$ to $U_{\q}(\algg)$ or from $\Y(\algg)$ to $\algg$)  are not universally available, and hence finite-dimensional irreps of a quantum algebra are generically different, as vector spaces, from the corresponding irreps of the Lie algebra \cite{Chari:1994pz}. Therefore it is unlikely that expressions for transfer matrices would always straightforwardly generalise Lie algebra characters, as \eqref{eq:Tchar} did with \eqref{eq:chichar}. Explicit expressions in terms of quantum eigenvalues are known for Kirillov-Reshetikhin representations of classical series Lie algebras \cite{Reshetikhin1983,Reshetikhin:1986vd,Kuniba_1995}, and for some transfer matrices for other algebras, including twisted affine ones \cite{Reshetikhin:1987}.

To construct an analog of Baxter equation in the form \eqref{eq:Tal} that uses a mondromy matrix, one could expect \cite{Chervov:2006xk} that we should take the same operator $(1-\CD^{-}\mathcal{M}\CD^{-})$ as is featured in (quantum) Knizhnik-Zamolodchikov equations \cite{Knizhnik:1984nr,Frenkel:1991gx}, and then Baxter Q-functions are equal to or reconstructable from solutions of the equation $\det (1-\CD^{-}\mathcal{M}\CD^{-})Q=0$, with certain prescription of how to compute $\det$. It seems that an explicit realisation of this idea was not done, at least we are not aware of one. 

If one asks about a generalisation of \eqref{eq:Miura20}, it was given for classical Lie algebras in \cite{Kuniba:2001ub,Tsuboi:2002py}. This generalisation also has a complication: the factorised operator acting on $Q$ has a denominator meaning that expansion in powers of the shift operator $\CD$ does not terminate producing formally an infinite-degree equation. A similar situation is present in supersymmetric $\glmn$ systems \cite{Tsuboi:1997iq,Kazakov:2007fy}. On should be careful in identifying solutions of such an equation. One can use for instance the variation of constant logic \eqref{eq:varconstant} while fixing $\Lambda_a$ independently from $Q$, using the fact that the factorised operator is a generating functional for transfer matrices, \cf\eqref{eq:Miura20} vs \eqref{eq:KLWZ}. 

We see that Baxter equation is more subtle and also more technically challenging to use for algebras beyond the $\sl_{\rank+1}$ case. However, our main interest is not this equation but rather its solutions and their combintaions, {\ie}a Q-system. It is possible to study this system using different means, namelly we shall use ideas supplied by ODE/IM correspondence. In the form directly relevant for us, these ideas were first given by Sun \cite{Sun:2012xw} and then in the works of Masoero, Raimondo, and Valeri \cite{Masoero:2015lga,Masoero:2015rcz}. There they introduced the concept of the QQ-system based on the following observation which we explain on the example of simply-laced Lie algebras. As will be reviewed in detail in Section~\ref{sec:mainfeatures}, there exist spectral-parameter-dependent vectors $Q_{(a)}$ in the $a$'th fundamental representation of the Lie algebra $\algg$, and $a$ running through all nodes of the Dynkin diagram such that
\be
\label{eq:QQi}
(Q^+_{(a)}\wedge Q^-_{(a)})_{{\rm L}(\omega_{\rm max})} = \left(\bigotimes\limits_{b, C_{ab}=-1} Q_{(b)}\right)_{{\rm L}(\omega_{\rm max})}\,,
\ee
where $(\ldots)_{{\rm L}(\omega_{\rm max})}$ means restriction to the irreducible representation with highest weight $\omega_{\rm max}=\sum\limits_{b, C_{ab}=-1}\omega_b$, and $\omega_b$ are the fundamental weights. 

Choose $Q_{(a),1}$ -- the highest-weight component of $Q_{(a)}$ and $Q_{(a),2}$ -- the component of $Q_{(a)}$ corresponding to the first descendent~\footnote{The descendent is unique for any fundamental representation and it has the weight $\omega_{a}-\alpha_a$, where $\alpha_a$ is the $a$'th simple root.}. Then, by specialising \eqref{eq:QQi} to the highest-weight component, one gets the QQ-relation
\be\label{eq:Plucker3}
W(Q_{(a),1},Q_{(a),2})=\prod\limits_{b,C_{ab}=-1}Q_{(b),1}\,.
\ee
So, clearly $Q_{(a),2}$ plays the role of $\bar Q_{\leftarrow\vphantom{(} k}$ in \eqref{eq:Plucker2}, and the QQ-system can be defined as the collection of $Q_{(a),i}$, for all $a$ and $i=1,2$, satisfying \eqref{eq:Plucker3}.  One easily gets Bethe equations \eqref{eq:BAEOW} from \eqref{eq:Plucker3} by the argument explained after \eqref{eq:Plucker2}.

Instead of restriction to the highest-weight component, one can consider also any other component of \eqref{eq:QQi} on the Weyl orbit of $\omega_{\rm max}$ thus producing equations 
\be
\label{eq:Plucker4}
W(Q_{(a),\sigma(1)},Q_{(a),\sigma(2)})=\pm\prod_{b,C_{ab}=-1}Q_{(b),\sigma(1)}\,,
\ee
where $\sigma$ is an element of the Weyl group (a precise meaning of how it acts on indices shall be clarified later). These equations are analogs of \eqref{eq:Plucker5} for the $\sl_{\rank+1}$ case, and in the context of ODE/IM they were explicitly mentioned in \cite{Masoero:2018rel} ~\footnote{We provide a slightly stronger statement than that of \cite{Masoero:2018rel} for what concerns normalisations. We shall demonstrate that it is always possible to normalise bases of fundamental irreps that equality \eqref{eq:Plucker4} holds up to a sign simultaneously for all $a$ and $\sigma$, and we provide a way to control the sign as well.}.

A result equivalent to \eqref{eq:Plucker4} was obtained well before its appearance in ODE/IM: an equivalent of the QQ-system \eqref{eq:Plucker3} was considered  in the paper by Mukhin and Varchenko \cite{MV05}, and  all possible Weyl transforms in the sense as we defined them on page~\pageref{bosonicdualitypage} were performed there to arrive to a collection of Q-functions. With an appropriate adjustment of notations, this collection satisfies \eqref{eq:Plucker4}.

Restriction to simply-laced Lie algebras in the above discussion can be waved \cite{Masoero:2015rcz}:  in order to get Bethe equations for a Lie algebra $\algg$, one builds a Q-system using a linear problem based on the affine Lie algebra ${}^L\hat\algg$, where $L$ is the Langlands dual. For simply-laced cases, ${}^L\hat\algg=\hat\algg$. This is no longer the case for non-simply-laced cases, however the main complication is not in the absence of invariance but in the fact that the affine algebra ${}^L\hat\algg$ is twisted. Although \cite{Masoero:2015rcz} explains how to tackle this complication, this requires an extra layer of notations, we hence decided to focus on simply-laced cases only in this paper. The non-simply-laced cases are planned for the sequel \cite{ESV-in-preparation}.

A few months ago, a paper by Frenkel, Koroteev, Sage, and Zeitlin appeared \cite{Frenkel:2020iqq} where the notion of the finite-difference oper for arbitrary simple Lie algebras was introduced. The authors of this paper linked the oper construction to the QQ-system of \cite{Masoero:2015lga,Masoero:2015rcz} and hence (in the simply-laced case) to Bethe equations.
\newline
\newline
All of the above-mentioned works about extension of Q-systems beyond Q-functions on the Dynkin diagram were focused on the analytic part of the story. Construction of Q-functions as explicit operators on the whole Weyl orbit was recently realised, though only for fermionic and vector fundamental representations, for $D$-type Yangians \cite{Frassek:2020nki}.

\subsection{The goal, results, and structure of the paper}\label{sec:14}
Although many aspects of Q-system extension for arbitrary Lie algebras were developed in the works \cite{MV05,Sun:2012xw,Masoero:2015lga,Masoero:2015rcz,Frenkel:2020iqq}, the authors of these works were quite focused on reproducing the Q-functions on the Dynkin diagram $Q_{(a),1}$ and their descendents $Q_{(a),2}$, probably with the intention to get to the conventional nested Bethe equations. While  a subset of Weyl transforms is performed in \cite{Frenkel:2020iqq} and the full Weyl orbit Q-system is present in \cite{MV05,Masoero:2018rel}, these observations were not used towards some further concrete advantage.

Furthermore, in contrast to the $A_r$ case, Q-functions on the Weyl orbit are not the only Q-functions that may appear. Indeed, fundamental irreps of ${}^L\hat\algg$ for $\algg\neq A_r$ contain weight subspaces which are not in the Weyl orbit of the highest-weight vector. And, as already clear from \eqref{eq:QQi}, we need to include these subspaces into the discussion to fully benefit from covariance of the Q-system under action of ${}^L\hat\algg$.

The main goal of our paper is to launch a more systematic study of the full extended Q-system which we define as a collection of all components of the vectors $Q_{(a)}$ that satisfy \eqref{eq:QQi} and certain other relations to be introduced later~\footnote{The set of relations we use form an over-determined set. We prove its consistency. For all cases except $E_8$, we can derive all the relations from \eqref{eq:QQi} assuming that parameters of a model are in general position and conjecture that it can be done for $E_8$ as well. Note that \eqref{eq:QQi} is not a minimal set of relations either.}. The extended Q-system enjoys covariance with respect to ${}^L\hat\algg$ action and we expect that, similarly to the advances of the $\sl_{\rank+1}$ case and its supersymmetrisation, such a covariant description will lead to numerous insights in studies of integrable systems and beyond. In this paper we assemble first few results in this direction.

    First,  the extended Q-system enjoys a variety of relations which we call projection properties. They can be interpreted as \Plucker relations defining a fused flag -- a new structure generalising \eqref{eq:fus} that we shall introduce. The fused flag can be identified with a finite-difference oper in a particular gauge, this identification depends on a choice of a Coxeter element while the fused flag itself does not.
    
    Second, while our intuition is strongly based on the corresponding linear problem under the ODE/IM correspondence, the obtained relations are pertinent to the Lie algebra alone. We show that all the obtained relations between Q-functions can be universally satisfied admitting $\rank$ functions as a functional freedom. 
    
    Third, an explicit parameterisation, similar to \eqref{eq:TQQ}, of T-functions for Kirillov-Reshetikhin modules in terms of Q-functions of the extended Q-system is given. This in turn yields solution of the corresponding Hirota equations and hence of the Y-systems appearing in the context of Thermodynamic Bethe Ansatz studies. For $D_\rank$ series, we also provide an explicit character solution of the Q-system which, by substitution to the ansatz for T-functions produces characters of the corresponding $D_\rank$-representations.

The paper is organised as follows: In Section~\ref{sec:MotivationFromODEIM} we review, with some updates, findings of \cite{Sun:2012xw,Masoero:2015lga} and use them to study  $\sl_4\simeq \so_6$ extended Q-system as the simplest concrete example. In Section~\ref{sec:Extended Q-system} we give a general definition of the extended Q-system, show its universality, introduce the notion of a fused flag, show that the extended Q-system is a fused flag, and, finally, link the fused flag to the notion of opers. In Section~\ref{sec:Applications} we solve Hirota equations and comment on character solution of the Q-system and analytic Bethe Ansatz. In Sections~\ref{sec:DAlgebras} and \ref{sec:ExceptionalAlgebras}, we give explicit realisations of the mentioned general ideas in the cases of $D_n$ and exceptional series, respectively. 

\section{ODE/IM and Q-functions}\label{sec:MotivationFromODEIM}
Throughout the paper $\algg$ denotes a simply-laced simple Lie algebra over $\CC$ of rank $\rank$; $\mathfrak{h},\mathfrak{b},\mathfrak{n}$ are, respectively, its Cartan, maximal solvable, and maximal nilpotent subalgebras such that $[\mathfrak{h},\mathfrak{b}]=\mathfrak{n}$. The corresponding simply-connected Lie groups of $\algg,\mathfrak{b},\mathfrak{n}$ are $\groupG,\groupB,\groupN$. The Lie group associated to $\mathfrak{h}$ is the maximal torus  $\mathsf{T}$. $\alpha\in \Phi$ are roots of the algebra, $\alpha_a$, $a=1,\ldots,\rank$ are simple roots, the set of simple roots shall be denoted $\Delta$. $\WG$ is the Weyl group of the root system. The degree of a Coxeter element is the Coxeter number $h$. We shall use a Chevalley basis, with $E_\alpha$ associated to roots, $h_{a}=\alpha_{a}^\vee$ ~\footnote{Capitalised letters for Cartan generators $H_a$ are reserved for an orthogonal basis to be introduced later.}, and $[h_a,E_{\pm \alpha_b}]=C_{ab} E_{\pm \alpha_b}$. The fundamental weights $\omega_a$ are introduced by $\omega_a(h_b)=\delta_{ab}$. The nilpotent subalgebra $\mathfrak{n}$ shall be considered as spanned by the raising operators $E_{\alpha}$, $\alpha>0$. 

As we are dealing with the simply-laced case, we shall not distinguish between the Cartan matrix $C_{ab}$ and the symmetrised Cartan matrix $A_{ab}=(\alpha_a,\alpha_b)$. Also, as the Langlands dual is isomorphic to the algebra itself, we will write $\hat\algg$ instead of ${}^L\hat\algg$, although  the Q-system is actually a representation of ${}^L\hat\algg$. Likewise, we shall not distinguish between Coxeter and dual Coxeter numbers.

\subsection{Main features of the linear problem}
\label{sec:mainfeatures}
Our main intuition is coming from the results of \cite{Sun:2012xw,Masoero:2015rcz} obtained in the context of the ODE/IM correspondence. In this subsection we mostly summarise certain of their findings. The new bits are the introduction of $S^*$-solutions and a more detailed prescription of normalisation conventions.

\paragraph{Linear problem}
Consider the following linear problem:
\be
\label{eq:dA}
{\cal L}_\algg(x,\spm,\evp)\Psi=\left(\frac{d}{dx}+A_\algg\right)\Psi=0,
\ee
where $A_\algg$ is the $\algg$-valued matrix defined by
\begin{equation}
   A_{\algg}=\sum_{a=1}^{\rank}E_{\alpha_a}+
   (x^{\Cox M}-\spm)\EAff\,,\quad \EAff=\evp\, E_{-\theta}\,,
\end{equation}
with $\theta$ being the longest root and $M>0$. In a Chevalley basis, $E_{-\theta}$ is defined up to a sign. Our convention to fix this sign will be specified later. Equation \eqref{eq:dA} is understood as a parallel transport equation  with $\Psi$ being a vector transforming in a representation of $\algg$

We note that $E_{\alpha_a}$ for $a=0,1,\ldots,\rank$ are generators of the untwisted affine Kac-Moody algebra $\hat\algg$. As we focus on representations which are finite-dimensional, the central charge of $\hat\algg$ is zero and hence $\hat\algg$ is isomorphic to the loop algebra $\algg\otimes\CC[t,t^{-1}]$ and $\EAff=t\, E_{-\theta}$, moreover, all representations are of evaluation type where  $t$ assumes a numerical value denoted here as $\evp$.

The linear problem describes the equations of KdV-type and is an example of a $\null^L\hat{\algg}$-oper \cite{Drinfeld1985,feigin2011,Frenkel:2016gxg}, it also describes a conformal limit of modified affine Toda equations \cite{Lukyanov:2010rn,Ito:2013aea}. In the relevant for us context of ODE/IM correspondence, it was used in \cite{Sun:2012xw}. For $\algg=\sl_2$ and in the fundamental representation, \eqref{eq:dA} is a first-order matrix ODE obtained from the \Schrodinger equation for a particle in the homogeneous potential $x^{hM}$ with $\spm$ playing the role of energy. It is from the study of this \Schrodinger equation \cite{voros-quartic} the ODE/IM correspondence eventually emerged.

\paragraph{Symanzik rotation}
Let $\rho^\vee\in\mathfrak{h}$ be the co-Weyl vector. Its defining feature is $[\rho^{\vee},E_{\alpha_a}]=E_{\alpha_a}$ for $a=1,\ldots,\rank$, and then it follows that  $[\rho^{\vee},E_{\alpha_0}]=(1-h)E_{\alpha_0}$. Using these properties, one can verify that the linear problem enjoys a ``Renorm-Group'' equation \cite{sibuya,Suzuki:1999hu,Sun:2012xw}~\footnote{In \cite{Dorey:2007zx}, this is called Symanzik rescaling.}
\be
\label{eq:Symanzik}
\syman^{-\frac{k}{h\,M}\,\rho^\vee}{\cal L}_\algg(\syman^{\frac{k}{h\,M}}x,\syman^{k}\spm,\evp)\syman^{\frac{k}{h\,M}\,\rho^\vee}=\syman^{-\frac{k}{h M}}{\cal L}_\algg(x,\spm,e^{2\pi\ii\,k}\evp)\,,
\ee
where~\footnote{From this definition of $\q$ it may appear that $\q$ is a root of unity. However, $M$ is not restricted to be an integer as we need to unambiguously define $\Psi$ only in a sector of the  complex plane of $x$ that contains the relevant Stokes sectors. To avoid (inessential) issues with definition of $\Psi$, the reader can also think that $M$ is a large enough integer so that $\q^{n}\neq 1$ for all $n\in\mathbb{Z}$ that are encountered in practice.} $\syman=e^{2\pi\,\ii\,\frac{M}{(M+1)}}$ and $k\in\CC$. Then ``the RG flow'' of solution is given by
\be
\label{eq:SR1}
\Psi^{[2k]}(x,\spm)=\syman^{-\frac{k}{h\,M}\,\rho^\vee}\Psi(\syman^{\frac{k}{h\,M}} x,\syman^{k}\spm)\,,
\ee
where by $\Psi^{[2k]}$ we denote a solution of \eqref{eq:dA} with the rescaled coupling constant: $\evp\to e^{2\pi\ii\,k}\evp$.

For $k\in\mathbb{Z}$, transformation \eqref{eq:Symanzik} is a symmetry of the equation and the Symanzik rotation \eqref{eq:SR1} is a way to generate its new solutions. For consistency with other parts of the paper, we have chosen a convention that $\Psi^{[2]}$ corresponds to the minimal non-trivial Symanzik rotation which is a symmetry of the equation. In the following, ``Symanzik rotation'' will typically refer to $\Psi^{[\pm 2]}$ and ``half of the Symanzik rotation'' -- to $\Psi^{\pm}\equiv \Psi^{[\pm 1]}$.

In the following and without loss of generality we set $\lambda=1$.
\paragraph{WKB analysis}
To analyse large-$x$ behaviour of the solutions of \eqref{eq:dA}, one should be a bit careful as the term $x^{h\,M}$ which is naively dominant at large $x$ is multiplied by a nilpotent operator. To rectify this issue, one performs a gauge transformaiton
\be 
{\cal L}\to \tilde{{\cal L}}=p^{\rho^{\vee}}{\cal L}p^{-\rho^{\vee}}
\ee
with $p=(x^{h\,M}-\spm)^{\frac{1}{\Cox}}$. Then, using the action variable $S=\int^x p(x')dx'$, the gauge-transformed linear operator reads
\be
\tilde{{\cal L}}={p}\left[\frac{d}{dS}+\Lambda+\cdots\right]\,,
\ee
where dots stand for the terms suppressed at large $x$ ~\footnote{We always assume that $M$ is large enough, $M>\frac{1}{h-1}$ would suffice for suppression of the dotted terms.}, and
\be
\label{eq:Ladef2}
\Lambda=\sum_{a=1}^{r}E_{\alpha_{a}}+ E_{\alpha_{0}}\,.
\ee
The remaining WKB analysis is straightforward. Let $\mathsf{U}_\mu$ be an eigenvector of $\Lambda$ with an eigenvalue $\mu$. Then there exists a solution of \eqref{eq:dA} whose large-$x$ behaviour is
\be
\label{eq:WKB0}
\Psi=e^{-{\mu}\int^x p(x')dx'}p^{-{\rho^{\vee}}}{\mathsf{U}_{\mu}}+\cdots=e^{-{\mu}\frac{x^{M+1}}{M+1}}x^{-{M}\rho^{\vee}}{\mathsf{U}_{\mu}}+\cdots\,.
\ee

\paragraph{Stokes phenomena}  We shall say that \eqref{eq:WKB0} is considered in the direction $k$, $k\in\mathbb{R}$, if $x=\syman^{\frac{k}{h\,M}}|x|$ with $|x|\gg 1$. Hence $k$ has meaning of a phase in units of the Symanzik angle. 
If $k=k_0$ is such that $\mu\, e^{\frac{2\pi\,\ii}{h} k_0}$ is real and positive then it is a direction of the fastest descent of \eqref{eq:WKB0}. 

There always exists a solution with asymptotics \eqref{eq:WKB0} valid in a direction of the fastest descent. Moreover, if $\mu\, e^{\frac{2\pi\,\ii}{h} k_0}$ is larger than $\Re(\mu'\, e^{\frac{2\pi\,\ii}{h} k_0})$ for $\mu'$ -- any other eigenvalue  of $\Lambda$ then this solution is defined uniquely up to a normalisation and it shall be called the Stokes solution or $S$-solution with the eigenvalue $\mu$ \cite{wasow}. If $\Psi$ is such a solution then $\Psi^{[2h]}$ is another one, with $k_0\to k_0-h$, and so to avoid ambiguity we take $k_0$ to be the one with the smallest absolute value~\footnote{$\Psi$ and $\Psi^{[2n]}$ do not generically coincide as solutions, for any $n\in\mathbb{Z}$. The requirement that they coincide up to a rescaling for certain $n$ is a quantisation condition on ``energy levels'' $\spm$ in the sense of a quantum mechanical problem. We do not impose it here.}. The Stokes solution is the smallest (the fastest decreasing) solution among all solutions of \eqref{eq:dA} for certain range of directions (Stokes sector) $k\in k_0+[-\epsilon,\epsilon]$, where $\epsilon>0$, often $\epsilon=\frac 12$;  also the leading large-$x$ asymptotic behaviour of the $S$-solution is given by \eqref{eq:WKB0} in the applicability cone $k\in k_0+[-\frac \Cox 2-\epsilon,+\frac \Cox 2+\epsilon]$.

Even if $\Re(\mu'\, e^{\frac{2\pi\,\ii}{\Cox} k_0})\geq \mu\, e^{\frac{2\pi\,\ii}{\Cox} k_0}$ for some $\mu'$, it is possible to define a unique solution with asymptotics \eqref{eq:WKB0} if we demand that the cone of applicability of \eqref{eq:WKB0} is large enough: for each $\mu'$, it should contain exactly one connected domain of directions $k$ where $\Re(\mu'\, e^{\frac{2\pi\,\ii}{\Cox} k})< \Re(\mu\,e^{\frac{2\pi\,\ii}{\Cox} k})$ is realised. We shall call such a solution $S^*$-solution ~\footnote{For most of the discussion focusing only on $S$-solutions suffices. The explicit interesting examples featuring $S^*$-solutions appear in footnote~\ref{laref} and relation \eqref{eq:D5SS}.}. Its identification depends on a choice of the applicability cone. 

If $\Psi$ is an $S$- or $S^*$-solution with an eigenvalue $\mu$ then $\Psi^{[2]}$ is also a solution of the same class, with the rotated counter-clock-wise eigenvalue $e^{\frac{2\pi\ii}{h}}\mu$ and the applicability cone rotated by one unit of the Symanzik angle clock-wise.
\paragraph{$\Psi$-functions} Let $a$ be a node of the Dynkin diagram. As will be derived later, depending on $a$, for the $a$'th
fundamental representation, one of the two options is realised. The first option is that $k_0 = 0$ is the direction of the fastest descent for some $S$-solution. Then denote this solution by $\Psi_{(a)}(x, z)$. Its associated eigenvalue is real positive and it will be denoted as $\mu_{a}$. The
second option is that $k_0 = \pm \frac{1}{2}$ are the fastest descent directions for $S$-solutions. Denote then by $\Psi_{(a)}(x, z)$ such a function that $\Psi_{(a)}^\pm(x, z)$ are the $S$-solutions with the fastest descent along $k_0 = \mp\frac{1}{2}$. Their associated eigenvalues are of the form $\gamma^{\pm \frac{1}{2}}\mu_{a}$, where $\gamma=e^{\frac{2\pi {\rm i}}{h}}$ and $\mu_{a}$ is a real positive number.

The above two options are realised in alternation: if a node of the Dynkin diagram realises one option, all adjacent nodes realise the other. By choosing a sign of $E_{-\theta}$  we can enforce one chosen node $a$ to feature $\Psi_{(a)}(x, z)$ as the $S$-solution with $k_0=0$ being the direction of the fastest descent. It will
be our convention for the sign of $E_{-\theta}$ that it is the node of the vector representation for
classical series $A_n$, $n \geq 1$ and $D_n$, $n > 3$, and it is the node at the end of the longest leg for
$E_6, E_7, E_8$.

\paragraph{Q-functions} Baxter Q-vectors are defined as
\be
\label{eq:Baxvec}
Q_{(a)}(\spm)=z^{-\frac{\rho^{\vee}}{\Cox\,M}}\Psi_{(a)}(0,\spm)\,.
\ee
The definition is designed to have the property ~\footnote{More generally, for any $x_0,z_0$, the definition $Q_{(a)}(z)=\left(\frac{z}{z_0}\right)^{-\frac{\rho^{\vee}}{\Cox\,M}}\Psi_{(a)}(\left(\frac{z}{z_0}\right)^{\frac{1}{\Cox\,M}}x_0,z)$ can be used.
}
\be
\label{eq:Baxvec}
Q_{(a)}^{[n]}(\spm):=Q_{(a)}(\syman^{n/2}\spm)=z^{-\frac{\rho^{\vee}}{\Cox\,M}}\Psi_{(a)}^{[n]}(0,\spm)\,.
\ee
Baxter Q-functions are defined as the components $Q_{(a),i}$ of the expansion $Q_{(a)}=\sum\limits_i Q_{(a),i}{\bf e}_{(a),i}$ \wrt some basis. In general discussion, $i$ runs through the set $\{1,2,\ldots,\dim L(\omega_a)\}$. For explicit cases however, it can be convenient to use $i$ as an index from a more descriptive set in which case we do not write $(a)$ in the subscript, \cf \eqref{eq:QA}.

The basis elements ${\bf e}_{(a),i}$ should diagonalise Cartan generators. Let ${\bf e}_{(a),i}$ be of weight $\gamma_i$. One agrees that $\gamma_1=\omega_a$ is the highest weight of the irrep, and $\gamma_2=\omega_a-\alpha_a$ is the only leading descendent from the highest weight. For $i$ such that $\gamma_i$ is on the Weyl orbit of the highest weight, there is a natural notation to use: ${\bf e}_{(a),\sigma(i)}={\bf e}_{(a),j}$, where $\sigma$ is an element of the Weyl group such that $\sigma\gamma_i=\gamma_j$. 

We shall impose the following requirement that partially restricts normalisation of basis vectors: For each $\gamma_i$ on the Weyl orbit of the highest-weight vector, choose one concrete $\sigma_i\in\WG$ such that $\gamma_i=\sigma_i\gamma_1$. Then we require that 
\be
\label{eq:NProp}
{\bf e}_{(a),i}=s_{\sigma_i}{\bf e}_{(a),1}\,,
\ee
where $s_{\sigma_i}$ is the representative of the Weyl group element $\sigma_i$ defined as follows: for $\sigma_{a}$ being a reflection \wrt the simple root $\alpha_a$, this representative is (see \eg \cite{Fulton-Harris}, appendix D.4)
\be
\label{eq:sst}
s_a=e^{E_{\alpha_a}}e^{-E_{-\alpha_a}}e^{E_{\alpha_a}}\,.
\ee 
For element $\sigma$ of length $\ell$ and its minimal length representation $\sigma=\sigma_{a_1}\ldots\sigma_{a_\ell}$, the representative is $s_{\sigma}=s_{a_1}\ldots s_{a_\ell}$. This choice of the representatives enjoys the property (Proposition 3.1.2 of \cite{Rostami15})
\be
\label{eq:ss}
s_{\sigma}s_{\sigma'}=s_{\sigma\sigma'} \prod_{\beta}(-1)^{h_{\beta}}\,,
\ee
where the product runs over such  positive roots $\beta$ that $\sigma'\beta$  is a negative root and $\sigma\sigma'\beta$ is a positive root. Partial cases of \eqref{eq:ss} are: $s_{a}^2=(-1)^{h_{\alpha_a}}$ for every simple root $\alpha_{a}$, and $s_{\sigma}s_{\sigma'}=s_{\sigma\sigma'}$ if $\ell(\sigma)+\ell(\sigma')=\ell(\sigma\sigma')$.

 It then follows from \eqref{eq:ss} that ${\bf e}_{(a),\sigma(i)}=\pm s_{\sigma}{\bf e}_{(a),i}$ for any $\sigma\in\WG$ as long as the normalisation \eqref{eq:NProp} is chosen. This also gives us a concrete recipe to fix signs in \eqref{eq:Plucker4}.

\paragraph{$\Psi$- and QQ-systems}
One of the main results of \cite{Sun:2012xw,Masoero:2015lga} is the equality
\be\label{eq:Psi1}
\left(\Psi_{(a)}^+\wedge \Psi_{(a)}^-\right)_{L(\omega_{\rm max})} =\left(\bigotimes_{b,C_{ab}=-1}\Psi_{(b)}\right)_{L(\omega_{\rm max})}\,.
\ee
This is called $\Psi$-system~\footnote{See also \cite{Dorey:2006an} where $\Psi$-system was obtained on the level of pseudo-differential equations.}. The proof is the following: First, if needed, perform a half-Symanzik rotation of \eqref{eq:Psi1} to make both sides of the equation solving \eqref{eq:dA} in the irrep $L(\omega_{\rm max})$. Then, by analysing the large-$x$ asymptotics along the line of the fastest descent one  deduces that both \lhs and \rhs of \eqref{eq:Psi1} have the same growth rate which moreover coincides with the growth rate of the $S$-solution of \eqref{eq:dA} in the irrep $L(\omega_{\rm max})$ along this line. Hence both sides of \eqref{eq:Psi1} should be, up to normalisation,  this $S$-solution, and it is easy to check that the coefficient of proportionality is non-zero. The normalisations of $\Psi_{(a)}$ can be fixed to get an equality in \eqref{eq:Psi1} for all $a$. 

By evaluation \eqref{eq:Psi1} at $x=0$, one gets the relation \eqref{eq:QQi} between the Q-vectors and eventually the QQ-system defined by \eqref{eq:Plucker3}.

As discussed in the introduction, our goal is not only to focus on the QQ-system relations \eqref{eq:Plucker3} but to explore a variety of properties of the Q-vectors, especially focusing on their covariance with respect to action of $\algg$. We stress that for algebras different from $A_n$, the Q-vectors have components outside of the Weyl orbit of the highest weight, and hence our study goes beyond the Weyl orbit Q-system \eqref{eq:Plucker4}.

\subsection{Example of $\sl_4\simeq \so_6$ extended Q-systems}\label{sec:sl4so6}
We consider first an explicit example of the $A_3$ extended Q-system to illustrate types of relations that we would like to explore in this paper. Because it is also the $D_3$ system, we shall use both $\sl_4$ and $\so_6$ notations in parallel, with the goal of future generalisation to $D_n$ series. The $\so_6$ Q-system with spinor notations was featured for the first time in the context of the AdS$_4$/CFT$_3$ correspondence \cite{Bombardelli:2017vhk}.

There are three fundamental representations: ${\bf 4}$, ${\bf 6}$, ${\bf \bar 4}$, and we use the following notations for $Q$-vectors $Q_{(1),\alpha}\equiv \qS_\alpha$, $Q_{(2),i}\equiv \qV_{i}$, $Q_{(3),\dot{\alpha}}\equiv \qC_{\dot \alpha}$:
\be\label{eq:so6Convention}
\begin{tabular}{c|c|c|c}
& $\sl_4$ & $\so_6$ \\
\hline
    {\bf 4} &
    $Q_a$   & 
    $\qS_{\alpha}$  & $(\qS_1,\qS_2,\qS_3,\qS_4)=(Q_1,Q_2,Q_3,Q_4)$ 
    \\
    {\bf 6} & 
    $Q_{ab}$ &  
    $\qV_{i}=  \gamma_i^{\alpha\beta}\psi^-_{\alpha}\psi^+_{\beta}$     & \parbox[c][4em]{0.5\linewidth}{$V_1=Q_{12}, V_2=-Q_{13}, V_3=Q_{23}$
    \newline
    $V_{-3}=Q_{14}, V_{-2}=Q_{24}, V_{-1}=Q_{34}$
    }
    \\
    ${\bf \bar 4}$ &
    $Q^a=\frac 16\epsilon^{abcd}Q_{bcd}$ &
    $\qC_{\dot\alpha}$ &
    $(\qC_1,\qC_2,\qC_3,\qC_4)=-(Q_{123},Q_{124},Q_{134},Q_{234})$
\end{tabular}
\ee
Here we have introduced standard spinor notation using dotted and un-dotted indices for the representations $\bf 4$ resp $\bf \bar 4$ so that $\alpha,\dot\alpha=1,2,3,4$, the range of the $\sl_4$ fundamental indices is the same, $a=1,2,3,4$. For the vector indices we have $i = 1,2,3,-3,-2,-1$.

We can think about Dirac spinors as elements of $\otimes_{i=1}^{3} \mathbb{C}^{2}$. Introduce the basis $\ket{+} = \begin{pmatrix}1 \\ 0 \end{pmatrix},\ket{-} = \begin{pmatrix}0 \\ 1 \end{pmatrix}$ for $\mathbb{C}^{2}$, then we have a natural basis $\ket{\pm \pm \pm}$ for $\otimes_{i=1}^{3}\mathbb{C}^2$. We need the $2\times 2$ matrices
\be\label{eq:Basic22matrices}
    \sigma^z = \begin{pmatrix}1 & 0 \\ 
                             0 & -1
               \end{pmatrix}\,,
    \quad
    \sigma^- = \begin{pmatrix}0 & 0 \\ 
                              1 & 0
               \end{pmatrix}\,,
    \quad
    \sigma^+ = \begin{pmatrix}0 & 1 \\ 
                              0 & 0
               \end{pmatrix}\,.
\ee
From these matrices we build the $8\times 8$ $\Gamma$-matrices and the charge conjugation matrix $C$ 
\begin{subequations}
\be
    &\Gamma_{\pm i} = -\underbrace{\sigma^{z}\otimes \dots \otimes \sigma^{z}}_{i-1}\otimes \sigma^{\mp} \otimes  \underbrace{\mathbf{1}\otimes \dots \otimes \mathbf{1}}_{r-i}\,,  \quad i>0\,, \\
    &C = \prod_{i=1}^{r}(\Gamma_{i}+\Gamma_{-i}) = \sigma^{x}\otimes (\ii \sigma^{y}) \otimes \sigma^{x}\,,
\ee
\end{subequations}
which satisfy $\{\Gamma_{i},\Gamma_{j}\} = \delta_{i+j,0}\,$. We take the metric for the vectors to be $g_{ij} = \delta_{i+j,0}$. The basis for $\otimes_{i=1}^{3}\mathbb{C}^2$ and the spinor indices are related as follows
\begin{subequations}\label{eq:example}
\be
    \psi = \psi_{1}  \ket{++-}\,
    +
    \psi_{2}  \ket{+-+}\,
    +
    \psi_{3}  \ket{-++}\,
    +
    \psi_{4}  \ket{---}\,, \\
    \eta = \eta_{1}  \ket{+++} \,
    +
    \eta_{2}  \ket{+--}\,
    +
    \eta_{3}  \ket{-+-}\,
    +
    \eta_{4} \ket{--+}\,.
\ee
\end{subequations}
To emphasize that $\psi_{\alpha}$ and $\eta_{\dot \alpha}$ are Weyl spinors we write
\begin{align*}
    &\psi_{\alpha}(C\Gamma_{I})^{\alpha\beta}\psi_{\beta} = \psi_{\alpha}\gamma_{I}^{\alpha\beta}\psi_{\beta}\,,
    & 
    &\psi_{\alpha}(C\Gamma_{I})^{\alpha\dot\beta}\eta_{\dot\beta} = \psi_{ \alpha}\gamma_{I}^{ \alpha\dot\beta}\eta_{\dot\beta}\,, 
    &
    &\eta_{\dot \alpha}(C\Gamma_{I})^{\dot \alpha\dot \beta}\eta_{\dot \beta} = \eta_{\dot \alpha}\Bar{\gamma}_{I}^{\dot\alpha\dot\beta}\eta_{\dot \beta}\,, 
\end{align*}
with $I$ being a multi-index denoting antisymmetrization of the $\Gamma$-matrices, \eg $\Gamma_{ij} = \frac{1}{2}(\Gamma_{i} \Gamma_{j}-\Gamma_{j} \Gamma_{i})$.

Note that the Pfaffian of antisymmetric tensors is related to the inner product of the vectors
\be
    \frac{1}{4}\epsilon^{abcd}Q_{ab}Q_{cd} =  2(Q_{12}Q_{34}-Q_{13}Q_{24}+Q_{14}Q_{23})= V_i \metric^{ij}V_j\,.
\ee

Having introduced the notation, our next goal is to take tensor products of the fundamental irreps and use Q-vectors to construct functions of the spectral parameter in other representations of $A_3$; The irreps that we shall encounter while performing this fusion procedure are given in Table~\ref{tab:A3irreps}.
We shall employ the linear problem \eqref{eq:dA},  we focus on the example $V^{+}\otimes V^{[1-2n]}$ to illustrate its usage. This tensor product decomposes as ${\bf 6}\otimes {\bf 6}={\bf 20'}\oplus{\bf 15}\oplus {\bf 1}$ and we shall specify what happens when $V^{+}\otimes V^{[1-2n]}$ are projected onto the irreps. Consider then $\Psi_{(2)}^{+}\otimes \Psi_{(2)}^{[1-2n]}$ and study their Stokes behaviour:
\begin{table}[t]
    \centering
    \begin{tabular}{c|c|l}
        Dynkin labels & Dimension & Name ($\so_6$ point of view) \\
        \hline
         $[010]$ & ${\bf 6}$ & vector
        \\
        $[100]$, $[001]$  & ${\bf 4}$, ${\bf{\bar 4}}$ & [co]-spinor 
        \\
        $[101]$ & ${\bf 15}$ & adjoint
        \\
        $[110]$, $[011]$ & ${\bf 20}$, ${\bf\overline{20}}$ & 
        \\
        $[111]$ & ${\bf 64}$ & Weyl vector multiplet
        \\
        $[020]$ & ${\bf 20'}$  & symmetric traceless tensor
        \\
        $[200]$, $[002]$ & ${\bf 10}$, ${\bf\overline{10}}$ & [anti]-self-dual 3-form
    \end{tabular}
    \caption{List of certain $A_3$ irreps \protect \footnotemark}
    \label{tab:A3irreps}
\end{table}
\footnotetext{With exception of $[020],[200],[002]$, these are precisely all the minimal size irreps whose highest-weight subspace is invariant under a parabolic subgroup action.}
\be
\label{eq:tab315}
\begin{tabular}{c|c|c|c|c|c}
     $n$ & $\mu$ & WKB applicability & ${\bf 20'}$ & ${\bf 15}$ & ${\bf 1}$ \\
     \hline
     0 & $2(1+\ii)$ & $-\frac 12$+$[-2,2]_+$ & $S$ & 0 & 0
     \\
     1 & $2$ & $0+[-\frac 32,\frac 32]_+$ & $S^*$ & $S$ & 0 
     \\
     2 & $0$ & $\frac 12+[-1,1]_+$ & $S^*$ & $S^*$ & $1$ 
     \\
     3 & $2\ii$ & $1+[-\frac 12,\frac 12]_+$ & NS & NS & $T_{2,1}^{[-2]}$ 
     \\
     4 & $2(1+\ii)$ & $\frac 32+[-0,0]_+$ & NS & NS & $T_{2,2}^{[-3]}$ 
     \\
     $\geq 5$ &  & none & NS & NS &   $T_{2,n-2}^{[1-n]}$
\end{tabular}
\ee
For $n\leq 4$, the asymptotic behaviour at infinity is given by \eqref{eq:WKB0}, where 
\be
\mu=\gamma^{\frac 12}(\mu_{2}+\gamma^{-n}\mu_{2})\,,
\ee
with $\gamma=e^{\frac{2\pi\ii}{h}}=\ii$, and $\gamma^{\frac 12}\mu_{2}=1+\ii$ being the eigenvalue of $\Lambda_{\bf 6}$ for the $S$-solution $\Psi_{(2)}^+$, \cf Fig~\ref{fig:EgLambda1}.
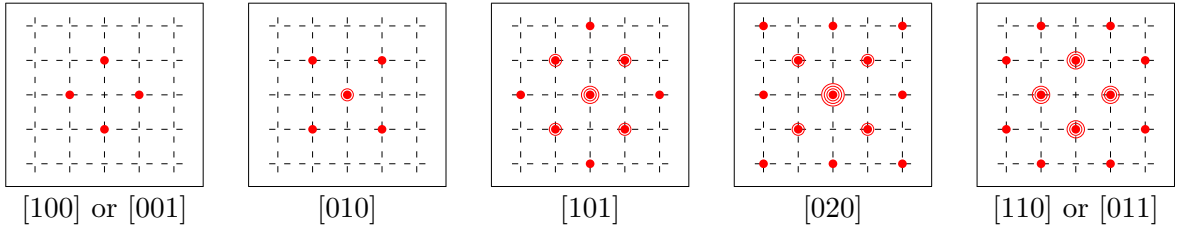
\begin{figure}[t]
\begin{center}
\newcommand\scalefactor{0.65}
\hskip -3em
\mbox{
\begin{tabular}{ccccc}
\scalebox{\scalefactor}{
\fbox{
\begin{picture}(100,100)(-50,-50)
\color{black}
\thinlines
\multiput(0,40)(0,-20){5}{\dashline{4}(-45,0)(45,0)}
\multiput(40,0)(-20,0){5}{\dashline{4}(0,-45)(0,45)}
\color{red}
\put(20,0){\circle*{5}}
\put(-20,0){\circle*{5}}
\put(0,20){\circle*{5}}
\put(0,-20){\circle*{5}}
\end{picture}
}
}
&
\scalebox{\scalefactor}{
\fbox{
\begin{picture}(100,100)(-50,-50)
\color{black}
\thinlines
\multiput(0,40)(0,-20){5}{\dashline{4}(-45,0)(45,0)}
\multiput(40,0)(-20,0){5}{\dashline{4}(0,-45)(0,45)}
\color{red}
\put(20,20){\circle*{5}}
\put(20,-20){\circle*{5}}
\put(-20,20){\circle*{5}}
\put(-20,-20){\circle*{5}}
\put(0,0){\circle*{5}}
\put(0,0){\circle{7}}
\end{picture}
}
}
&
\scalebox{\scalefactor}{
\fbox{
\begin{picture}(100,100)(-50,-50)
\color{black}
\thinlines
\multiput(0,40)(0,-20){5}{\dashline{4}(-45,0)(45,0)}
\multiput(40,0)(-20,0){5}{\dashline{4}(0,-45)(0,45)}
\color{red}
\put(20,20){\circle*{5}}
\put(20,-20){\circle*{5}}
\put(-20,20){\circle*{5}}
\put(-20,-20){\circle*{5}}
\put(20,20){\circle{7}}
\put(20,-20){\circle{7}}
\put(-20,20){\circle{7}}
\put(-20,-20){\circle{7}}
\put(0,0){\circle*{5}}
\put(0,0){\circle{7}}
\put(0,0){\circle{10}}
\put(40,0){\circle*{5}}
\put(-40,0){\circle*{5}}
\put(0,40){\circle*{5}}
\put(0,-40){\circle*{5}}
\end{picture}
}
}
&
\scalebox{\scalefactor}{
\fbox{
\begin{picture}(100,100)(-50,-50)
\color{black}
\thinlines
\multiput(0,40)(0,-20){5}{\dashline{4}(-45,0)(45,0)}
\multiput(40,0)(-20,0){5}{\dashline{4}(0,-45)(0,45)}
\color{red}
\put(20,20){\circle*{5}}
\put(20,-20){\circle*{5}}
\put(-20,20){\circle*{5}}
\put(-20,-20){\circle*{5}}
\put(20,20){\circle{7}}
\put(20,-20){\circle{7}}
\put(-20,20){\circle{7}}
\put(-20,-20){\circle{7}}
\put(40,40){\circle*{5}}
\put(40,-40){\circle*{5}}
\put(-40,40){\circle*{5}}
\put(-40,-40){\circle*{5}}
\put(0,0){\circle*{5}}
\put(0,0){\circle{7}}
\put(0,0){\circle{10}}
\put(0,0){\circle{13}}
\put(40,0){\circle*{5}}
\put(-40,0){\circle*{5}}
\put(0,40){\circle*{5}}
\put(0,-40){\circle*{5}}
\end{picture}
}
}
&
\scalebox{\scalefactor}{
\fbox{
\begin{picture}(100,100)(-50,-50)
\color{black}
\thinlines
\multiput(0,40)(0,-20){5}{\dashline{4}(-45,0)(45,0)}
\multiput(40,0)(-20,0){5}{\dashline{4}(0,-45)(0,45)}
\color{red}
\put(20,40){\circle*{5}}
\put(40,20){\circle*{5}}
\put(20,-40){\circle*{5}}
\put(40,-20){\circle*{5}}
\put(-20,40){\circle*{5}}
\put(-40,20){\circle*{5}}
\put(-20,-40){\circle*{5}}
\put(-40,-20){\circle*{5}}
\put(20,0){\circle*{5}}
\put(-20,0){\circle*{5}}
\put(0,20){\circle*{5}}
\put(0,-20){\circle*{5}}
\put(20,0){\circle{7}}
\put(-20,0){\circle{7}}
\put(0,20){\circle{7}}
\put(0,-20){\circle{7}}
\put(20,0){\circle{10}}
\put(-20,0){\circle{10}}
\put(0,20){\circle{10}}
\put(0,-20){\circle{10}}
\end{picture}
}
}
\\
$[100]$ or $[001]$
&
$[010]$
&
$[101]$
&
$[020]$
&
$[110]$ or $[011]$
\end{tabular}
}
\caption{\label{fig:EgLambda1} Eigenvalues of $\Lambda$ for irreps of $A_3$.}
\end{center}
\end{figure}
This WKB approximation is valid in a certain cone of applicability, these cones are listed for various $n$ in the table where the notation $[a,b]_+$ means ``at least in the range $[a,b]$''. The listed ranges are obtained by intersection of the applicability cones for $\Psi_{(2)}^{[k]}$ which are $-\frac k2+[-2,2]_+$.

Now we compare $\mu$ with the eigenvalues of $\Lambda_{\bf 20'}, \Lambda_{\bf 15}$. The cases when these eigenvalues match with an eigenvalue of $\Lambda_L$ for $L={\bf 20'}$ or ${\bf 15}$ and the applicability cones allow deciding that $(\Psi_{(2)}^{+}\otimes \Psi_{(2)}^{[1-2n]})_{L}$ is an $S$ or $S^*$-solution are marked in the table $S,S^*$. In these cases we can unambiguously, up to normalisation, identify $(\Psi_{(2)}^{+}\otimes \Psi_{(2)}^{[1-2n]})_{L}$ with a concrete solution $\Psi$ of \eqref{eq:dA} and hence one also knows that
\be
\label{eq:VPsi}
(V^{+}\otimes V^{[1-2n]})_{L}\propto \Psi(x=0)\,.
\ee
The coefficient of proportionality is fixed by analysing the prefactors of the large-$x$ asymptotics of $(\Psi_{(2)}^{+}\otimes \Psi_{(2)}^{[1-2n]})_{L}$. 

The identification \eqref{eq:VPsi} becomes valuable if we can realise $\Psi(x=0)$ as a projection from some other tensor product, for instance $n=1$ and $L={\bf 15}$ is precisely $V^{+}\wedge V^{-}$ which is the \lhs of \eqref{eq:QQi}. Equalities between projections of different tensor products shall be called fusion relations.

The cases when one cannot identify $(\Psi_{(2)}^{+}\otimes \Psi_{(2)}^{[1-2n]})_{L}$ with an $S$ or $S^*$-solution are denoted as NS. For instance, the issue with $n=3,4$ is that the applicability cone is inappropriate, notably it does not feature the fastest descent line for the corresponding $\mu$.
\newline
\newline
The case of the trivial representation ${\bf 1}$ is a bit special. Equation \eqref{eq:dA} has then the unique solution which is constant in $x$. Hence projection of $(\Psi_{(2)}^{+}\otimes \Psi_{(2)}^{[1-2n]})_{L}$ on the trivial representation will be always a constant in $x$. However, it can have a non-trivial dependence on $\spm$, and only in the case when the WKB analysis can be applied can we conclude  the value of $(V^{+}\otimes V^{[1-2n]})_{{\bf 1}}$. This is the $n=3$ case in the table which in components can be written as
\be
\label{eq:qe1}
\frac 14\epsilon^{abcd}Q_{ab}^{[2]}Q_{cd}^{[-2]}=1\,,
\hphantom{abcdefgabcdefg}
V_i^{[2]}g^{ij}V_{j}^{[-2]}=1\,.
\ee

It may also happen that $\mu$ {\it is not} an eigenvalue of $\Lambda_L$ in an irrep $L$. Then, if the WKB approximation is valid along the direction of the fastest descent and, for this direction $\mu\, x^{M+1}>\Re(\mu'\, x^{M+1})$ for all the eigenvalues $\mu'$ of $\Lambda_L$ then $\Psi_{(2)}^{+}\otimes \Psi_{(2)}^{[1-2n]}$ is sub-dominant compared to any solution of \eqref{eq:dA} in the representation $L$ which is only possible if $(\Psi_{(2)}^{+}\otimes \Psi_{(2)}^{[1-2n]}))_L=0$ and hence $(V^{+}\otimes V^{[1-2n]})_L=0$. We call this a projection property.

There are three instances of the projective property in the ${\bf 6}\otimes {\bf 6}$ example. One of them, $(V\otimes V)_{\bf 15}=V\wedge V=0$, is obvious while the other two are more interesting:
\begin{subequations}
\label{eq:Prboth}
\begin{align}
\label{eq:Pr1}
\epsilon^{abcd}Q_{ab}Q_{cd}&=0,
&
V_i g^{ij}V_{j}&=0\,,
\\
\label{eq:Pr2}
\epsilon^{abcd}Q_{ab}^{+}Q_{cd}^{-}&=0,
&
V_i^{+}g^{ij}V_{j}^{-}&=0\,.
\end{align}
\end{subequations}
Equations \eqref{eq:Pr1}  are certain relations between \Plucker coordinates of $\sl_{N=4}$ and $\so_{M=6}$ flags. For the $\sl_N$ case, \eqref{eq:Pr1} is the famous \Plucker quadric telling us that the two-form $Q_{(2)}$ identifies a plane embedded into $\mathbb{C}^N$. For the $\so_M$ flag, this is an assertion that all lines embedded into $\mathbb{C}^M$ are null. Equations \eqref{eq:Pr2} are relations that are featured by the fused flag. 

Finally we note that certain combinations of Q-functions are of physical significance even if they cannot be studied using the WKB analysis of \eqref{eq:dA}. For instance, as indicated in \eqref{eq:tab315}, singlets constructed from the vector representation should be interpreted, in appropriate explicit systems, as transfer matrices in the symmetric powers of the vector representation:
\be
\frac 14\epsilon^{abcd}Q_{ab}^{[s+2]}Q_{cd}^{[-s-2]}=
V_i^{[s+2]}g^{ij}V_{j}^{[-s-2]}=T_{2,s}\,.
\ee
The above study of ${\bf 6}\otimes{\bf 6}$ features all the properties we wanted to demonstrate. In the same way other tensor products can be studied. Below we summarise most of the interesting relations featured in $A_3$ which can be grouped into three classes:

\paragraph{Fusion relations} The main example is the QQ-relations \eqref{eq:QQi}: The WKB analysis of ${\bf 4}\otimes {\bf 4}={\bf 6}\oplus{\bf 10}$ implies $\Psi_{(1)}^+\wedge\Psi_{(1)}^-=\Psi_{(2)}$, ${\bf{\bar 4}}\otimes {\bf{\bar 4}}={\bf 6}\oplus{\bf\overline{10}}$ implies $\Psi_{(3)}^+\wedge\Psi_{(3)}^-=\Psi_{(2)}$, and the above-discussed ${\bf 6}\otimes{\bf 6}$ compared with ${\bf 4}\otimes {\bf\bar 4}={\bf 15}\oplus{\bf 1}$ implies $\Psi_{(2)}^+\wedge\Psi_{(2)}^-=\left(\Psi_{(1)}\Psi_{(3)}\right)_{\bf 15}$. The corresponding QQ-relations written in components are
\begin{subequations}
\label{eq:323}
\begin{align}
W(Q_a,Q_b) &=Q_{ab}\,, &  \qV_{i} &=\qS^{-}_{\alpha} \gamma_{i}^{\alpha\beta}\qS^{+}_{\beta}\,,\\
W(Q^a,Q^b) &=\frac 12\epsilon^{abcd}Q_{cd}\,, & \qV_{i} &=\qC^{-}_{\dot\alpha}\bar\gamma_{i}^{\dot\alpha\dot\beta}\qC^{+}_{\dot\beta}\,,
\end{align}
\be
\label{eq:QQadj}
&&W(Q_{ab},Q_{cd})=-\frac{1}{2}(Q_{a}Q_{bcd}-Q_{b}Q_{acd}-Q_{c}Q_{dab}+Q_{d}Q_{cab})\,,
\\
&&W(V_{i},V_{j})=\qS_{\alpha}\gamma_{ij}^{\alpha\dot \beta}\qC_{\dot\beta}\,.
\ee
\end{subequations}
In the $\sl_4$ interpretation, these are mostly the relations \eqref{eq:Plucker5}, with exception of the three equations featuring pairwise non-equal $a,b,c,d$ in \eqref{eq:QQadj}. These three cases correspond to the projection of the adjoint representation ${\bf 15}$ to the zero-weight space which is not on the Weyl orbit of the highest weight.

In the $\so_6$ interpretation, equations \eqref{eq:323} are instances of fused Fierz relations. Other examples of such a relations are
\begin{subequations}
\begin{align}
\gamma_{(-),ijk}^{\alpha\beta}\qS^-_{\alpha} \qS^+_{\beta} &= V_{(-),ijk}
& \text{{\bf 10} of {\bf 4}}\otimes {\bf 4}\quad &= \quad \text{{\bf 10} of {\bf 6}}\otimes {\bf 6}\otimes {\bf 6}\,,
\\
\bar \gamma^{\dot \alpha \dot \beta }_{(+),ijk}\qC^-_{\dot \alpha} \qC^+_{\dot \beta} &=V_{(+),ijk}
& \text{{\bf 10} of }{\bf{\bar 4}}\otimes {\bf{\bar 4}}\quad &= \quad \text{{\bf 10} of {\bf 6}}\otimes {\bf 6}\otimes {\bf 6}\,,
\end{align}
\end{subequations}
where $V_{(\pm)}$ are the self-/antiself-dual projections of the 3-form $V_{(3)}=V^{[2]}\wedge V\wedge V^{[-2]}$. In components we define the Hodge dual as $\star V_{I} = \epsilon_{I}{}^{J}V_{J}$ and pick the Levi-Civita symbol to satisfy $\epsilon_{123-1-2-3}=1$. The two most important examples are $\star V_{123} = V_{123}, \star V_{12-3} = -V_{12-3}$.

\paragraph{Projection relations} In addition to \eqref{eq:Prboth},  we list a couple of other projection relations
\begin{subequations}
\begin{align}
\label{eq:PuSp}
{\rm{For}}\ n&=0,\pm 2: & Q_a^{[n]}Q^a &=0 & \psi_{\alpha}^{[n]}C^{\alpha\dot\alpha}\eta_{\dot\alpha} &=0 & {\bf 4}\otimes{\bf{\bar 4}} &=\ldots+{\bf 1}+\ldots\,,
\\
{\rm{For}}\ n&=\pm 1: & \epsilon^{abcd}Q_{ab}^{[n]}Q_c &=0 & \qS_{\alpha}^{[n]}\gamma^{ \alpha  \beta}_i V^i &=0 & {\bf 6}\otimes{\bf{4}} &=\ldots+{\bf 20}+\ldots\,,
\\
{\rm{For}}\ n&=\pm 1: & Q_{ab}^{[n]}Q^b &=0 & \qC_{\dot \alpha}^{[n]} \bar\gamma^{\dot  \alpha \dot \beta}_i V^i &=0 & {\bf 6}\otimes{\bf{\bar 4}} &=\ldots+{\bf{\overline{20}}}+\ldots\,.
\end{align}
\end{subequations}
Together with \eqref{eq:Prboth}, these are all the relations establishing that $Q_{(1)}, Q_{(2)}, Q_{(3)}$ are \Plucker coordinates of a complete flag. Equation \eqref{eq:PuSp} for $n=0$ is also known as a pure spinor condition. The fact that we can choose various values of $n$ reflects that we are dealing with a fused flag, its general definition will be given in Section~\ref{sec:fusedflag}.

\paragraph{Quantisation relations} These are special instances of the fusion relations when the target projection representation is trivial, an example is \eqref{eq:qe1}. All of them can be shown to be equivalent to \eqref{eq:Qcond} which itself can be derived by performing the WKB analysis of ${\bf 4}\otimes{\bf 4}\otimes{\bf 4}\otimes{\bf 4}=\ldots+{\bf 1}+\ldots$. By applying analytic Bethe Ansatz, see Section~\ref{sec:AnalyticBetheAnsatz}, the quantisation relation becomes~\footnote{For at least rational spin chains in the defining representation of $\sl_4$.} the Wronskian Bethe equations in terminology of \cite{Chernyak:2020lgw}. They correctly describe the spectrum of the model for {\it any} values of the parameters \cite{2013arXiv1303.1578M,Chernyak:2020lgw} and in this sense are superior to  standard nested Bethe equations.
\subsection{Spectrum of $\Lambda$}
\label{sec:33}

As the previous subsection demonstrated, the spectrum of $\Lambda$ in various irreps contains valuable information for our study. We shall find this spectrum explicitly following closely \cite{FLO91,Masoero:2015lga}, the result is given by Lemma~\ref{thm:lemma21}. Then we explore other related properties of $\Lambda$, notably how its presentation and its eigenvectors are linked to a choice of a Coxeter element, these results shall be used in Section~\ref{sec:opers} where we relate fused flags and opers.

First, from Symanzik rotation (\ref{eq:Symanzik}), it is straightforward to see that
\be\label{eq:easy}
\gamma^{{\rm ad}\rho^{\vee}}\Lambda=\gamma\,\Lambda\,,\quad \gamma\equiv e^{\frac{2\pi\ii}{\Cox}}\,.
\ee
Hence, if $\mu$ is an eigenvalue of $\Lambda$ then $\gamma\,\mu$ is an eigenvalue as well. All the eigenvalues are therefore located on concentric circles, each such circle contains a multiple of $\Cox$ of the eigenvalues, see Fig.~\ref{fig:Lambda-spectrum}.
\begin{figure}[t]
\begin{minipage}[c]{0.5\hsize}
\begin{center}
\resizebox{60mm}{!}{\includegraphics{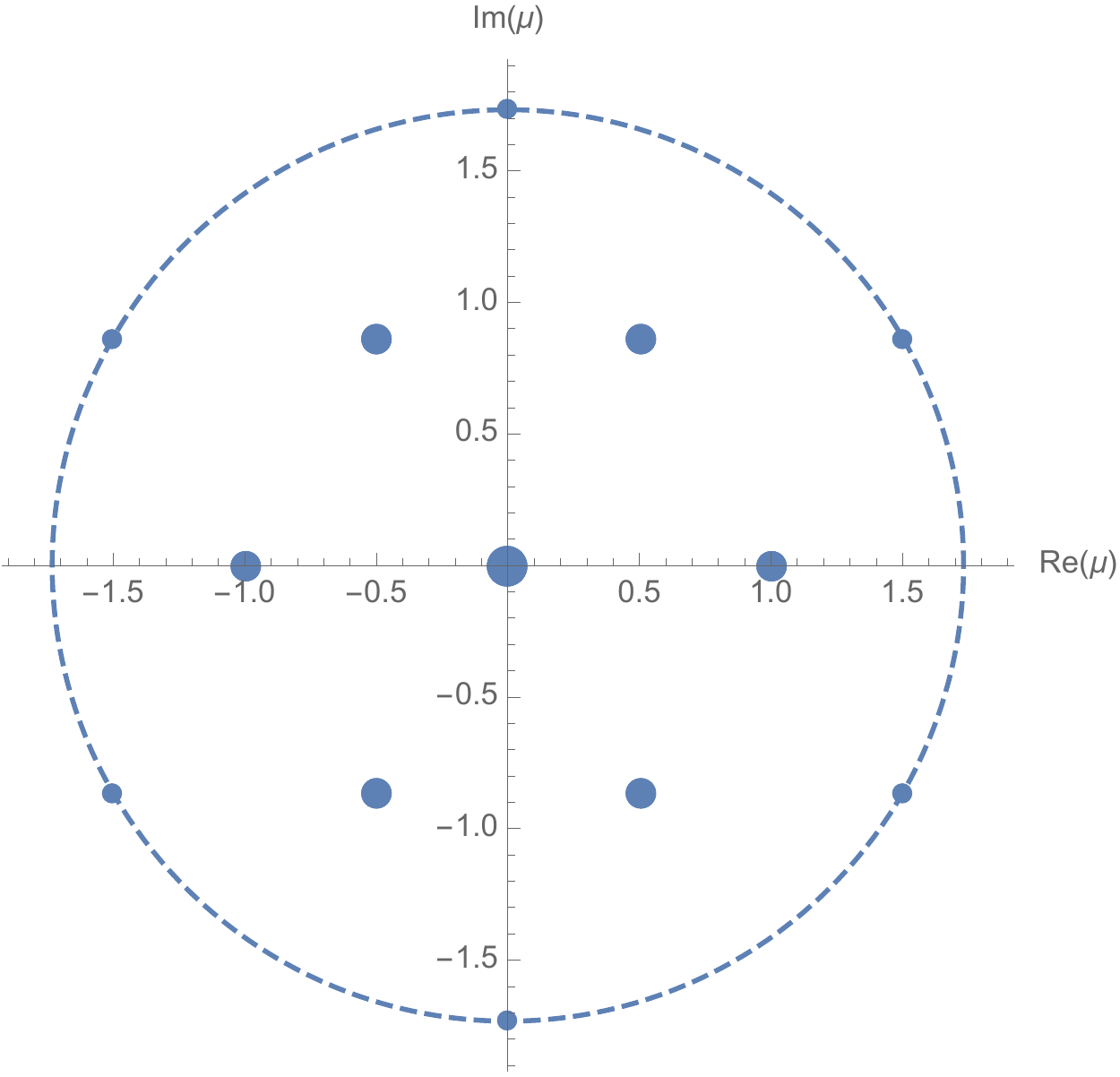}}
\end{center}
\end{minipage}
        \begin{minipage}[c]{0.5\hsize}
        \begin{center}
       \resizebox{60mm}{!}{ \includegraphics{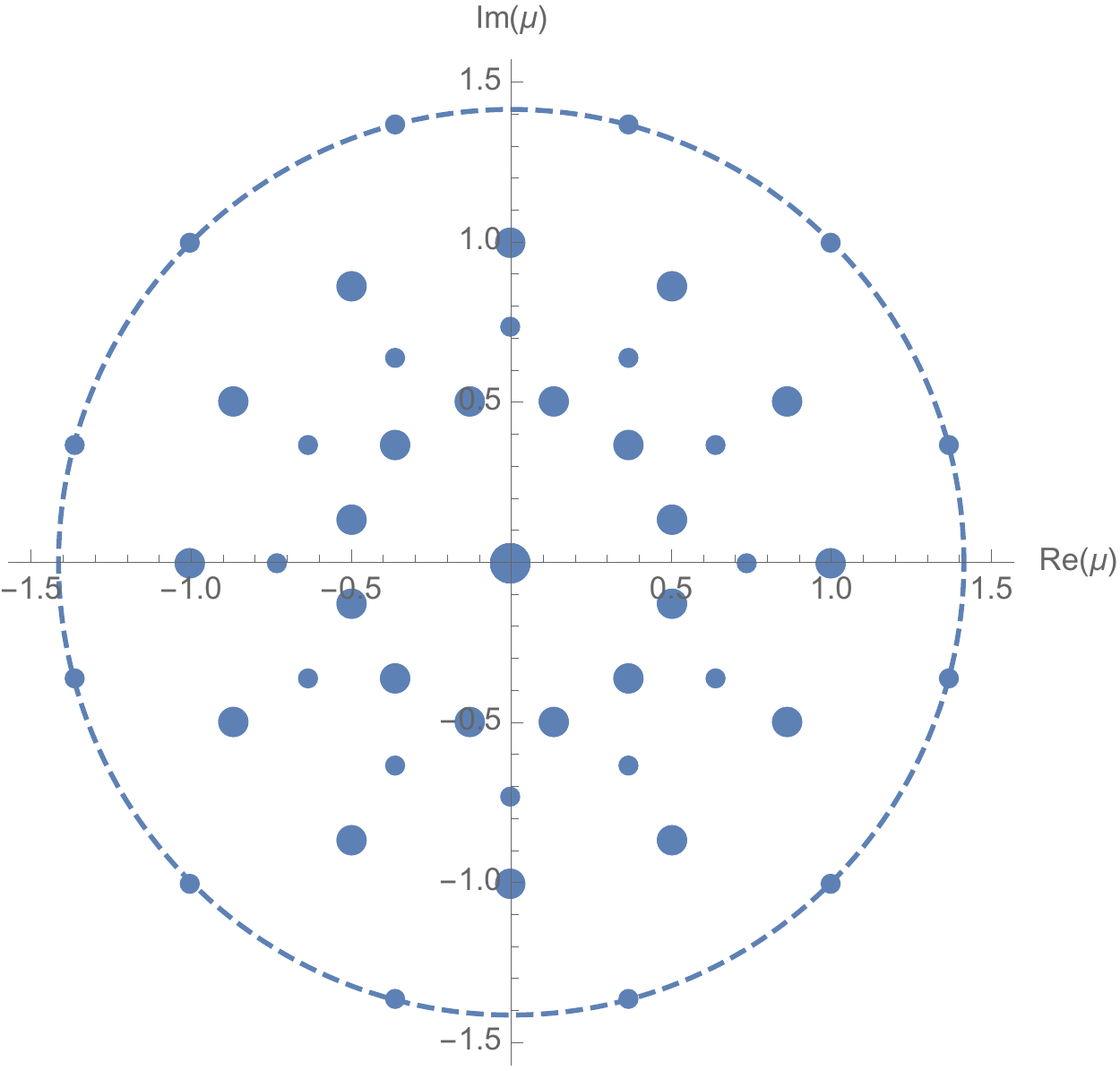}}
        \end{center}
\end{minipage}
\caption{Eigenvalues of $\Lambda$ for the adjoint representation of $D_4$ (left) and $E_6$ (right).}
\label{fig:Lambda-spectrum}
\end{figure}

$\Lambda$ is not ad-nilpotent, hence it can be viewed as an element of a Cartan subalgebra $\mathfrak{h}^\prime$. Choose a simple root system in $(\mathfrak{h}^\prime)^*$, and use the corresponding co-roots $h_a'\in \mathfrak{h}^\prime$, $a=1,\ldots,\rank$ as a basis in $\mathfrak{h}^\prime$ \cite{Kostant}. Our goal is to find the expansion of $\Lambda$ in this basis, $\Lambda=\sum_{a=1}^{\rank} c_a\,h_a'$. Then the spectrum of $\Lambda$ directly follows from the spectrum of $h_a'$.

Relation \eqref{eq:easy} implies that $\gamma^{{\rm ad}\rho^{\vee}}$ is in the normaliser of the maximal torus whose Lie algebra is $\mathfrak{h}^\prime$ and hence it realises action of the Weyl group on $\mathfrak{h}^\prime$. Moreover, it is an element of degree $h$ and hence it should be a Coxeter element. Recall how Coxeter elements are defined: Weyl reflection \wrt the $a$'th simple root~\footnote{In relation to $\mathfrak{h}'$, we will however omit primes for simplicity.} shall be denoted by $s_a$. Its action on $\mathfrak{h}^\prime$ is given by
\be\label{eq:327}
s_{a}(h_{b}^{\prime})=h_{b}^{\prime}-C_{ba}h_{a}^{\prime}\,.
\ee
A Coxeter element is defined as a product of all simple Weyl reflections $\gamma^{{\rm ad}\rho^{\vee}}=\prod\limits_{a=1}^{\rank}s_{a}$. A choice of an order in this product defines the Coxeter element which one considers. All Coxeter elements form one adjoint orbit of the Weyl group.

For a given Coxeter element, there exists a unique eigenvector with eigenvalue $\gamma=e^{\frac{2\pi\ii}{\Cox}}$, and so $\Lambda$ should be, up to normalisation, this eigenvector. $\Re(\Lambda)$ and $\Im(\Lambda)$ span the Coxeter plane -- the unique plane where the Coxeter element acts as the rotation by the angle $\frac{2\pi}{\Cox}$. Hence the spectrum of $\Lambda$ is the projection of the irrep weight space to the Coxeter plane.
\newline
\newline
First, we find $\Lambda$ explicitly for a particularly simple choice of order in $\prod\limits_{a=1}^{\rank}s_{a}$. Introduce a bipartition of the Dynkin diagram into ``even'' and ``odd'' nodes, such that no lines link even with even (or odd with odd). Our convention is that the fundamental representation of the smallest dimension is associated to an even node. Now consider a Coxeter element
$s_{{\rm odd}}s_{{\rm even}}$. Here $s_{{\rm odd}}$ is the product of the reflections of the odd simple roots, and $s_{{\rm even}}$ is the one of the even simple roots. Using \eqref{eq:327}, one finds \cite{BLM89}
\be\label{eq:328}
(s_{{\rm odd}}+s_{{\rm even}})h_{a}^{\prime}=\sum_{a=1}^{r}I_{ab}h_{b}^{\prime}\,,
\ee
where $I_{ab}$ is the incidence matrix of the Dynkin diagram.

Let $q$ be an eigenvector of the Coxeter element with some eigenvalue $\bar\gamma$, $s_{{\rm odd}}s_{{\rm even}}\,q=\bar\gamma\, q$. Parameterise it in the form $q=\bar\gamma^{1/2}q_{\rm odd}+q_{\rm even}$, where $q_{{\rm even}}=\sum\limits_{a\in{\rm even}}\bar\mu_{a}h_{a}^{\prime}$ and $q_{{\rm odd}}=\sum\limits_{a\in{\rm odd}}\bar\mu_{a}h_{a}^{\prime}$. Based on \eqref{eq:327}, one has
\be
\ba
s_{{\rm even}}(q_{{\rm even}})&=-q_{{\rm even}}\,, &
s_{{\rm odd}}(q_{{\rm odd}})&=-q_{{\rm odd}}\,,
\ea
\ee
while using \eqref{eq:328} we can write
\be
\ba
s_{{\rm even}}(q_{{\rm odd}})&=(1+\hat I)\,q_{{\rm odd}} \,, &
s_{{\rm odd}}(q_{{\rm even}})&=(1+\hat I)\,q_{{\rm even}}\,,
\ea
\ee
where $\hat I$ is an operator with matrix entries $I_{ab}$ in the basis $h_a^\prime$.

Using the mentioned properties, one derives from $s_{{\rm odd}}s_{{\rm even}}\,q=\bar\gamma\, q$ that
\be
\hat I(q_{\rm odd}-\bar\gamma^{1/2}q_{\rm even})=(\bar\gamma^{1/2}+\bar\gamma^{-1/2})(q_{\rm even}-\bar\gamma^{1/2}q_{\rm odd})\,.
\ee
Since the incidence matrix is of the graph with only links between nodes of a different type, the above relation can be projected to $\hat I\,{q_{\rm odd}}=(\bar\gamma^{1/2}+\bar\gamma^{-1/2})\,q_{\rm even}$, $\hat I\,{q_{\rm even}}=(\bar\gamma^{1/2}+\bar\gamma^{-1/2})\,q_{\rm odd}$ implying that $q_{\rm odd}\pm q_{\rm even}$ are eigenvectors of $\hat I$ with eigenvalues  $\pm (\bar\gamma^{1/2}+\bar\gamma^{-1/2})$.

As the logic can be reversed, we conclude that all the eigenvalues of $\hat I$ are of the form $\pm  (\bar\gamma^{1/2}+\bar\gamma^{-1/2})$, where $\bar\gamma$ is an eigenvalue of the Coxeter element.  Now we notice that $I_{ab}$ is a matrix of Perron-Frobenius type. Given the established bijection with the eigenvalues of the Coxeter element, the maximal eigenvalue of $I_{ab}$ is identified to be $\gamma^{1/2}+\gamma^{-1/2}$ for $\gamma=e^{\frac{2\pi i}{\Cox}}$. The corresponding eigenvector allows us then to construct $\Lambda$:

\begin{lemma}
\label{thm:lemma21}
Let $(\mu_1, \mu_2,\cdots, \mu_{r})$ be the Perron-Frobenius eigenvector of $I_{ab}$,
\be
\label{eq:PF}
\sum_{b=1}^{\rank}I_{ab}\,\mu_b = (\gamma^{1/2}+\gamma^{-1/2})\,\mu_a\,.
\ee
Then, for a choice of Cartan subalgebra and simple roots such that $\gamma^{{\rm ad}\rho^{\vee}}$ is the Coxeter element $s_{\rm odd}s_{\rm even}$,
\be
\label{eq:tLambda}
\Lambda=\sum_{a=1}^{r}\gamma^{\tilde{p}_a/2}\mu_{a}h_{a}^{\prime}\,,
\ee
where $\tilde p_a=0$ for even Dynkin nodes $a$ and $\tilde p_a=1$ for odd Dynkin nodes $a$.
\qed
\end{lemma}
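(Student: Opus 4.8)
The plan is to leverage two facts already set up above: first, that $\Lambda$ is, up to normalisation, the \emph{unique} eigenvector of the Coxeter element $\gamma^{{\rm ad}\rho^\vee}=s_{\rm odd}s_{\rm even}$ with eigenvalue $\gamma$, which follows from \eqref{eq:easy} together with the fact that a Coxeter element has a one-dimensional $\gamma$-eigenspace; and second, the explicit dictionary between Coxeter eigenvectors and eigenvectors of the incidence matrix $\hat I$ derived just before the statement. Given these, it suffices to write the $\gamma$-eigenvector down in the co-root basis $\{h_a^\prime\}$ and read off its coefficients, so the proof is essentially an assembly of the preceding derivation specialised to the distinguished eigenvalue.

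First I would specialise the general Coxeter eigenvector $q=\bar\gamma^{1/2}q_{\rm odd}+q_{\rm even}$ to $\bar\gamma=\gamma$. From the relations $\hat I\,q_{\rm odd}=(\gamma^{1/2}+\gamma^{-1/2})\,q_{\rm even}$ and $\hat I\,q_{\rm even}=(\gamma^{1/2}+\gamma^{-1/2})\,q_{\rm odd}$ one sees immediately that the combination $v:=q_{\rm odd}+q_{\rm even}=\sum_a \bar\mu_a h_a^\prime$ is an eigenvector of $\hat I$ with eigenvalue $\gamma^{1/2}+\gamma^{-1/2}=2\cos(\pi/\Cox)$, whereas $q_{\rm odd}-q_{\rm even}$ carries the opposite eigenvalue.

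Next I would argue that $v$ is precisely the Perron--Frobenius eigenvector of $\hat I$, so that $\bar\mu_a=\mu_a$ up to one common positive scale. Since $\hat I$ is non-negative and irreducible (the Dynkin diagram being connected), Perron--Frobenius guarantees a simple largest eigenvalue with a strictly positive eigenvector; and that largest eigenvalue has already been identified as $\gamma^{1/2}+\gamma^{-1/2}$, realised exactly at the distinguished Coxeter eigenvalue $\bar\gamma=\gamma$. Hence $v$ is the Perron--Frobenius eigenvector, the $(\mu_1,\dots,\mu_\rank)$ solving \eqref{eq:PF} are its strictly positive components, and substituting $\bar\mu_a=\mu_a$ back into $q=\gamma^{1/2}q_{\rm odd}+q_{\rm even}$ gives $q=\sum_{a\in{\rm odd}}\gamma^{1/2}\mu_a h_a^\prime+\sum_{a\in{\rm even}}\mu_a h_a^\prime=\sum_a \gamma^{\tilde p_a/2}\mu_a h_a^\prime$, which is the right-hand side of \eqref{eq:tLambda}.

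I expect the only delicate point to be the sign/maximality bookkeeping in the third step: one must confirm that at $\bar\gamma=\gamma$ it is the \emph{plus} combination $q_{\rm odd}+q_{\rm even}$ that realises the largest (hence Perron--Frobenius, hence strictly positive) eigenvalue of $\hat I$, rather than the minus combination or a combination coming from some other Coxeter eigenvalue. This is exactly what pins the distinguished eigenvalue $\gamma=e^{2\pi\ii/\Cox}$ to the top of the spectrum of $\hat I$ and forces the $\gamma^{1/2}$ weighting to sit on the odd nodes, producing the $\tilde p_a$ pattern. The residual normalisation --- fixing the overall scale of the Perron--Frobenius vector so that $\Lambda=\sum_a E_{\alpha_a}+E_{\alpha_0}$ equals, and is not merely proportional to, the right-hand side of \eqref{eq:tLambda} --- is routine and can be absorbed into the still-unfixed scale of $\mu$.
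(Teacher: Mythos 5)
Your proposal is correct and follows essentially the same route as the paper, whose own justification of the lemma is precisely the derivation preceding it: uniqueness of the $\gamma$-eigenvector of the Coxeter element via \eqref{eq:easy}, the bipartite odd/even decomposition relating Coxeter eigenvectors to eigenvectors of the incidence matrix, and the Perron--Frobenius identification of $2\cos(\pi/\Cox)=\gamma^{1/2}+\gamma^{-1/2}$ as the top of the spectrum of $\hat I$. The delicate points you flag (that the plus combination $q_{\rm odd}+q_{\rm even}$ carries the positive, hence maximal, eigenvalue, and that the overall scale is fixed only up to normalisation) are handled the same way in the paper.
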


Now we would like to understand how the expansion $\Lambda=\sum\limits_{a=1}^{\rank} c_a\,h_a'$ looks like for a different choice of a Coxeter element.

First, let us design a way to label different Coxeter elements.  We define Coxeter height function \label{Coxhf} as a  function $p:\{1,\ldots,\rank\}\to \mathbb{Z}$ satisfying the property $p_a-p_b=\pm 1$ if $I_{ab}\neq 0$. Coxeter height functions that differ only by a translation, $p_a\to p_a+n$, $n\in \mathbb{Z}$, shall be considered as equivalent. In view of the equivalence we will always assume that $p_a$ is even if $a$ is an even node.

\begin{lemma}
Coxeter height functions (up to the equivalence) are in bijection with distinct Coxeter elements by the following rule: For $a,a'$ being two adjacent nodes of the Dynkin graph, $s_a$ is to the left of $s_{a'}$ in the product $\prod\limits_{a=1}^{\rank} s_a$ defining a Coxeter element if and only if $p_a>p_{a'}$. 
\end{lemma}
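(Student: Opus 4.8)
The plan is to route the bijection through \emph{orientations of the Dynkin diagram}, which form a natural intermediate object between height functions and Coxeter elements. The key observation is that the stated rule ``$s_a$ is to the left of $s_{a'}$ iff $p_a>p_{a'}$'' is, for each edge $\{a,a'\}$ of the diagram (i.e.\ each pair with $I_{aa'}\neq 0$), nothing more than a choice of orientation of that edge: it can be recorded either by the relative position of $s_a,s_{a'}$ in the word or by the sign of $p_a-p_{a'}$. I would therefore prove two separate bijections, (B) height functions modulo translation $\leftrightarrow$ orientations, and (A) distinct Coxeter elements $\leftrightarrow$ orientations, and then note that they are glued along the common orientation precisely by the stated rule.

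For (B), given a height function $p$, orient each edge $\{a,a'\}$ from the node of larger height to the node of smaller height; this is well defined because $|p_a-p_{a'}|=1$, and it is manifestly invariant under the translation $p_a\to p_a+n$. Conversely, since every finite simply-laced Dynkin diagram is a tree it carries no cycles, so any orientation can be integrated: fix a base node, set its height to $0$, and propagate along the unique path to each node, adding $\pm1$ at each edge according to the orientation. Treeness guarantees consistency and yields a height function inducing the given orientation. Two height functions inducing the same orientation have equal increments along every edge, hence differ by a global constant on the connected diagram, i.e.\ are equivalent; finally the $\pm1$ condition forces $p_a$ to alternate parity across the unique $2$-colouring, which is exactly the even/odd bipartition, so after translating by $0$ or $1$ one meets the normalisation ``$p_a$ even on even nodes''.

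For (A), I would use the standard dictionary between commutation classes of reduced words and acyclic orientations. In the simply-laced case $s_as_{a'}=s_{a'}s_a$ if and only if $I_{aa'}=0$, so only non-adjacent reflections commute. A Coxeter element has length exactly $\rank$, so each of its words $s_{a_1}\cdots s_{a_{\rank}}$ (every generator once) is reduced; by the Matsumoto--Tits theorem any two reduced words are connected by braid moves, and since no generator repeats only the length-two braid moves---commutations of non-adjacent reflections---can occur. Hence the words representing a fixed Coxeter element form a single commutation class, and the relative order of each adjacent pair $s_a,s_{a'}$ is an invariant of the element, giving its orientation. Conversely every orientation is realised: on a tree it is automatically acyclic, its transitive closure is a poset, and any linear extension of that poset is a word inducing it; any two linear extensions differ by transpositions of incomparable elements, which---being non-edges---are exactly the allowed commutations, so they represent the same Coxeter element. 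This makes orientations $\leftrightarrow$ Coxeter elements a bijection.

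The step I expect to be the main obstacle is the injectivity half of (A): that genuinely distinct orientations yield distinct Coxeter elements, equivalently that one cannot pass to a word inducing a different edge-orientation while staying within the reduced words for a fixed element. This is where I rely on $\ell(c)=\rank$ (so all words for $c$ are reduced) together with Matsumoto--Tits to exclude braid moves longer than commutations. I would either cite $\ell(c)=\rank$ as a standard property of Coxeter elements, or prove it quickly by noting that a non-reduced length-$\rank$ word with distinct letters would, by the deletion condition, express $c$ inside a proper parabolic subgroup, contradicting the fact that a Coxeter element has no eigenvalue $1$ and hence fixes no nonzero vector. If one prefers to avoid Matsumoto--Tits entirely, an induction on $\rank$ peeling off a leaf node $a$---which commutes with all generators except its unique neighbour, so its position is pinned relative to that neighbour---gives the same conclusion and is the fallback I would write out in full.
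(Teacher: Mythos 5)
Your proof is correct, and it differs from the paper's in the one place where the lemma has real content. Both arguments share the easy half: words for a Coxeter element are taken up to commutations of non-adjacent reflections, and the relative order of each adjacent pair $s_a,s_{a'}$ is the same data as the sign of $p_a-p_{a'}$, i.e.\ an orientation of the Dynkin tree; your part (B) and the surjectivity half of (A) spell out what the paper compresses into ``by induction in $\rank$, one shows that the equivalence classes of Coxeter height functions are in bijection with the equivalence classes of words'' (your leaf-peeling fallback is essentially that induction). The divergence is in injectivity, i.e.\ that words inducing different orientations cannot represent the same Coxeter element. You obtain it from word combinatorics: $\ell(c)=\rank$, so every word for $c$ is reduced with each generator occurring once, Matsumoto--Tits then permits only commutation moves, and these preserve the orientation. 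The paper instead forward-references its Lemma~2.3: the element $s[p]$ has $\Lambda[p]=\sum_a\gamma^{p_a/2}\mu_a h'_a$ as its (unique up to scale) eigenvector of eigenvalue $\gamma$, and these eigenvectors manifestly differ for inequivalent height functions, so the elements differ. Your route is self-contained in the combinatorics of reduced expressions and needs no spectral input; the paper's route avoids Matsumoto--Tits but leans on the explicit eigenvector computation it needs anyway for the oper construction in Section~4. Both are sound, and your supporting facts ($\ell(c)=\rank$ via the deletion condition plus the fact that a Coxeter element fixes no nonzero vector, and the connectedness of linear extensions by incomparable transpositions) are standard and correctly deployed.
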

\begin{proof}
Think about the product $\prod\limits_{a=1}^{\rank} s_a$ with a certain order of elements as a word $s_{a_1}\ldots s_{a_\rank}$. We are allowed to exchange two neighbouring letters in the word $\ldots s_a s_{a'} \ldots \simeq \ldots s_{a'} s_{a} \ldots$ if $a,a'$ are not adjacent nodes of the Dynkin graph. Indeed, Weyl reflections $s_a$ and $s_{a'}$ commute in such a case. Two words shall be called equivalent if they can be obtained from one another by a sequence of these exchanges. All Coxeter elements belonging to the same equivalence class coincide. By induction in $\rank$, one shows that the equivalence classes of Coxeter height functions are in bijection with the equivalence classes of words.

Different equivalence classes should define different Coxeter elements because the corresponding Coxeter elements have different eigenvectors with the eigenvalue $\gamma$, as is demonstrated by Lemma~\ref{thm:23} below. 
\end{proof}

\begin{lemma}
\label{thm:23}
If $s[p]$ is the Coxeter element with the height function $p$ then
\be
\label{eq:anyp}
s[p]\Lambda[p]=\gamma\, \Lambda[p]\,,\quad\text{where}\quad \Lambda[p]=\sum_{a=1}^{r}\gamma^{\frac{p_{a}}{2}}\mu_{a}h_{a}^{\prime}\,.
\ee
\end{lemma}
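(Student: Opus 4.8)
The plan is to reduce the general statement to the bipartite case already settled in Lemma~\ref{thm:lemma21} by moving between Coxeter elements one simple reflection at a time. The base case is the bipartite height function $p^{0}$ (even nodes $0$, odd nodes $1$) of Lemma~\ref{thm:lemma21}: there $s[p^{0}]=s_{\rm odd}s_{\rm even}$ and $\Lambda[p^{0}]=\Lambda$, so \eqref{eq:easy} reads precisely $s[p^{0}]\Lambda[p^{0}]=\gamma\,\Lambda[p^{0}]$. I then propagate this along \emph{source-to-sink flips}. Suppose $a$ is a source for $p$, i.e. $p_a=p_b+1$ for every neighbour $b$; by the bijection lemma $s_a$ sits to the left of every adjacent reflection, and since non-adjacent reflections commute with it, $s[p]=s_a w$ with $w$ the ordered product of the remaining reflections. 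Let $p'$ agree with $p$ except $p'_a=p_a-2$; then $a$ is a sink for $p'$, the order among the other nodes is unchanged, so $s[p']=w\,s_a=s_a\,s[p]\,s_a$.

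The heart of the argument is the identity $\Lambda[p']=s_a\Lambda[p]$. Applying $s_a$ termwise to $\Lambda[p]=\sum_b\gamma^{p_b/2}\mu_b h_b'$ via \eqref{eq:327} affects only $h_a'$ and the neighbouring $h_b'$, giving
\[
s_a\Lambda[p]=\Lambda[p]-\Big(\textstyle\sum_b\gamma^{p_b/2}\mu_b\,C_{ba}\Big)h_a'.
\]
Because $a$ is a source, each neighbour contributes $\gamma^{p_b/2}=\gamma^{p_a/2}\gamma^{-1/2}$, and the Perron--Frobenius relation \eqref{eq:PF}, $\sum_b I_{ab}\mu_b=(\gamma^{1/2}+\gamma^{-1/2})\mu_a$, collapses the bracket to $\gamma^{p_a/2}\mu_a(1-\gamma^{-1})$. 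The resulting correction $-\gamma^{p_a/2}\mu_a(1-\gamma^{-1})h_a'$ equals $(\gamma^{p'_a/2}-\gamma^{p_a/2})\mu_a h_a'$, which is exactly $\Lambda[p']-\Lambda[p]$; hence $s_a\Lambda[p]=\Lambda[p']$. Combining this with $s[p']=s_a s[p]s_a$, $s_a^{2}=1$, and the inductive hypothesis,
\[
s[p']\Lambda[p']=s_a s[p]s_a\,s_a\Lambda[p]=s_a\,s[p]\Lambda[p]=\gamma\,s_a\Lambda[p]=\gamma\,\Lambda[p'],
\]
which is \eqref{eq:anyp} for $p'$. Running the same computation backwards gives the sink-to-source direction, so the relation is transported along every flip.

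It remains to join an arbitrary $p$ to $p^{0}$ by such flips. Normalising a representative so that $\min_a p_a=0$, suppose some height is $\ge 2$ and pick a node $a$ of maximal height $H$; all neighbours then sit at $H-1$, so $a$ is a source, and the flip lowers $p_a$ to $H-2\ge 0$ while keeping every difference $\pm1$. Each flip decreases $\sum_a p_a$ by two and the heights are bounded below, so the procedure terminates with all heights in $\{0,1\}$; adjacency then forces a two-valued bipartite assignment, i.e. $p^{0}$ up to the allowed translation. Thus every $p$ is connected to the base case and the induction closes. I expect the eigenvector computation itself to be routine; the points needing care are verifying that each flip stays within valid height functions and terminates (handled by the $\sum_a p_a$ descent above), and ensuring the source condition is invoked in exactly the spot where \eqref{eq:PF} must be applied, since that is what turns the raw $s_a$-action into the clean shift $\gamma^{p_a/2}\mapsto\gamma^{(p_a-2)/2}$.
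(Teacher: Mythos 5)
Your proof is correct and follows essentially the same route as the paper's: both arguments conjugate $s[p]$ by the reflection at a locally maximal (source) node to lower its height by $2$, verify $s_{a}\Lambda[p]=\Lambda[p']$ termwise via \eqref{eq:327} together with the Perron--Frobenius relation \eqref{eq:PF}, and iterate until the bipartite Coxeter element of Lemma~\ref{thm:lemma21} is reached, then reverse. The only differences are presentational (you induct upward from the bipartite base case and make the descent/termination bookkeeping explicit, while the paper descends from a general $p$ and then reverses the reflections).
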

\begin{proof}
Without loss of generality assume that the minimal value of $p$ is either $0$ or $1$, depending on whether it is realised at, respectively, an even or an odd node. Suppose now that $\aast$ is such a node of the Dynkin diagram that $p_{\aast}$ is the maximal value of $p$. Then $p_a=p_{\aast}-1$ for all nodes $a$ adjacent to $\aast$.  On the one hand, we note that $s_{\aast}$ is to the left of all $s_a$, $I_{a\aast}\neq 0$, in $s[p]$. Then, by using commutativity with the other elementary reflections, we can write $s[p]=s_{\aast}s'$ for some $s'$ and hence
\be
s_{\aast}s[p_1,\ldots, p_{\aast},\ldots,p_\rank]s_{\aast}=s's_{\aast}=s[p_1,\ldots, p_{\aast}-2,\ldots,p_\rank]\,.
\ee
On the other hand, by acting with $s_{a_\ast}$ on $\Lambda[p]$, we find
\be
s_{a_{\ast}}\Lambda[p_{1},\cdots, p_{a_{\ast}},\cdots,p_{r}]&=&\gamma^{\frac{p_{a_{\ast}}}{2}}h_{a}^{\prime}\big(-\mu_{a_{\ast}}+\sum_{a=1}^{r}\gamma^{-\frac{1}{2}}\mu_{a}I_{aa_{\ast}}\big)+\sum_{a\neq \aast }\gamma^{\frac{p_{a}}{2}}\mu_{a}h_{a}^{\prime}
\nonumber\\
&=&\Lambda[p_{1},\cdots, p_{a_{\ast}}-2,\cdots,p_{r}]\,,
\ee
where \eqref{eq:PF} was used.

We thus could decrease $p_{a_\ast}$ to $p_{a_{\ast}}-2$ by acting with the same reflection on both $s[p]$ and $\Lambda[p]$. We repeat this progress, decreasing at each step one $p_{a_\ast}$ at a node $a_\ast$ that has currently the maximal value of $p$. If there are several nodes with the maximal value, they are not adjacent and hence we can decrease their value in any order. We continue until we obtain $p=\tilde p$, $s[\tilde p]=s_{\rm odd}s_{\rm even}$. We already established that $\Lambda[\tilde p]$ is the desired eigenvector of $s[\tilde p]$, \cf \eqref{eq:tLambda}. It remains to reverse all the performed reflections $s_{a_\ast}$ to prove the same for any $p$ thus confirming \eqref{eq:anyp}.
\end{proof}

In conclusion, we managed to find (up to an inessential normalisation) the explicit form for the originally defined by \eqref{eq:Ladef2} $\Lambda$ in a reference frame where $\gamma^{{\rm ad}\rho^{\vee}}$ is the Coxeter element $s[p]$: $\Lambda=\Lambda[p]$. Now it is also obvious that the spectrum of $\Lambda[p]$ is the same for any choice of the Coxeter height function $p$.

Finally, let us also comment about interpretation of the eigenvectors of $\Lambda$. Denote by $\mathsf{U}_{(a)}^{[p_a]}$ the eigenvector of $\Lambda_{L(\omega_a)}$ with the eigenvalue $\gamma^{p_a/2}\mu_a$. If we are in a frame where $\Lambda=\Lambda[p]$, we conclude that $h_b'\mathsf{U}_{(a)}^{[p_a]}=\delta_{ab}$ and hence $\mathsf{U}_{(a)}^{[p_a]}$ gets meaning of the highest-weight vector in the $a$'th fundamental representation. Importantly, we can make this conclusion simultaneously for all $a$:

\begin{lemma}
For every Coxeter height function $p$, there exists a Cartan subalgebra $\mathfrak{h}^\prime$ and a choice of simple roots such that $ \mathsf{U}_{(1)}^{[p_1]},\ldots,\mathsf{U}_{(\rank)}^{[p_\rank]}$ are the highest-weight vectors in the corresponding fundamental representations of $\algg$.
\qed
\end{lemma}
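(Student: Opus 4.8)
The plan is to recognise that this lemma is a repackaging of Lemma~\ref{thm:23} together with the discussion that immediately precedes it, and that its one genuinely new assertion is the word \emph{simultaneously}: a \textbf{single} frame $(\mathfrak{h}',\{h_a'\})$ can be used for all fundamental representations at once, so that the vectors $\mathsf{U}_{(a)}^{[p_a]}$ become highest-weight simultaneously. First I would fix the Cartan subalgebra once and for all. Since $\Lambda$ is not ad-nilpotent it is in fact regular semisimple, so its centraliser $\mathfrak{h}'=\mathfrak{z}_\algg(\Lambda)$ is a Cartan subalgebra containing $\Lambda$; crucially this choice is intrinsic to $\Lambda$ and refers to no particular node $a$, which is what will yield simultaneity at the end. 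By \eqref{eq:easy} the inner automorphism $\gamma^{{\rm ad}\rho^\vee}$ sends $\Lambda$ to $\gamma\Lambda\in\mathfrak{h}'$, hence preserves $\mathfrak{h}'$ and induces on it a Weyl-group element $w$ of order $\Cox$, i.e. a Coxeter element, with $\Lambda$ spanning its one-dimensional $\gamma$-eigenline.

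Next, given the prescribed height function $p$, I would choose the system of simple roots (equivalently the Borel) containing $\mathfrak{h}'$ so that $w$ is precisely the Coxeter element $s[p]$. This is possible because the positive systems containing $\mathfrak{h}'$ form a single $\WG$-torsor, so re-choosing the Borel replaces $w$ by an arbitrary $\WG$-conjugate; since all Coxeter elements are conjugate and, by the bijection between height functions and Coxeter elements established above, $s[p]$ is one such element, a suitable Borel realises $w=s[p]$. In this frame Lemma~\ref{thm:23} exhibits $\Lambda[p]=\sum_a \gamma^{p_a/2}\mu_a h_a'$ as a $\gamma$-eigenvector of $s[p]$; as $\Lambda$ is one too and the eigenline is one-dimensional, $\Lambda=\Lambda[p]$ up to the inessential normalisation.

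Finally, for each node $a$ the highest-weight vector $v_a$ of $L(\omega_a)$ relative to this Borel satisfies $h_b' v_a=\delta_{ab}v_a$, so $\Lambda v_a=\big(\sum_b \gamma^{p_b/2}\mu_b\,\delta_{ab}\big)v_a=\gamma^{p_a/2}\mu_a v_a$. Thus $v_a$ already realises the eigenvalue that \emph{defines} $\mathsf{U}_{(a)}^{[p_a]}$, whence $\mathsf{U}_{(a)}^{[p_a]}=v_a$ up to scale, and this holds for every $a$ because $(\mathfrak{h}',\{h_a'\})$ was fixed beforehand; reading off weights gives $h_b'\mathsf{U}_{(a)}^{[p_a]}=\delta_{ab}\mathsf{U}_{(a)}^{[p_a]}$, i.e. weight $\omega_a$, for all $a$ at once.

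The only substantive input, and the step I expect to require the most care, is the non-degeneracy of $\gamma^{p_a/2}\mu_a$ inside $L(\omega_a)$: one must verify that no weight other than $\omega_a$ projects to this point of the Coxeter plane, for otherwise $\mathsf{U}_{(a)}^{[p_a]}$, defined as \emph{the} eigenvector with that eigenvalue, need not coincide with $v_a$. I would dispatch this by invoking the Perron--Frobenius property of $(\mu_1,\dots,\mu_r)$ from Lemma~\ref{thm:lemma21}, which makes $\mu_a$ the maximal modulus on its circle, together with the Stokes analysis: $\gamma^{p_a/2}\mu_a$ is exactly the dominant eigenvalue governing the uniquely-defined $S$-solution $\Psi_{(a)}$, and the uniqueness of that $S$-solution is tantamount to simplicity of this eigenvalue.
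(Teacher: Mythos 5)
Your proof is correct and takes essentially the same route as the paper, which states this lemma as an immediate corollary of the preceding discussion: Lemma~\ref{thm:23} puts $\Lambda=\Lambda[p]=\sum_a\gamma^{p_a/2}\mu_a h_a'$ in a frame where $\gamma^{{\rm ad}\rho^{\vee}}=s[p]$, whence $h_b'\mathsf{U}_{(a)}^{[p_a]}=\delta_{ab}\mathsf{U}_{(a)}^{[p_a]}$ simultaneously for all $a$. Your final paragraph on the simplicity of the eigenvalue $\gamma^{p_a/2}\mu_a$ in $L(\omega_a)$ is a welcome explicit justification of a point the paper leaves implicit in its definition of $\mathsf{U}_{(a)}^{[p_a]}$ via the uniqueness of the $S$-solution.
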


\section{Extended Q-system}\label{sec:Extended Q-system}
In this section we explore, using general formalism of representation theory, various relations between the Q-functions of the extended Q-system and their geometric interpretation. Recall that the Q-functions of the extended Q-system are the components of the vectors $Q_{(a)}$ in the $a$'th fundamental representations of the Lie algebra.

\subsection{Relations between Q-functions}
Generalising from the $\so_6\simeq \sl_4$ example, we organise all possible relations into three categories: fusion, quantisation and projection. 

Let $A={a_1,\ldots,a_{|A|}}$ be an ordered set comprising $|A|$ elements from $\{1,2,\ldots,\rank\}$, possibly with repetitions. Choose also some integers $n_1,\ldots,n_{|A|}$  and construct the following function
\be
\Psi=\bigotimes_{i=1}^{|A|}\Psi_{(a_i)}^{[n_i]}
\ee
which is naturally a vector in the representation
\be
\label{eq:935}
L:=\bigotimes_{i=1}^{|A|} L(\omega_{a_i})=\bigoplus\limits_{\omega}L(\omega)\,;
\ee
we also noted the decomposition of $L$ into irreps, the sum $\bigoplus\limits_{\omega}$ may feature repetitions of $\omega$. The largest $\omega$ that appears in this sum is $\omega_{\rm max}=\sum_{i=1}^{|A|}\omega_{a_i}$.

We demand that $\Psi$ is a solution of \eqref{eq:dA} and hence restrict $n_i$ to be even if $a_i$ is an even node of the Dynkin diagram and odd if $a_i$ is an odd node.

The cone of applicability of $\Psi$ is $[\alpha,\beta]_+$ with $\alpha=-\frac{1}{2}\min(n_1,\ldots,n_{|A|})-\frac \Cox 2$, and $\beta=-\frac{1}{2}\max(n_1,\ldots,n_{|A|})+\frac \Cox 2$. In this cone, the large-$x$ approximation \eqref{eq:WKB0} of $\Psi$ follows from that of $\Psi_{a_i}^{[n_i]}$, and the associated eigenvalue is $\mu=\sum_{i=1}^{|A|}\gamma^{\frac{n_i}{2}}\mu_{(a_i)}$. For the statements below, it is important that this cone is non-empty, and also it must be large enough in certain cases.

\paragraph{Fusion relations}
If $\mu$ is an eigenvalue of $\Lambda_{L(\omega)}$ in some irrep $L(\omega)$ appearing in the decomposition \eqref{eq:935} then one can apply the WKB analysis for $\Psi$ restricted to this irrep: If for each $\mu'$ -- eigenvalue of $\Lambda_{L(\omega)}$ different from $\mu$ -- there exists a direction $k\in [\alpha,\beta]_+$ such that $\Re(\mu' e^{\frac{2\pi\,\ii}{\Cox} k})< \Re(\mu e^{\frac{2\pi\,\ii}{\Cox} k})$ then $\Psi_{L(\omega)}$ is a solution of \eqref{eq:dA} of $S^*$-type. If directions $k$ for each $\mu'$ can be made equal than this a solution of $S$-type. In either case, it is fixed uniquely by its large-$x$ asymptotics.

Specialising to the Q-system, one writes
\be
\left(\bigotimes_{i=1}^{|A|}Q_{(a_i)}^{[n_i]}\right)_{\Lambda_{L(\omega)}}=z^{-\frac{\rho^{\vee}}{h\,M}}\left(\Psi(0)\right)_{\Lambda_{L(\omega)}}\,.
\ee
If there is a different way to get the same $S$-/$S^*$-solution, \eg using a set $A'$ and associated integers $n_i'$, then, from the uniqueness of such a solution, we derive the fusion relation
\be
\label{eq:fr}
\left(\bigotimes_{i=1}^{|A|}Q_{(a_i)}^{[n_i]}\right)_{\Lambda_{L(\omega)}}\propto \left(\bigotimes_{i=1}^{|A'|}Q_{(a'_i)}^{[n'_i]}\right)_{\Lambda_{L(\omega)}}\,.
\ee
We stress that the coefficient of proportionality does not depend on the spectral parameter. It is just a number. Indeed, it can be fixed from the comparison of the large-$x$ asymptotics \eqref{eq:WKB0} which does not depend on the spectral parameter.

Typical examples of the fused relations are the QQ-relations \eqref{eq:QQi}, relations featuring Wronskians \eg \eqref{eq:QA}, and a fused version of Fierz identities.

\paragraph{Quantisation relations} This is a special instance of the fused relations, when $L(\omega)$ is the trivial representation. In this case we do not need two different ways to realise the same $S/S^*-$solution but instead we can write
\be
\label{eq:qr}
\left(\bigotimes_{i=1}^{|A|}Q_{(a_i)}^{[n_i]}\right)_{\Lambda_{L(0)}}\propto 1\,.
\ee
It is important that the cone of applicability is non-empty to fix the normalisation constant from the large-$x$ asymptotics and, in particular, to conclude that it does not depend on the spectral parameter.

We call this type of relations quantisation relations because they essentially constrain possible functional dependence of the Q-functions on the spectral parameter. And indeed, in the example of the rational $\sl_{\rank+1}$ case, we know that the quantisation condition \eqref{eq:Qcond} selects a finite set of polynomials $Q_a$ which are precisely all physical solutions, at least in the case of spin chains in the defining representations.

There is one particular quantisation relation which we would like to mention explicitly. Let $Q_{(a^*)}$ be the Q-vector in the contra-gradient representation~\footnote{obtained by minus transposition of the representation matrices.}  compared to the Q-vector $Q_{(a)}$. Then we can always, by the natural pairing, construct a singlet from these two Q-vectors. Abusing a bit terminology, we shall refer to it as a scalar product and denote it by $\langle\cdot,\cdot\rangle$.

The quantisation relation reads
\be
\label{eq:qrh}
\langle Q_{(a)}^{[h/2]},Q_{(a^*)}^{[-h/2]}\rangle=1\,.
\ee
Normalisaton of the Q-functions is fixed, up to symmetries, by demanding equality and not just proportionality in \eqref{eq:QQi}. Our convention is to scale the definition of the scalar product in a way to get identity on the \rhs of \eqref{eq:qrh}.

The proof of \eqref{eq:qrh} is simple using the WKB analysis: Eigenvalues of $\Lambda$ are the same for a representation and its contra-gradient, and $\gamma^{h/4}+\gamma^{-h/4}=0$. Hence  $\langle \Psi_{(a)}^{[h/2]},\Psi_{(a^*)}^{[-h/2]}\rangle$ has constant large-$x$ asymptotics, while its cone of applicability is non-zero.

\paragraph{Projection relations} Finally, it may happen that $\mu$ is not an eigenvalue of $\Lambda_{L(\omega)}$ for a particular $\omega$. In such a case, if for each $\mu'$ -- eigenvalue of  $\Lambda_{L(\omega)}$  -- there exists a direction $k\in [\alpha,\beta]_+$ such that $\Re(\mu' e^{\frac{2\pi\,\ii}{\Cox} k})< \Re(\mu e^{\frac{2\pi\,\ii}{\Cox} k})$ then $\Psi_{L(\omega)}=0$ implying
\be\label{eq:pr}
\left(\bigotimes_{i=1}^{|A|}Q_{(a_i)}^{[n_i]}\right)_{\Lambda_{L(\omega)}}=0\,.
\ee
The above equality is easy to verify if $n_i=0$ for the even Dynkin nodes and $n_i=\pm 1$ (same sign for all $i$) for the odd Dynkin nodes, and $\omega<\omega_{\rm max}$. Condition on $n_i$ can be further relaxed which is the subject of Section~\ref{sec:fusedflag}.

There is also an important projection property featured by the scalar product:
\be
\label{eq:prh}
\langle Q_{(a)}^{[n]},Q_{(a^*)}^{\vphantom{[n]}}\rangle=0\,,
\ee
for $n=h-2,h-4,\ldots,2-h$. Its derivation is based on the fact $\gamma^{n/2}+1\neq 0$ implying that the eigenvalue controlling the large-$x$ asymptotics of $\langle \Psi_{(a)}^{[n] },\Psi_{(a^*)}^{\vphantom{[n]}}\rangle$ is non-zero, and the cone of applicability contains the fastest descent line for the mentioned $n$.  This is a sub-dominant asymptotic behaviour compared to the constant in $x$ solution featured by the trivial representation.

The projection relations have a natural geometric interpretation of generalised \Plucker relations as we shall soon see.
\subsection{Universality of the Q-system}
The relations listed in the previous subsection are derived for very specific Q-functions that originate from solutions of the linear problem \eqref{eq:dA} according to \eqref{eq:Baxvec}. We will now show that all these relations can be systematically imposed on Q-functions which are not \apriori linked to some ODE/IM problem. So the relations should be actually based on representation theory of the Lie algebra alone, they are not an exclusive feature of \eqref{eq:dA}.

To be specific, recall the terminology that we use: Ensemble of Q-functions of type $Q_{(a),\sigma(1)}$ is said to be a Q-system on the Weyl orbit if these Q-functions satisfy \eqref{eq:Plucker4}. Ensemble of Q-vectors $Q_{(a)}$ is said to be an extended Q-system if they satisfy all the fusion, quantisation, and projection relations introduced in the subsection above. 

\begin{theorem}
\label{thm:uni}
For any generic enough choice of the functions $Q_{(a),1}$, $a=1,\ldots,\rank$, there exists the unique, up to symmetries, Q-system on the Weyl orbit containing $Q_{(a),1}$. 
\end{theorem}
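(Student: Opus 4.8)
The plan is to build the Weyl-orbit Q-system one component at a time, solving the rank-one relation \eqref{eq:Plucker4} repeatedly and organising the steps along reduced words in $\WG$, so that the only genuinely non-trivial issue, consistency of the over-determined system, localises to rank-two subdiagrams. Existence is then a construction, and uniqueness is the bookkeeping of the ambiguity introduced at each step.

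First I would fix the combinatorics. The component $Q_{(a),\sigma(1)}$ carries the weight $\sigma\omega_a$ and hence depends only on the coset $\sigma\WG_a$, where $\WG_a=\langle s_b : b\neq a\rangle$ is the stabiliser of $\omega_a$. Choosing minimal-length coset representatives and filtering by length, the seed is the collection $Q_{(a),1}$ at the identity coset. Since $s_a\omega_a=\omega_a-\alpha_a$, passing from a representative $\sigma$ to $\sigma s_a$ reflects the highest weight to its leading descendent, so that $Q_{(a),\sigma(2)}=Q_{(a),(\sigma s_a)(1)}$ and \eqref{eq:Plucker4} reads $W(Q_{(a),\sigma(1)},Q_{(a),(\sigma s_a)(1)})=\pm\prod_{b,\,C_{ab}=-1}Q_{(b),\sigma(1)}$. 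Whenever $\sigma s_a$ is longer than $\sigma$, I read this as a first-order inhomogeneous finite-difference equation determining the new component from already-constructed data of strictly smaller length; its general solution is fixed up to the additive freedom $Q\mapsto Q+\alpha\,Q_{(a),\sigma(1)}$ with $\alpha$ periodic ($\alpha^{++}=\alpha$). Solvability, through the discrete variation-of-constants formula, requires the right-hand side and the relevant Wronskians not to degenerate, which is exactly what the genericity hypothesis guarantees; this produces a solution.

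The heart of the argument is consistency. The relations \eqref{eq:Plucker4} over all $a$ and $\sigma$ are over-determined, so the component attached to a weight must be independent of the reduced word used to reach it, and in particular independent of the representative chosen within a coset. By the standard description of relations in a Coxeter group, word ambiguity is generated by braid moves, and in the simply-laced case these are the commutation $s_as_b=s_bs_a$ for $C_{ab}=0$ and the braid relation $s_as_bs_a=s_bs_as_b$ for $C_{ab}=-1$; each lives inside a rank-two parabolic. The commuting case is immediate, since the two transforms act on decoupled neighbour data. The braid case reduces to the $\mathfrak{sl}_3$ Weyl-orbit Q-system, where one checks directly that the two traversals of the hexagon $s_as_bs_a$ and $s_bs_as_b$ produce the same components from \eqref{eq:Plucker3} (the classical closure of \cite{MV05}). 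I expect this rank-two closure to be the main obstacle, since it is where the signs in \eqref{eq:Plucker4} must be assigned coherently for all $\sigma$ at once; the cocycle identity \eqref{eq:ss} for the representatives $s_\sigma$ is precisely the tool that renders the sign assignment path-independent, so the bulk of the work is verifying that the signs generated by the hexagon computation agree with those predicted by \eqref{eq:ss}.

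For uniqueness up to symmetries, I would compare two Weyl-orbit Q-systems sharing the seed $Q_{(a),1}$ along the same length filtration. At the step introducing $Q_{(a),(\sigma s_a)(1)}$ the two candidates differ by $\alpha\,Q_{(a),\sigma(1)}$ with $\alpha$ periodic; each such elementary difference is a generator of the symmetry group acting on the system, namely the Borel-type ambiguity controlling the choice of the periodic factors $\alpha$ in the Weyl transforms, as in the Bruhat-decomposition picture recalled on page~\pageref{bosonicdualitypage}. Assembling these generators inductively yields a single symmetry carrying one solution to the other, so the solution is unique modulo this group. A final parameter count, matching the number of periodic functions $\alpha$ introduced against the dimension of the symmetry group, confirms that no residual constraints survive, closing both existence and uniqueness.
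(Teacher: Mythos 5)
Your recursive construction matches the first half of the paper's proof, but the two arguments part ways at the crucial step --- consistency of the over-determined system \eqref{eq:Plucker4} --- and it is there that your proposal has a genuine gap. The paper makes no rank-two/braid-move reduction at all. Instead it observes that, once a spanning subset of the relations has been used to express every Weyl-orbit Q-function as a rational combination of the seeds $Q_{(a),1}^{[m]}$, each unused relation becomes a single rational identity in those seeds, and it proves that identity by a density argument: the deformed linear problem \eqref{eq:dAg} produces, via the same WKB analysis, Q-systems satisfying \emph{all} of \eqref{eq:Plucker4} while sweeping out $\rank$ functions' worth of freedom in the $Q_{(a),1}$, hence a dense set on which the rational identity vanishes, hence vanishes identically. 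Your route is the Mukhin--Varchenko reproduction-procedure route; it can be made to work, but you assert rather than prove its key step. Matsumoto's theorem links two reduced words for the same element of $\WG$ by braid moves, but it does not by itself imply that the functional relations close coherently: each relation couples a Wronskian at node $a$ to a product over \emph{all} neighbours of $a$, so the ``hexagon'' check is not literally the $\sl_3$ Weyl-orbit system but that system decorated with spectator factors, and one must also show that every unused relation (not just path-independence of individual components) is generated by these local checks. That closure is exactly where the work lives, and ``one checks directly'' is not a proof.

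The sharper problem is the uniqueness argument. You introduce one periodic function $\alpha$ for every length-increasing step on every coset $\WG/\WG_a$, i.e.\ $\sum_a\left(|\WG/\WG_a|-1\right)$ of them, and then claim a parameter count against the symmetry group ``confirms that no residual constraints survive.'' But the residual symmetry is $\groupN$, whose dimension is the number of positive roots, which is far smaller: already for $A_2$ there are four unknown components versus $\dim\groupN=3$, and for $D_4$ forty-four versus twelve. In the paper's recursion only $\dim\groupN$ steps are genuine first-order difference equations carrying an integration constant; every other component is obtained \emph{algebraically}, as the lone unknown factor on the right-hand side $\prod_b Q_{(b),\sigma(1)}$ of some relation, with no freedom introduced. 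So either your scheme produces too many free functions (and uniqueness up to $\groupN$ fails), or most of your $\alpha$'s are in fact fixed by the ``unused'' relations --- in which case your consistency step and your uniqueness step are entangled in a way the proposal does not resolve. To repair it you would need to exhibit an ordering in which exactly $\dim\groupN$ integration constants appear, or show explicitly how the over-determined relations eliminate the excess.
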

\begin{theorem}
\label{thm:unie}
For any Q-system on the Weyl orbit, there exists the unique extended Q-system containing the Q-system on the Weyl orbit as its part.
\end{theorem}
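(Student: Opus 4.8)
The plan is to treat the QQ-relation \eqref{eq:QQi}, projected onto $L(\omega_{\rm max})$, as the single source of all constraints, and to show that once the Weyl-orbit components are fixed it determines every remaining interior component of each $Q_{(a)}$ by a triangular recursion. Concretely, fix a Q-system on the Weyl orbit as provided by \cref{thm:uni}, so that $Q_{(a),1}$ together with all components $Q_{(a),\sigma(1)}$ on the Weyl orbit of $\omega_a$ are known. The recursion runs over the nodes $a$ in an order compatible with the neighbour structure of the Dynkin diagram: for a node $a$, the right-hand side of \eqref{eq:QQi} is built entirely from the neighbouring vectors $Q_{(b)}$ with $C_{ab}=-1$, so assuming these have already been fully reconstructed, \eqref{eq:QQi} fixes the bivector $(Q_{(a)}^+\wedge Q_{(a)}^-)_{L(\omega_{\rm max})}$ as a known element of $L(\omega_{\rm max})$ depending on the spectral parameter.

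The reconstruction of the individual components of $Q_{(a)}$ then proceeds by induction on the depth of the weight below $\omega_a$. Expanding $Q_{(a)}^+\wedge Q_{(a)}^-$ in a weight basis and reading off the component of \eqref{eq:QQi} at weight $\omega_a+\gamma$, one isolates the pair $e_{\omega_a}\wedge e_{\gamma}$, whose contribution is, up to a sign and a representation-theoretic constant, the Wronskian $W(Q_{(a),1},Q_{(a),\gamma})$; every other bivector feeding that weight carries two weights strictly below $\omega_a$ and hence involves only components of smaller depth, already determined by the induction. Thus each interior component $Q_{(a),\gamma}$ is obtained by solving a first-order finite-difference equation of exactly the shape \eqref{eq:Plucker3}, with a right-hand side assembled from neighbour data and shallower components. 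The base of the recursion is the first descendant $Q_{(a),2}$, recovered from \eqref{eq:Plucker3} itself, and the Weyl-orbit components serve as fixed inputs at every level.

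Existence of a genuine functional solution of each such equation, rather than a formal quotient, is guaranteed once its right-hand side meets the compatibility conditions encoded in the projection relations \eqref{eq:pr}: these express precisely that the reconstructed bivector is decomposable and regular, and for a generic Weyl-orbit system they follow from \eqref{eq:QQi} by the argument already illustrated in the $\so_6\simeq\sl_4$ example. Uniqueness is the more delicate point, since solving $W(Q_{(a),1},Q_{(a),\gamma})=R_\gamma$ leaves the standard ambiguity $Q_{(a),\gamma}\mapsto Q_{(a),\gamma}+\alpha\,Q_{(a),1}$ with $\alpha$ periodic. The resolution is that the same component $Q_{(a),\gamma}$ reappears in the equations read off at other weights of $L(\omega_{\rm max})$; because \eqref{eq:QQi} is over-determined, these extra relations fix $\alpha$ completely, so that the extension containing the given Weyl-orbit system is unique.

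It remains to verify that the components produced by the recursion satisfy \emph{all} the fusion \eqref{eq:fr}, quantisation \eqref{eq:qr}, \eqref{eq:qrh}, and projection \eqref{eq:pr}, \eqref{eq:prh} relations, not merely those used along the way, and this is where I expect the genuine difficulty. The main obstacle is exactly the consistency of the over-determined set: one must show that \eqref{eq:QQi}, imposed in general position, forces every other relation as an identity between fixed Wronskian expressions in the Weyl-orbit data, including the cases where the constant multiplying $W(Q_{(a),1},Q_{(a),\gamma})$ happens to vanish and $\gamma$ must instead be paired with a different extremal weight. I would establish this by reducing each target relation, via lowering inside $L(\omega_{\rm max})$ and the normalisation \eqref{eq:NProp}, to a consequence of \eqref{eq:QQi}, checking that the multiplicities of the relevant tensor products leave no residual freedom. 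This reduction closes for all simply-laced $\algg$ except possibly $E_8$, where the same mechanism is expected but the combinatorics of weight multiplicities makes the verification the hard, case-dependent step.
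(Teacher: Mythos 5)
Your route --- bootstrapping every interior component from \eqref{eq:QQi} alone by a depth-by-depth Wronskian recursion --- is genuinely different from the paper's, and it has gaps that the paper's construction is specifically designed to avoid. The paper does not integrate any finite-difference equations at this stage: it exhibits, case by case, \emph{explicit polynomial} fusion formulas expressing every extended Q-function directly in the Weyl-orbit data (the Wronskian determinants \eqref{eq:VWdet} for $D_\rank$, the relations of Sections~\ref{sec:E6}--\ref{sec:E8} for $E_{6,7}$, and, crucially for $E_8$, the structure-constant relation \eqref{eq:fijk} for the zero-weight space of the adjoint followed by \eqref{eq:821}). Because no divisions occur, uniqueness is immediate and no general-position assumption is needed --- the paper states this explicitly right after the proof. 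Consistency of the full over-determined system is then not checked representation-theoretically at all: every remaining relation is a rational identity in the inputs $Q_{(a),1}^{[m]}$, and it is shown to vanish on a dense set realised by the deformed linear problem \eqref{eq:dAg}, hence identically. Your proposal replaces this with a rep-theoretic reduction that you yourself leave as ``the genuine difficulty,'' so the consistency step is not actually carried out.

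Two of your gaps are fatal rather than cosmetic. First, each step of your recursion solves $W(Q_{(a),1},Q_{(a),\gamma})=R_\gamma$, which carries the periodic ambiguity $Q_{(a),\gamma}\mapsto Q_{(a),\gamma}+\alpha\,Q_{(a),1}$; asserting that the over-determination of \eqref{eq:QQi} fixes every such $\alpha$ is precisely the uniqueness claim of the theorem, not an argument for it, and your construction also requires divisions (general position), whereas the theorem is stated without that hypothesis. Second, $E_8$: the paper says explicitly, in the discussion after Theorem~\ref{thm:ff}, that it does \emph{not} know how to derive \eqref{eq:fijk} from the QQ-/fused-flag relations alone and only conjectures this is possible. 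Since $E_8$ has no fundamental irrep with all weights on the Weyl orbit, your recursion based solely on \eqref{eq:QQi} cannot reach the Cartan (zero-weight) components of $Q_{(7)}$; the paper closes this case only by adopting \eqref{eq:fijk} --- a fusion relation for the adjoint with shift $h/3$, not an instance of \eqref{eq:QQi} --- as part of the defining data of the extended system. Flagging $E_8$ as ``the hard, case-dependent step'' is therefore not enough: on your route the theorem is simply not proven there.
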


Let us explain what ``up to symmetries'' mean. For  $g$ -- any $\groupG$-valued periodic function of the spectral parameter -- transformation $Q_{(a)}\to g\,Q_{(a)}$ is a symmetry of the Q-system. By the condition of Theorem~\ref{thm:uni}, $Q_{(a),1}$ are fixed which restricts the symmetry to the unipotent radical $\groupN=[\groupB,\groupB]$. Hence, when computing the Weyl orbit Q-system, we should look for solutions modulo the transformations $Q_{(a)}\to g\,Q_{(a)}$ with periodic functions $g$ that take values in $\groupN$. Once the Weyl orbit Q-system is fixed, there is no residual symmetry left; Theorem~\ref{thm:unie} implies that further extension to the full extended Q-system has no ambiguities.

\begin{proof}[Proof of Theorem \ref{thm:uni} ~\footnotemark] 
~\footnotetext{Certain technical aspects of the proof will be better clarified in the sequel of this paper where we plan to present them in a unifying setting covering also non-simply-laced cases.}
Let the total number of unknown Q-functions on the Weyl orbit be $\#_{\rm unkn}$.  Consider an explicit algorithm that selects and solves a subset of $\#_{\rm unkn}$ equations from \eqref{eq:Plucker4} to compute the unknown Q-functions. Namely, each equation \eqref{eq:Plucker4} relates four functions (five in the case of the bifurcation node of the Dynkin diagrram). In the algorithm, one considers an equation with all but one already computed Q-functions to compute the remaining one, and proceeds recursively. Existence of such a recursion to compute all the Q-functions is a consequence of the results in \cite{MV05}.

Most steps of the recursion are straightforward where we fix an unknown Q-function as
$$
Q=\text{A rational combination of already fixed Q-functions (probably with some shifts).}
$$
However, precisely on $\dim\groupN$ occasions one encounters equation
\be
W(Q_a,Q_b)=\prod_{c\in \{c_1,c_2,\ldots\}} Q_{c}\,,
\ee
where $Q_b$ is the unknown. It is solved as follows. We  fix some large enough  integer $R$ and write a solution as
\be\label{eq:sportsol}
Q_{b}^{[2n+p]}=Q_{a}^{[2n+p]}\left(\sum\limits_{-m\leq k<n}\left(\frac{\prod\limits_{c\in \{c_1,c_2,\ldots\}} Q_{c}^+}{Q_a^{[2]}Q_a}\right)^{[2k+p]}+\frac{Q_{b}^{[-2R+p]}}{Q_{a}^{[-2R+p]}}\right)\,,
\ee
where $n$ is an integer $n>-R$ and $p=0$ or $1$. The term $\frac{Q_{b}^{[-2R+p]}}{Q_{a}^{[-2R+p]}}$ should be viewed as an integration constant, we can set it to any value using the residual symmetry of the problem (this is ``up to symmetries'' part of the theorem).

We therefore see that all $Q_{(a),i}^{[n]}$~\footnote{Admissible values of $n$ are bounded from below due to existence of $R$ in \eqref{eq:sportsol}, but this is not an obstruction as any explicit equation features only finitely many values of $n$ and we can take $R$ large enough.} can be written as rational combinations of $Q_{(a),1}^{[m]}$ for a range of $a,m$. $Q_{(a),1}^{[m]}$ for each $a$ and $m$ shall be considered as independent variables that assume certain numerical values -- the input to the system of equations \eqref{eq:Plucker4}. ``Generic enough choice'' of $Q_{(a),1}$ means that the denominators in the encountered rational combinations do not vanish, \ie that $Q_{(a),1}^{[m]}$ take values in a Zariski-open set.

Consequently, any relation between Q-functions, for instance any of the yet unused equations from \eqref{eq:Plucker4}, becomes of type
\be
\label{eq:1171}
\text{Rational function of } Q_{(a),1}^{[m]} =0\,.
\ee
It suffices to show that this rational function vanishes on a dense set to conclude that it is identically zero. To this end consider a generalisation of \eqref{eq:dA}:
\begin{align}
\label{eq:dAg}
\left(\frac d{dx}+\sum_{i=1}^\rank f_i\left(\frac{x}{z^{1/h\,M}}\right)H_i+\sum_{i=1}^\rank g_i\left(\frac{x}{z^{1/h\,M}}\right)E_{\alpha_i}+h \left(\frac{x}{z^{1/h\,M}}\right) (x^{hM}-z)E_{\alpha_0})\right)\Psi=0\,.
\end{align}
This generalisation with dexterously chosen $f_i,g_i$, $h$ was used in \cite{Masoero:2018rel,Masoero:2019wqf} to describe excited states of the quantum $\hat{\algg}$-KdV model, see also \cite{Bazhanov:2003ni,Fioravanti:2004cz,kar72958}.

The generalised equation still enjoys symmetry \eqref{eq:Symanzik}. Hence, if its asymptotic large-$x$ behaviour, at least in the relevant directions, coincides with the one of the original equation \eqref{eq:dA}, all the equations satisfied by the  defined by \eqref{eq:Baxvec}  Q-vectors and as a consequence of the WKB analysis will hold. In particular \eqref{eq:Plucker4} will hold.  There are $2\rank+1$ functions $f_i,g_i$, $h$ which can be be used as a functional freedom to engineer Q-vectors. However, $\rank$ of them can be fixed using gauge transformations that do not spoil the structure of the equation and one can be absorbed by a reparameterisation of $x$. The actual functional freedom to non-trivially modify the Q-system are the remaining $\rank$ functions which is precisely what we need to vary $Q_{(a),1}$ and form a dense set.
\end{proof}
If Q-functions are holomorphic functions of the spectral parameter $\spm$ then there should exist such its values $\spm_0$ (forming a Zariski-open set) that Q-functions are generic if evaluated at any point of a vicinity of $\spm_0$. Afterwards, Q-functions can be analytically continued outside the mentioned vicinities and the analytic continuation may reveal poles or other singularities for instance branch cuts; a scenario with branch cuts is realised in AdS/CFT integrable systems \cite{Beisert:2004hm}. A typical requirement to impose is that Q-functions describing the spectrum of a physical model have singularities only of a special type and at special values of the spectral parameter, see  Section~\ref{sec:AnalyticBetheAnsatz} for explicit examples.

\begin{proof}[Proof of Theorem \ref{thm:unie}]
For the $A_\rank$ case, all the Q-functions are already on the Weyl orbits. For the other algebras, it suffices to present an algorithm to compute all the extended Q-functions from the Q-functions on the Weyl orbit. Then we can represent any fusion, quantisation, or projection relation in the form \eqref{eq:1171} and use the same argumentation based on \eqref{eq:dAg} to conclude that any such relation is identically satisfied.

For $D_\rank$ algebras, all the Q-functions of the vector and both spinor representations are on the Weyl orbit. The other Q-functions can be computed \via the Wronskian determinant \eqref{eq:VWdet}.

For $E_6$, all the Q-functions of the two $27$-dimensional  representations are on the Weyl orbit. Explicit ways to compute the other Q-functions are presented in Section~\ref{sec:E6}

For $E_7$, all the Q-functions of the  $56$-dimensional representation are on the Weyl orbit.  Explicit ways to compute the other Q-functions are presented in Section~\ref{sec:E7}.

For $E_8$, the smallest non-trivial representation is the adjoint representation. In this representation, the zero weight vectors (Cartan subalgebra) are not on the Weyl orbit, but the Cartan subalgebra Q-functions can be computed using \eqref{eq:fijk}, see the explanation that follows this equation. From Q-functions of the adjoint representation, all the other Q-functions are computable using \eqref{eq:821}. 
\end{proof}
Note that all the extended Q-functions are computed polynomially from the Weyl orbit Q-functions, no divisions are encountered. Hence Theorem~\ref{thm:unie} does not require a generic position assumption.

\subsection{Fused flag}
\label{sec:fusedflag}
Recall some basic facts about compact homogeneous spaces (see \eg \cite{Fulton-Harris}~\textsection{23.3}). These spaces are of the form $\groupG/\groupP$, where $\groupP$ is a parabolic subgroup. Parabolic subgroups can be defined as the ones containing a Borel subgroup $\groupB$. In the following $\groupB$ is assumed to be fixed. The set of all $\groupP$'s containing $\groupB$ is partially ordered by inclusion:
\be
\groupB\equiv \groupP_{\fullset}\subset \ldots \subset \groupP_{a_1a_2a_3} \subset \groupP_{a_1a_2} \subset \groupP_{a_1} \subset \groupP_{\emptyset}\equiv \groupG\,,
\ee
where the Lie algebra of $\groupP_{a_1\ldots a_k}$ is generated by the Cartan generators, the raising operators $E_{\alpha_a}$ for all simple roots $\alpha_a$, and by the lowering operators $E_{-\alpha_a}$,  such that $a\neq a_i$, $i=1,\ldots,k$. In particular, the proper maximal parabolic subgroups of $\groupG$ are $\groupP_a$, $a=1,\ldots,\rank$. If $\groupG=\Sl_n$, $\groupG/\groupP_a$ is the Grassmannian manifold ${\bf Gr}(a,n)$.

A concrete way to realise  $\groupG/\groupP$ is by considering a representation whose highest-weight eigenspace is invariant under action of $\groupP$. Then $\groupG/\groupP$ is the orbit of the highest-weight vector under action of $\groupG$ in the representation space considered projectively (\ie up to normalisations). In the case of $\groupP_a$, the minimal such representation is the $a$'th fundamental representation. Let vectors of this representation have components $V_{(a),i}$, for $i=1,2,\ldots,\dim L(\omega_a)$. We call $V_{(a),i}$ the extended \Plucker coordinates~\footnote{For $\Gl_n$, these are normal \Plucker coordinates. In the works \cite{1987RuMaS..42..133G, 1998math......7079F} the name ``generalised \Plucker coordinates'' refers to $V_{(a),i}$ with $i$ being only on the Weyl orbit of the highest-weight vector. This orbit is also important for us,  \cf \eqref{eq:Plucker4}. The generalised \Plucker coordinates are used to identify the Bruhat cell to which a given point of $\groupG/\groupB$ belongs to but, in contrast to the extended coordinates, they are not sufficient to identify the point uniquely.} if they are the coordinates of the $\groupG$-orbit of the highest-weight vector. They are projective coordinates %
\be
\label{eq:Pluco}
[V_{(a),1}:V_{(a),2}:\ldots:V_{(a),\dim L(\omega_a)}]
\ee
that define embedding of $\groupG/\groupP_a$ into $\mathbb{P}L(\omega_a)$.

Consider now the minimal parabolic subgroup which is the Borel subgroup $\groupB$ itself. In this case, the compact homogeneous space $\groupG/\groupB$ is called the complete flag manifold (in the following, simply flag manifold). To describe this space, one considers the orbit of the highest weight vector in $L(\rho)$, where $\rho=\sum_{a=1}^{\rank}\omega_a$ is the Weyl vector. It is also practical to embed this orbit into a bigger representation $L(\omega_1)\otimes L(\omega_2)\otimes\ldots \otimes L(\omega_\rank)$ because the latter is naturally parameterised by the products $\prod\limits_{a=1}^{\rank} V_{(a),i_a}$, for all tuples $i_1\ldots i_{\rank}$. When we are on the highest-weight orbit, these products are in (projective) one-to-one correspondence with the sets of \Plucker coordinates \eqref{eq:Pluco} and so we can use $V_{(a),i}$ for all $a$ and the corresponding all $i$ to parameterise flags. By the same logic, we can use components of $V_{(a)}$ for $a\in\{a_1,\ldots,a_k\}$ to parameterise partial flags -- points of $\groupG/\groupP_{a_1\ldots a_k}$.

Extended \Plucker coordinates satisfy (generalisations of) \Plucker relations that can be obtained as follows. Consider some set $A$ composed from (possibly repeating) numbers $1,2,\ldots,\rank$. Consider the decomposition into irreps of the following tensor product 
\be
\bigotimes\limits_{a\in A}L(\omega_a)=L\left(\omega_{\rm max}\equiv \sum_{a\in A}\omega_a\right)+\bigoplus\limits_{\omega<\omega_{\rm max}}L(\omega)\,.
\ee
Then, for $V_{(a)}$ being the \Plucker coordinates of a maximal flag, it must hold
\be
\label{eq:Plure}
\left(\bigotimes\limits_{a\in A}V_{a}\right)_{L(\omega)}=0\,\quad {\rm if}\quad \omega<\omega_{\rm max}\,.
\ee
Indeed, this is obviously true for the highest-weight vector and therefore is also true for any vector on the $\groupG$-orbit.

The \Plucker relations \eqref{eq:Plure} form an ideal in $\mathbb{C}[V_{(a),i}]$. By the Hilbert basis theorem, one needs only finitely many of them to generate all the rest. The flag manifold can be also identified as all such $V_{(a),i}$ for which \eqref{eq:Plure} hold.
\newline
\newline
\noindent {\it Fused flag} is defined as follows: Consider the embedding of the complete flag manifold  $\groupG/\groupB$ into the product of $\groupG/\groupP_a$:
\be
 \groupG/\groupB\subset \groupG/\groupP_1\times \groupG/\groupP_2\times\ldots \times \groupG/\groupP_{\rank}\,.
\ee
Then a fused flag is a set of maps~\footnote{By slightly abusing notation we identify the map with the corresponding \Plucker coordinates $Q_{(a)}$ that depend on the spectral parameter. In this definition of a fused flag, we do not request that they are Q-vectors of an extended Q-system.} $Q_{(a)}:\Sigma\to \groupG/\groupP_a$, where $\Sigma$ is the space of spectral parameter, such that
\be\label{eq:ffpr}
Q_{(1)}^{[p_1]}\times Q_{(2)}^{[p_2]}\times\ldots\times Q_{(\rank)}^{[p_\rank]}\in \groupG/\groupB\,
\ee
for any Coxeter height function $p$ defined on page~\pageref{Coxhf}. For instance \eqref{eq:ffpr} should hold for an alternating pattern $p_a=0$, where $a$ are the even nodes, $p_a=1$, where $a$ are the odd nodes; but also for \eg increasing patterns like $(p_1,p_2,p_3)=(0,1,2)$ for the $A_3$ case.

\begin{lemma}
\label{thm:lm}
The maps $Q_{(a)}:\Sigma\to \groupG/\groupP_a$ define a fused flag if and only if $(Q_{(a)},Q_{(a')}^\pm)\in G/P_{aa'}$ for all adjacent nodes $a,a'$ of the Dynkin diagram.
\end{lemma}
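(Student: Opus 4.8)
The plan is to establish the two implications separately: the forward one is a direct projection argument, while the reverse one reduces to a purely geometric statement about the flag manifold that is then proved by induction over the Dynkin tree.

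First I would treat the forward direction. Suppose the $Q_{(a)}$ form a fused flag and fix an edge $(a,a')$ together with a sign. Because a Dynkin diagram is a tree, removing this edge disconnects it into a subtree containing $a$ and one containing $a'$; on each subtree a height function taking a prescribed value at $a$ (resp.\ $a'$) certainly exists (e.g.\ the distance-parity one), and re-joining them across $(a,a')$ produces a genuine Coxeter height function $p$ with $p_a=0$ and $p_{a'}=\pm1$ of the chosen sign. The fused-flag property then gives $Q_{(1)}^{[p_1]}\times\cdots\times Q_{(\rank)}^{[p_\rank]}\in\groupG/\groupB$, and composing with the canonical projection $\groupG/\groupB\to\groupG/\groupP_{aa'}$ (which only remembers the $a$-th and $a'$-th data, using $\groupB\subset\groupP_{aa'}$) yields $(Q_{(a)}^{[p_a]},Q_{(a')}^{[p_{a'}]})\in\groupG/\groupP_{aa'}$. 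Since $p_a=0$ this is precisely $(Q_{(a)},Q_{(a')}^{\pm})\in\groupG/\groupP_{aa'}$; more generally an overall shift of the spectral parameter is a symmetry mapping $\groupG/\groupP_{aa'}$ to itself, so only the relative shift $p_{a'}-p_a$ matters. As the edge and the sign were arbitrary, the claimed pairwise incidences hold.

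For the reverse direction I would fix an arbitrary Coxeter height function $p$ and aim to show the shifted tuple lies in $\groupG/\groupB$. For each edge $(a,a')$ one has $p_a-p_{a'}=\pm1$, so after an overall spectral shift the pair $(Q_{(a)}^{[p_a]},Q_{(a')}^{[p_{a'}]})$ becomes exactly $(Q_{(a)},Q_{(a')}^{\pm})$, which by hypothesis lies in $\groupG/\groupP_{aa'}$. Hence every adjacent pair of the shifted tuple is incident, and the statement reduces to the following fact, independent of the spectral parameter: a tuple $(V_a)_a$ with $V_a\in\groupG/\groupP_a$ lies in $\groupG/\groupB$ provided $(V_a,V_{a'})\in\groupG/\groupP_{aa'}$ for every edge $(a,a')$.

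I would prove this geometric fact by induction over connected subdiagrams, exploiting that the tree has no cycles. Order the nodes so that each $a_k$ attaches to $\{a_1,\ldots,a_{k-1}\}$ through a single edge, set $\groupP_S=\bigcap_{a\in S}\groupP_a$, and assume $(V_a)_{a\in S}=g\,\groupP_S$ for $S=\{a_1,\ldots,a_{k-1}\}$. Writing $a_0=a_k$ with unique neighbour $a_1\in S$, the edge incidence gives $h$ with $V_{a_0}=h\groupP_{a_0}$ and $V_{a_1}=h\groupP_{a_1}$, whence $g^{-1}h\in\groupP_{a_1}$; one then looks for $u\in\groupP_S$ fixing all $V_a$, $a\in S$, and carrying the $a_0$-slot onto $V_{a_0}$, i.e.\ $u\groupP_{a_0}=g^{-1}h\,\groupP_{a_0}$. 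The main obstacle, and the crux of the whole lemma, is precisely the existence of such a $u$: it amounts to checking that incidence along the single edge $(a_0,a_1)$ already forces $V_{a_0}$ into standard relative position with the entire previously-built flag, including the non-adjacent nodes. This is where the simply-laced tree structure is essential — since $a_0$ is non-adjacent to every node of $S\setminus\{a_1\}$, the simple root $\alpha_{a_0}$ is orthogonal to the corresponding $\alpha_a$ and the associated root subgroups commute, so the only genuine coupling of $a_0$ to the flag is along $(a_0,a_1)$ and is supplied by $g^{-1}h\in\groupP_{a_1}$. Equivalently, this is the assertion that the multihomogeneous ideal of $\groupG/\groupB\subset\prod_a\mathbb{P}(L(\omega_a))$ is generated by the sub-dominant components of $L(\omega_a)\otimes L(\omega_{a'})$ for adjacent $a,a'$ alone; for the $D_\rank$ and $E_\rank$ cases of the later sections it can alternatively be read off directly from the explicit Plücker and projection relations. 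Taking $S=\{1,\ldots,\rank\}$, so that $\groupP_S=\groupB$, then completes the induction and the proof.
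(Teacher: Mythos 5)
Your overall architecture coincides with the paper's: the forward direction is obtained by choosing a Coxeter height function with prescribed values on the given edge and projecting $\groupG/\groupB\to\groupG/\groupP_{aa'}$, and the reverse direction by an induction that attaches one node of the (tree-shaped) Dynkin diagram at a time, with the whole weight of the argument resting on a single ``gluing'' step. The paper isolates precisely that step as Lemma~\ref{thm:ABc} and proves it by factorising a dense subset of the parabolic through the edge node as a product of two smaller parabolics, then using the polynomial (Pl\"ucker) nature of the incidence conditions to pass from a dense set of configurations to all of them. Your reduction to a spectral-parameter-independent incidence statement and your induction set-up are both correct.

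The gap is exactly at the point you yourself flag as the crux: the existence of $u\in\groupP_S$ with $u\,\groupP_{a_0}=g^{-1}h\,\groupP_{a_0}$, equivalently $g^{-1}h\in\groupP_S\cdot\groupP_{a_0}$. The justification you offer --- $\alpha_{a_0}$ is orthogonal to $\alpha_a$ for $a\in S\setminus\{a_1\}$, hence the corresponding simple root subgroups commute, hence ``the only genuine coupling is along $(a_0,a_1)$'' --- does not establish this. The element $g^{-1}h$ is a general element of $\groupP_{a_1}$, whose big cell involves negative root subgroups $U_{-\alpha}$ for \emph{all} positive roots $\alpha$ whose support avoids $a_1$, not only the simple ones; roots such as $\alpha_{a_0}+\alpha_b$ with $b\notin S$ a neighbour of $a_0$ do occur and are not controlled by commutativity of simple root subgroups. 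The argument that actually works splits the roots in the Levi of $\groupP_{a_1}$ according to which connected component of the Dynkin diagram minus $a_1$ carries their support: because $a_1$ separates $a_0$ from $S\setminus\{a_1\}$ in the tree, every such root either has support disjoint from $S$ (so $U_{-\alpha}\subset\groupP_S$) or does not contain $a_0$ (so $U_{-\alpha}\subset\groupP_{a_0}$), and subgroups attached to distinct components commute; this is the content of the paper's Lemma~\ref{thm:ABc}. Even then the factorisation only covers the big cell of $\groupP_{a_1}$, so one must finish with a Zariski-closure argument (the incidences are polynomial and the hypothesis locus is irreducible) --- a step you omit entirely. Two smaller points: the structural input here is the tree property of the Dynkin diagram, common to all finite types, not simply-lacedness; and the closing remark that the claim ``is equivalent to'' a generation statement for the multihomogeneous ideal, to be ``read off'' from the explicit $D_\rank$/$E_\rank$ relations, restates the assertion rather than proving it.
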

\begin{proof}
The statement is  proven by induction using Lemma~\ref{thm:ABc}
\end{proof}
\begin{lemma}
\label{thm:ABc}
Let $A,B$ be two non-intersecting sets of Dynkin diagram nodes and $c$ is a node not belonging to $A$ or $B$. Denote by $x$ a point in $\groupG/\groupP_{A}$, by $y$ a point in $\groupG/\groupP_{B}$, and by $z$ a point in $\groupG/\groupP_{c}$. If $(x,z)$ belongs to $\groupG/\groupP_{Ac}$ and $(y,z)$ belongs to $\groupG/\groupP_{Bc}$ then $(x,y,z)$ belongs to $\groupG/\groupP_{ABc}$.
\end{lemma}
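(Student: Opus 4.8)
The plan is to turn the three incidence conditions into a single statement about cosets of $\groupP_c$ and to settle it by descending to the Levi factor. Throughout I would use the elementary fact, immediate from the Lie-algebra description of the parabolics, that $\groupP_S\cap\groupP_T=\groupP_{S\cup T}$ for subsets $S,T$ of the simple roots: a negative root lies in $\mathfrak{p}_S$ precisely when its support avoids $S$, so the roots common to $\mathfrak{p}_S$ and $\mathfrak{p}_T$ are exactly those avoiding $S\cup T$. In particular $\groupP_{Ac}=\groupP_A\cap\groupP_c$, $\groupP_{Bc}=\groupP_B\cap\groupP_c$, and each $\groupG/\groupP_{\bullet}$ sits inside the product of the single-node flag varieties as the set of compatible tuples. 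First I would exploit transitivity of $\groupG$ on $\groupG/\groupP_c$ to move $z$ to the base point $z_0=e\,\groupP_c$, whose stabiliser is $\groupP_c$. With $x_0=e\,\groupP_A$ and $y_0=e\,\groupP_B$, the hypothesis $(x,z_0)\in\groupG/\groupP_{Ac}$ says exactly $x=p\,x_0$ for some $p\in\groupP_c$, and likewise $y=q\,y_0$ with $q\in\groupP_c$, while the conclusion asks for a single $g\in\groupP_c$ with $g\,x_0=x$ and $g\,y_0=y$. Since $\mathrm{Stab}_{\groupP_c}(x_0)=\groupP_c\cap\groupP_A=\groupP_{Ac}$ and $\mathrm{Stab}_{\groupP_c}(y_0)=\groupP_{Bc}$, writing $g=p\,h$ with $h\in\groupP_{Ac}$ shows that such a $g$ exists iff $q^{-1}p\in\groupP_{Bc}\,\groupP_{Ac}$. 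As $q^{-1}p$ is an arbitrary element of $\groupP_c$, the whole lemma collapses to the group identity
\[
\groupP_{Bc}\,\groupP_{Ac}=\groupP_c\,.
\]

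To prove this I would pass to the Levi decomposition $\groupP_c=\mathsf{L}_c\ltimes\mathsf{U}_c$. Both $\groupP_{Ac}$ and $\groupP_{Bc}$ contain the unipotent radical $\mathsf{U}_c$ (every positive root subgroup lies in every parabolic containing $\groupB$), so, $\mathsf{U}_c$ being normal in $\groupP_c$, the identity is equivalent to $\groupP^{\mathsf{L}_c}_{A}\,\groupP^{\mathsf{L}_c}_{B}=\mathsf{L}_c$, a product of two parabolics inside the reductive group $\mathsf{L}_c$ whose Dynkin diagram is the original one with the node $c$ deleted. This is the crux, and the only place where the geometry of the diagram is used: because a simply-laced Dynkin diagram is a tree, deleting $c$ breaks it into a forest, and in the situation in which the lemma is applied $A$ and $B$ lie in \emph{different} connected components of this forest. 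The simple factors of $\mathsf{L}_c$ attached to distinct components commute; the parabolic $\groupP^{\mathsf{L}_c}_{A}$ constrains only the factors meeting $A$ while containing wholesale every factor meeting $B$ (and the torus), and symmetrically for $\groupP^{\mathsf{L}_c}_{B}$. Multiplying the two therefore recovers every simple factor together with the full torus, i.e. all of $\mathsf{L}_c$, and lifting back through $\mathsf{U}_c$ gives $\groupP_{Bc}\groupP_{Ac}=\groupP_c$. The same argument with $A$ and $B$ interchanged gives the opposite-order product, so the order is immaterial.

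I expect the genuine difficulty to be concentrated entirely in this last step. The identity $\groupP_{Bc}\groupP_{Ac}=\groupP_c$ does fail when $A$ and $B$ are \emph{not} separated by $c$: for instance in $\Sl_4$ a line and a plane both contained in a fixed hyperplane need not be nested, which is exactly the failure of $\groupP_{Bc}\groupP_{Ac}=\groupP_c$ for $A$ and $B$ adjacent in the diagram obtained by deleting $c$. Thus the proof must make explicit use of the fact that $c$ disconnects $A$ from $B$ in the Dynkin tree, and it is precisely this hypothesis that the inductive application in Lemma~\ref{thm:lm} supplies. Everything preceding the crux — the intersection formula for parabolics, the normalisation of $z$, and the reduction to a single double coset — is formal and can be recorded once and reused.
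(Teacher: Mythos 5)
Your proof is correct and reaches the conclusion by a genuinely different, and in fact tighter, route than the paper. Both arguments reduce the lemma to the same group-theoretic core: after using the hypotheses to write $x=p\,x_0$, $y=q\,y_0$ with $p,q\in\groupP_c$, one must factor an arbitrary element of $\groupP_c$ as an element of $\groupP_{Bc}$ times an element of $\groupP_{Ac}$. The paper does this by parametrising a purportedly dense subset of $\groupP_c$ as an ordered product $\prod_{a\neq c}e^{c_a E_{-\alpha_a}}\,b$ and splitting that product between $\groupP_A$ and $\groupP_B$; you instead prove the exact identity $\groupP_{Bc}\,\groupP_{Ac}=\groupP_c$ by quotienting by the unipotent radical and noting that each simple factor of the Levi of $\groupP_c$ is wholly contained in at least one of the two induced parabolics. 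Your route buys an exact statement with no appeal to Zariski density, and it isolates the hypothesis that actually makes the factorisation work.

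That last point is a genuine catch rather than a cosmetic improvement. As literally stated the lemma is false: in $\Sl_4$ take $c=3$, $A=\{1\}$, $B=\{2\}$, $z$ a hyperplane, $x$ a line in $z$, and $y$ a plane in $z$ not containing $x$; both hypotheses hold but $(x,y,z)$ is not a complete flag. Correspondingly, the paper's set $\left\{\prod_{a\neq c}e^{c_aE_{-\alpha_a}}\,b\right\}$ has dimension at most $(\rank-1)+\dim\groupB$ and so fails to be dense in $\groupP_c$ whenever two nodes other than $c$ are adjacent. Both the statement and the factorisation are rescued precisely by your separation hypothesis --- $c$ disconnects $A$ from $B$ in the Dynkin tree --- which does hold in the only place the lemma is invoked, namely the induction of Lemma~\ref{thm:lm}, where the new node attaches to the already-treated connected set through the single node $c$. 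Your write-up makes explicit what the paper leaves implicit; keep that.
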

\begin{proof}
Because all properties can be viewed as defined \via polynomial equations \eqref{eq:Plure}, it is enough to prove the statement for a dense set of points $x,y,z$. We have $(x,z)=g_1\cdot (x_0,z_0)$ and $(y,z)=g_2 \cdot (y_0,z_0)$, where $x_0,y_0,z_0$ are the points of the standard partial flags (corresponding to the highest-weight vectors) and $g_1,g_2$ are some group elements.  We know that $g_1\cdot z_0=g_2 \cdot z_0$ and hence $g_1^{-1}g_2\in \groupP_{c}$. A dense set of elements of $P_c$ can be represented as $\prod\limits_{a\neq c}e^{c_{a}E_{-\alpha_a}} b$, where $b\in\groupB$ and $c_{a}$ are complex numbers, and an order in the product is chosen such that $\alpha_a$ with $a\in A$ are to the right compared to $\alpha_a$ with $a\in B$. Then, for any $\pi_c$ belonging to this dense set, we can write $\pi_c=\pi_A \pi_B$ for some $\pi_{A}\in P_{A}$, $\pi_B\in P_B$. Hence $(x,y,z)=g(x_0,y_0,z_0)$ for $g=g_1\pi_A=g_2\pi_B^{-1}$.
\end{proof}
\ 
\newline
\newline
It is clear, by the direct pattern recognition, that the projection properties \eqref{eq:pr} are instances of the \Plucker relations \eqref{eq:Plure}. The next statement establishes that all the \Plucker relations are encoded into the Q-system:

\begin{theorem}
\label{thm:ff}
$Q_{(a)}$ -- the Q-vectors of an extended Q-system -- are \Plucker coordinates of a fused flag.
\end{theorem}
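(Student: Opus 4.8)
The plan is to reduce the statement to the two elementary membership conditions supplied by Lemma~\ref{thm:lm} and then recognise each of them as one of the projection relations \eqref{eq:pr} that already belong to the defining data of an extended Q-system. Since the Q-vectors $Q_{(a)}$ are only defined projectively (their overall scale fixed by the quantisation relations), it makes sense to ask whether they define maps $Q_{(a)}:\Sigma\to\groupG/\groupP_a$. Once this is established, Lemma~\ref{thm:lm} lets the whole fused-flag property \eqref{eq:ffpr} — which a priori must hold for every Coxeter height function $p$ — collapse to the single requirement that $(Q_{(a)},Q_{(a')}^{\pm})\in\groupG/\groupP_{aa'}$ for every pair of adjacent Dynkin nodes $a,a'$. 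This is the crucial simplification: it converts an infinite family of conditions indexed by height functions into finitely many conditions attached to the edges of the Dynkin diagram.

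Next I would verify the two resulting families directly. By the discussion around \eqref{eq:Plure}, membership $Q_{(a)}\in\groupG/\groupP_a$ is equivalent to $\left(Q_{(a)}\otimes Q_{(a)}\right)_{L(\omega)}=0$ for all $\omega<2\omega_a$, and $(Q_{(a)},Q_{(a')}^{\pm})\in\groupG/\groupP_{aa'}$ is equivalent to $\left(Q_{(a)}\otimes Q_{(a')}^{\pm}\right)_{L(\omega)}=0$ for all $\omega<\omega_a+\omega_{a'}$. Both families are precisely the projection relations \eqref{eq:pr} evaluated on the multisets $A=(a,a)$ and $A=(a,a')$ with the shift assignment ``$n=0$ on even nodes and $n=\pm 1$ (a common sign) on odd nodes''. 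Since $a$ and $a'$ have opposite parity the pair relation uses shifts $(0,\pm 1)$, while the two factors of the single-node relation share the same node and hence carry equal shifts; in both cases this is exactly the shift pattern singled out after \eqref{eq:pr} as easy to verify. As these relations are part of the definition of the extended Q-system — and, by the universality argument behind Theorem~\ref{thm:unie}, hold identically for any extended Q-system regardless of its ODE/IM origin — both membership conditions are automatic.

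The step I expect to demand the most care is the reduction itself, namely checking that Lemma~\ref{thm:lm} genuinely applies: that its hypothesis ``$Q_{(a)}$ maps into $\groupG/\groupP_a$'' is met, and that its conclusion covers every Coxeter height function, including non-alternating ones such as $(p_1,p_2,p_3)=(0,1,2)$ in $A_3$. Here I would rely on the inductive proof of Lemma~\ref{thm:lm} through Lemma~\ref{thm:ABc}, which assembles a global flag out of the edge-wise partial flags, so that no height-function-dependent \Plucker relation escapes verification. A secondary subtlety is confirming that the single-node relations really are instances of \eqref{eq:pr}: one must ensure that the dominant eigenvalue $2\mu_a$ of $\Lambda$ on $Q_{(a)}\otimes Q_{(a)}$ is not an eigenvalue of $\Lambda_{L(\omega)}$ for any sub-maximal $\omega$, and that the cone of applicability — of the full width $h$ afforded by equal shifts — is large enough; both hold because $2\mu_a$ is strictly maximal along the fastest-descent direction and occurs only in the top component $L(2\omega_a)$. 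With these points settled, every \Plucker relation \eqref{eq:Plure} required by the fused flag is encoded in the extended Q-system, which is the assertion of the theorem.
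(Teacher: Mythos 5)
Your proposal is correct and follows essentially the same route as the paper: the paper's proof likewise verifies the projection relations \eqref{eq:pr} for the alternating shift pattern ($n_i=0$ on even nodes, $n_i=\pm1$ with a common sign on odd nodes), reads these off as the statement that $Q_{(a_i)},Q_{(a_{i'})}^{\pm}$ are \Plucker coordinates of $\groupG/\groupP_{a_ia_{i'}}$ for adjacent nodes, and then invokes Lemma~\ref{thm:lm} to assemble the full fused flag. Your additional remarks — checking the single-node membership $Q_{(a)}\in\groupG/\groupP_a$ and the eigenvalue/cone-of-applicability conditions behind the WKB argument — are points the paper leaves implicit, but they do not change the structure of the argument.
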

\begin{proof}
It is easy to establish using the WKB analysis that \eqref{eq:pr} holds in the case $n_i-n_{i'}=\pm 1$, same sign for all $i,i'$ such that $a_i$ is any even node and $a_{i'}$ is any odd node. This relation implies that $Q_{(a_i)},Q_{(a_{i'})}^\pm$ are \Plucker coordinates of $\groupG/\groupP_{a_ia_{i'}}$. Then use Lemma~\ref{thm:lm}.
\end{proof}
A fused flag shall be called non-degenerate if, for all $a$ and any $k$ and $n_1,
\ldots, n_k$, the \Plucker vectors $Q_{(a)}^{[n_1]}$, $Q_{(a)}^{[n_2]}$, ...\,, $Q_{(a)}^{[n_k]}$ span a vector space of the maximal possible dimension provided the fused flag condition is satisfied. In the explicit physical systems, the non-degeneracy holds for all but a discrete or even a finite set of spectral parameter values. These values of the spectral parameter are related to the inhomogeneities of the spin chain, they are a part of the input information about the system allowing to fix its spectrum. As was discussed on page~\pageref{page:singularities},  we should first consider a domain that avoids these singularities where all relations between Q-functions can be freely used and then approach the singularities using analytic continuation. In a sense, the spectral parameter can be used as a regulator.

An interesting question arising is whether  being a non-degenerate fused flag implies all the other relations between Q-functions. Using the dense set argument of Theorem~\ref{thm:uni}, we can give a positive answer if we can find an algorithm to generate all the $Q$-functions from $Q_{(a),1}$ using the fused flag properties only. We can show that the fused flag condition implies \eqref{eq:QQi} and hence we can reproduce the Weyl-orbit Q-system from the fused flag. Equation \eqref{eq:QQi} allows also computing, although using general position assumptions in contrast to relations used in Theorem~\ref{thm:unie}, the extended Q-system for all cases except for $E_8$. The case of $E_8$ is more problematic because it does not have an irrep with all components being on the Weyl orbit, and we are not aware how to derive the fusion property \eqref{eq:fijk} using only the fused flag properties. Hence, for the $E_8$ case, we are not certain whether each non-degenerate fused flag is an extended Q-system, however we conjecture that it is. 

\subsection{Opers}
\label{sec:opers}
In this subsection we establish a relation between (finite-difference) opers and fused flags for any simply-laced Lie algebra.  For the $\sl_{\rank+1}$ case, the concept of the fused flag was introduced in \cite{Kazakov:2015efa}, and its relation to the oper formalism became clear after \cite{Koroteev:2018jht}, see also \cite{Koroteev:2020mxs} for the $\sl_{\infty}$ case. Below, we shall use the explicit $\sl_3$ example as an illustration, it also serves as a good link to \cite{STS1,STS2} where finite-difference opers are introduced as a generalisation of differential opers.

Definitions of opers vary somewhat across the literature, we shall rely on the ones in \cite{Frenkel:2020iqq}. This paper works on the level of a principal $\groupG$-bundle over $\Sigma$, its reduction to $\groupB$-bundles \etc, but we shall simplify the exposition and write formulae locally and in a certain gauge (equivalently, in a local trivialisation), and so the definitions shall be adapted accordingly.

\paragraph{Generic oper} First we introduce a generic finite-difference oper dubbed $(\groupG,\q)$-oper in \cite{Frenkel:2020iqq}. Two objects  enter its definition: The first one is a finite-difference connection which can be thought of as a $\groupG$-valued function $U(z)$. Informally, it is the Wilson line $U(z)=Pe^{\int_{\spm}^{\q\,\spm}A(z')dz'}$ though $A$ itself need not be defined. The  second object is a $\spm$-dependent complete flag which we shall denote as $\CF(z)\in \groupG/\groupB$. 

An oper is a pair $(U(z),\CF(z))$  considered modulo gauge transformations which satisfies the following criterium: In a gauge where $\CF(z)$ is a standard flag (corresponding to the highest-weight vector in the sense of \Plucker coordinates) at each point $z$, the connection $U(z)$ should be an element of the Bruhat cell $B s B$~\footnote{For a definition and basic properties of the Bruhat decomposition, see \eg \textsection{23.4} of \cite{Fulton-Harris}.}, where $s$ is a representative of a Coxeter element of the Weyl group. Explicitly, we should be able to represent $U(z)$ as
\be
\label{eq:operc1}
U(z)= n(z)\,\prod_{\alpha\in \Delta}s_{\alpha}\, b(z)\,,
\ee
where $b(z)\in \groupB$, $n(z)\in\groupN$, $s_{\alpha}$ are representatives of Weyl reflections \wrt to simple roots, \eg the ones defined by \eqref{eq:sst}, and the order in which the product is taken corresponds to the choice of a Coxeter element.
\newline
\newline
As an example, consider  the $\sl_3$ case and recast the conjugate Baxter equation \eqref{eq:KLWZ2}, explicitly $Q^{[-3/2]}-T_{(1)}^+Q^{[1/2]}+T_{(2)}^{[2]} Q^{[5/2]}-Q^{[9/2]}=0$, as a matrix linear equation
\be\label{eq:LB}
\Phi^{++}=U\,\Phi\,,\quad U=
\begin{pmatrix}
T_{(2)}^{[3/2]} & -T_{(1)}^{[1/2]} & 1
\\
1 & 0 & 0
\\
0 & 1 & 0
\end{pmatrix}\,,
\ee
with the matrix of solutions being 
\be
\label{eq:solmat}
\Phi=\begin{pmatrix}
 Q^{1[2]} &  Q^{2[2]}  &  Q^{3[2]}
\\
 Q^{1} & Q^{2}  &  Q^{3}
\\
Q^{1[-2]} &  Q^{2[-2]}  & Q^{3[-2]}
\end{pmatrix}\,.
\ee

Elements of $\groupB$ can be represented as $\left(\begin{smallmatrix} * & * &  * \\ 0 & * & * \\ 0 & 0 & *\end{smallmatrix}
\right)$ with non-zero diagonal entries, elements of $\groupN$ are the same type matrices with $*=1$ on the diagonal. Elements of $\groupG/\groupB$ are then $3\times 3$ matrices modulo right multiplication of $\groupB$. We can interpret them as an ordered set of vectors $(V,V',V'')$  which must obey $V\wedge V'\wedge V''=1$ and store physical information in $V$ and $V\wedge V'$ that are considered up to a normalisation and define a complete flag $\emptyset\subset \mathbb{C}\subset \mathbb{C}^2\subset \mathbb{C}^3$.

The connection $U$ of \eqref{eq:LB} can be factorised as $U=
\left(\begin{smallmatrix}
1 & T_{(2)}^{[3/2]} & -T_{(1)}^{[1/2]} 
\\
0 & 1 & 0
\\
0 & 0 & 1
\end{smallmatrix}
\right)
\left(
\begin{smallmatrix}
0 & 0 & 1
\\
1 & 0 & 0
\\
0 & 1 & 0
\end{smallmatrix}
\right)
$. The second matrix in the product is a Coxeter element and so this factorisation is of the form \eqref{eq:operc1}.  Thus the pair $(U(z),\CF(z))$, where $\CF(z)$ is the identity matrix, forms an oper.

\paragraph{Relation to fused flag} Let us first understand a geometric interpretation of the condition \eqref{eq:operc1}. Let explicitly the product over simple roots be
\be
\label{eq:420}
\prod_{\alpha\in \Delta}s_{\alpha}=s_{\alpha_{a_1}}\ldots s_{\alpha_{a_{\rank}}}\,,
\ee
where $a_i$ is a permutation of $(1,2,\ldots,\rank)$. For a given $k$, define sets $A_k=(a_1,\ldots,a_k)$, $B_k=(a_{k+1},\ldots,a_{\rank})$. Define correspondingly $s_{A_k}=s_{\alpha_{a_1}}\ldots s_{\alpha_{a_k}}$ and $s_{B_k}=s_{\alpha_{a_{k+1}}}\ldots s_{\alpha_{a_\rank}}$. Then represent the standard complete flag as $(x_k,y_k)$, where $x_k$ is the standard partial flag of $\groupG/\groupP_{A_k}$ and $y_k$ is the standard partial flag of $\groupG/\groupP_{B_k}$. We note that $U(z)x_k=n(z)s_{A_k} x_k$ since $x_k$ is invariant under action of $\groupB$ and $s_{B_k}$. On the other hand, $n(z) s_{A_k}y_k=y_k$. Hence we conclude that
\be
\label{eq:operc2}
(U(z)x_k,y_k) \in \groupG/\groupB\,,\quad{\rm for\ } k=0,1,\ldots, \rank\,.
\ee
That is we can parallel-transport using $U$ only a special subset of \Plucker coordinates, $V_{(a_1)},\ldots, V_{(a_k)}$, and still remain in the maximal flag.

The argumentation to derive \eqref{eq:operc2} from \eqref{eq:operc1} can be reversed: If \eqref{eq:operc2} holds and all the points $(U(z)x_k,y_k)$ are distinct then $U(z)$ is of the form \eqref{eq:operc1}. 
\newline
\newline
For our $\sl_3$ example, let ${\bf e}_i$ be the standard basis vectors. Then $({\bf e}_1, {\bf e}_1\wedge {\bf e}_2)$ is the standard flag. We check that $(U{\bf e}_1, {\bf e}_1\wedge {\bf e}_2)$ is a flag because $U{\bf e}_1=T^{[3/2]}_{(2)}{\bf e}_1+{\bf e}_2$.  To see that this is a non-trivial property, we remark that $({\bf e}_1, U {\bf e}_1\wedge U{\bf e}_2)$ is not a flag since $U^{-1}{\bf e}_1={\bf e}_3$.
\newline
\newline
 In a gauge where the flag $\CF(z)$ is standard, all information about the oper is concentrated in the connection $U(z)$. Let us now perform a gauge transformation~\footnote{We can potentially spoil some nice analytic structure in this way but we do not loose information.} to make the connection trivial $U(z)=\Id$. In this gauge, all information is transferred to the flag $\CF(z)$. 

Remarkably, the oper condition in this gauge can be rewritten as the one of a fused flag. Indeed, let $V_{(a)}$ be the \Plucker coordinates of $\CF(z)$ in this new gauge. Parallel transport with respect to the trivial connection $U$ does not change them: $V_{(a)}^{\rm pt}(\q z)=V_{(a)}(z)$. On the other hand, $V_{(a)}(z)$ as functions of $z$ are non-trivial. Property \eqref{eq:operc2}, together with the obvious $(x_k,y_k)\in \groupG/\groupB$, becomes in the new gauge
\be
\label{eq:operc3}
(V_{(a_1)}^{-},\ldots,V_{(a_k)}^{-},V_{(a_{k+1})}^{\pm},\ldots,V_{(a_{\rank})}^{\pm}) \in \groupG/\groupB\,,\quad{\rm for\ } k=0,1,\ldots, \rank\,.
\ee
Now, recall that one can assign the Coxeter height function $p$ to the Coxeter element \eqref{eq:420}, see Section~\ref{sec:33}. Using this function, identify 
\be
\label{eq:1203}
V_{(a)}=Q_{(a)}^{[p_a]}\,.
\ee
Condition \eqref{eq:operc3} ensures that $Q_{(a)}$ satisfy conditions of Lemma~\ref{thm:lm} and hence define a fused flag.

We hence see that a non-degenerate fused flag is an oper realised in a particular gauge. A fused flag can be also gauged \cite{Kazakov:2015efa}. A gauged non-degenerate fused flag considered modulo gauge transformations is hence an equivalent of a finite-difference oper. There is however an interesting caveat. The definition of an oper involves a choice of the Coxeter element. In contrast, the fused flag does not require to make this choice. The choice is being made only when we link the fused flag and the oper.
\newline
\newline
In the $\sl_3$ example, the fused flag gauge is realised for $\CF=\Phi^{-1}$, where $\Phi$ is given by \eqref{eq:solmat}. To compute $\Phi^{-1}$ in a suggestive way, use the relation between Q-functions and their Hodge duals \eqref{eq:Hodge}, explicitly $Q^a=\epsilon^{abc}  Q_b^+ Q_c^-$, and then invoke the projection relation $Q^a  Q_{a}^\pm=0$, the quantisation condition $Q^a Q_{a}^{[\pm 3]}= 1$, and the bilinear formula for transfer matrices \eqref{eq:TQQ}. The result of inversion is 
\be
\CF=\Phi^{-1}=
\begin{pmatrix}
 Q_1^{-} &  Q_1^{[-3]}  & Q_1^{[-5]}
\\
 Q_2^{-} &  Q_2^{[-3]}  &  Q_2^{[-5]}
\\
 Q_3^{-} & Q_3^{[-3]}  &  Q_3^{[-5]}
\end{pmatrix}
\begin{pmatrix}
 1 & -T_{(2)}^{[-1/2]} & T_{(1)}^{[-3/2]} 
 \\
 0 & 1 & -T_{(2)}^{[-5/2]} 
 \\
 0 & 0 & 1
\end{pmatrix}\,.
\ee
We see that  \eqref{eq:1203}  becomes $V_{(1)}=Q_{(1)}^-$ and $V_{(2)}=Q_{(1)}^-\wedge Q_{(1)}^{[-3]}=Q_{(2)}^{[-2]}$ and we reproduce the construction discussed in the introduction, \cf \eqref{eq:fus}.

\paragraph{ODE/IM perspective} ODE/IM provides us with an interesting connection between the extended Q-systems and opers. Using the WKB asymptotics \eqref{eq:WKB0} we can expect that the following Wilson line in the $x$-plane~\footnote{Not to confuse with informal Wilson line in the interpretation of $U(z)$, these are entirely different objects!} connects $\Psi$-function at the origin and the infinity
\be
\label{eq:gauge1}
Q_{(a)}^{[p_a]}(\spm)=\spm^{-\frac{\rho^{\vee}}{h\,M}}\left(\lim_{x_0\to \infty} P e^{\int_{x_0}^0 A(x',\spm)dx'}x_0^{-M\rho^{\vee}}e^{-\Lambda\frac{x_{0}^{M+1}}{M+1}}\right)\mathsf{U}_{(a)}^{[p_a]}\,.
\ee
Here, we remind, $\mathsf{U}_{(a)}^{[p_a]}$ are the eigenvectors of $\Lambda_{L(\omega_a)}$ with the eigenvalue $\gamma^{p_a/2}\mu_a$, for $a=1,\ldots,\rank$. As explained in Section~\ref{sec:33}, they are the highest-weight vectors in a basis where $\Lambda$ belongs to a Cartan subalgebra $\mathfrak{h}'$ and for a specific choice of simple roots. Hence, in this basis, they are the \Plucker coordinates of the standard flag.

Furthermore, using \eqref{eq:Symanzik} and \eqref{eq:easy}, we can compute that
\be
\label{eq:gauge2}
Q_{(a)}^{[p_a]}(\q\,\spm)=\spm^{-\frac{\rho^{\vee}}{h\,M}}\left(\lim\limits_{x_0\to \infty} P e^{\int_{x_0}^0 A(x',\spm)dx'}x_0^{-M\rho^{\vee}}e^{-\Lambda\frac{x_{0}^{M+1}}{M+1}}\right)\gamma^{\rho^{\vee}}\mathsf{U}_{(a)}^{[p_a]}\,.
\ee
Recall that $\gamma^{\rho^{\vee}}$ is a Coxeter element in the same basis where $\mathsf{U}_{(a)}^{[p_a]}$ define the standard flag. We hence can view this Wilson line as a gauge transformation from a fused flag gauge (where connection $U(z)$ is trivial) to a standard flag gauge (where $U(z)$ is a Coxeter element).

This very plausible explanation has however a drawback. The limit $x_0\to\infty$ is ill-defined, in particular due to Stokes phenomena. At this moment, the best we can do is to declare $\left(\lim\limits_{x_0\to \infty} P e^{\int_{x_0}^0 A(x',\spm)dx'}x_0^{-M\rho^{\vee}}e^{-\Lambda\frac{x_{0}^{M+1}}{M+1}}\right)$ to be such a group element depending on $\spm$ that \eqref{eq:gauge1} and \eqref{eq:gauge2} hold. It would be interesting to provide an intrinsic self-consistent definition of this Wilson line. 

\paragraph{Miura oper}
A generic oper can be further dressed with additional structures or constrained by additional requirements giving rise to opers of specific type. Probably the most important example is a Miura oper which is an oper $(U(z),\CF(z))$ supplemented with a solution $\CF_-$ of the finite difference equation $\CF_-^{++}=U\CF_-$, where $\CF_-(z)\in \groupG/\groupB_-$ with $\groupB_-$ being the Borel subgroup opposite to $\groupB$.

To understand the significance of $\CF_-$, consider the fused flag gauge. Recall that the relations of the extended Q-system are invariant under action of $\groupG$~\footnote{More accurately, they are invariant under action of $\groupG$-functions that  periodically depend on $z$ but this is of no relevance for the argument because if a function enters a relation, only its even integer shifts can be featured in the same relation. Likewise, we ignore potential periodic dependence of $\CF_-$ on $z$.}, and hence there is no distinguished way to select components of vectors $Q_{(a)}$. However, $\CF_-(z)$ is constant in this gauge and it is not invariant under $\groupG$ breaking the symmetry to (a conjugate of) $\groupB_-$. In physical terminology, $\CF_-\in\groupG/\groupB_-$ is an order parameter (though the symmetry is not broken spontaneously but by our choice of $\CF_-$). Now, consider Bruhat decomposition $\CF_-=b_- s b_-'/\groupB_-$, where $b_-,b_-'\in\groupB_-$ and $s$ is a Weyl group representative, and perform the global symmetry transformation: $Q_{(a)}\to\tilde Q_{(a)}=s^{-1}b_-^{-1} Q_{(a)}$. After this transformation, $\CF_-$ becomes the standard flag which is preserved by action of $\groupB_-$. Whereas $\tilde Q_{(a)}$ overall transforms under this action, its highest-weight component $\tilde Q_{(a),1}$ is the only component that does not transform save for a constant normalisation and in this sense it is distinguished.  Hence, the extra data $\CF_-$ is a way to declare which components of vectors $Q_{(a)}$ should be called $Q_{(a),1}$ -- Q-functions on the Dynkin diagram -- and used, for instance, in conventional Bethe equations and computation of quantum eigenvalues. The flag manifold $\groupG/\groupB_-$ is the configuration space for all possible such choices, we can identify it with the one in \cite{Mukhin2003,MV05}.\label{pg:miura}

Let us also give a different perspective on the Miura oper. Consider a gauge in which $\CF(z)$ is the standard flag and perform a mixed Bruhat decomposition $\CF_-(z)=b(z)s b_-(z)/\groupB_-$, note that $b(z)\in\groupB$. We expect $\CF_-$ to be in generic position for almost all values of the spectral parameter, and then $s=\Id$ for this domain~\footnote{by contrast, the above-discussed Bruhat decomposition of $\CF_-$ in the fused flag gauge does not depend on $z$ and can have any $s$.}. Perform now the gauge transformation by $b^{-1}(z)$. Then, in this new Miura gauge, $\CF(z)$ still remains the standard flag but $\CF_-(z)$ is also the standard flag. Hence all the information about the Miura oper is concentrated in the connection $U$, this connection alone or rather a finite-difference operator $\mathcal{D}^2-U$ is also encountered as a definition of Miura oper in literature.
\newline
\newline
For the $\sl_3$ example, selecting concrete solutions of Baxter equation for the role of $Q^a$, $a=1,2,3$, in \eqref{eq:solmat}, modulo right action of lower-triangular matrices, is what defines $\CF_-$ in the gauge \eqref{eq:LB}. The mentioned right action can modify $Q^3$ only by an inessential overall normalisation but it introduces ambiguity in $Q^a$ for  $a=2,3$, by adding $Q^b$ with $b>a$. Correspondingly the Q-functions of the fused flag nailed by $\CF_-$ are $Q_{12}=Q^3$ and $Q_1=W(Q^2,Q^3)$.

If one takes $\CF_-=\Phi$ and choose $b(z)\in\groupN$ in the decomposition $\CF_-=b(z)b_-(z)$ then $U$ in the Miura gauge becomes
\be
U_{\rm Miura}=\left(\begin{matrix}\tilde\Lambda_1^{[-1/2]} & 0 & 0
\\
1 & \tilde\Lambda_2^{[-1/2]} & 0
\\
0 & 1 & \tilde\Lambda_3^{[-1/2]} \end{matrix}\right)\,,
\ee
where $\tilde\Lambda_a$ and $\Lambda_a$ of \eqref{eq:Miura20} are related as 
\be
(\tilde\Lambda_3-\CD^{-2})(\tilde\Lambda_2-\CD^{-2})(\tilde\Lambda_1-\CD^{-2})=(1-\Lambda_3\CD^{-2})(1-\Lambda_2\CD^{-2})(1-\Lambda_1\CD^{-2})\,;
\ee
They are natural quantities for the factorised form of the conjugate Baxter equation $Q^{a[-\frac{3}{2}]}(\tilde\Lambda_3-\overleftarrow\CD^{-2})(\tilde\Lambda_2-\overleftarrow\CD^{-2})(\tilde\Lambda_1-\overleftarrow\CD^{-2})=0$.
\section{Applications}\label{sec:Applications}

\subsection{Solving Hirota equation}\label{sec:SolvingHirota}
In this subsection we provide a solution to the so-called Y- and T-systems in terms of Q-functions. These systems were considered for all simple Lie algebras, see \cite{Kuniba:2010ir} for a review and references therein, we focus on the simply-laced cases only.

Y-systems appears in the context of thermodynamic Bethe Ansatz. It is a collection of $Y_{a,s}$, where $a$ run through the nodes of the Dynkin diagram and, depending on the model, $s\in\mathbb{Z}$ or $s\in\mathbb{Z}_{\geq 0}$. For the simply-laced case, these functions satisfy the following condition
\be
    \frac{Y^+_{a,s}Y^-_{a,s}}{Y_{a,s+1}Y_{a,s-1}} = \frac{\prod\limits_{b,C_{ab}=-1}(1+Y_{b,s})}{(1+Y_{a,s-1})(1+Y_{a,s+1})}\,.
\ee
Upon substitution $Y_{a,s}=\frac{1}{T_{a,s-1}T_{a,s+1}}\prod\limits_{b,C_{ab}=-1}T_{b,s}$, one obtains the Hirota equation (T-system)
\be\label{eq:TT}
T_{a,s}^+T_{a,s}^--T_{a,s+1}T_{a,s-1}=\prod_{b,C_{ab}=-1}T_{b,s}\,.
\ee
Apart from appearing in TBA, $T$-functions that satisfy \eqref{eq:TT} have also an interpretation as transfer matrices with auxiliary space being a Kirillov-Reshetikhin module.

Similarity in structure of \eqref{eq:TT} and \eqref{eq:QQi} is very suggestive. Using the $\sl_{\rank+1}$ solution \eqref{eq:TQQ} as a further insight, it is then not difficult to guess the following ansatz for T-functions
\be\label{eq:TfromQ}
T_{a,s}=\langle Q_{(a)}^{[s]},\tilde Q_{(a^*)}^{[-s]}\rangle\,,
\ee
where $Q$ and $\tilde Q$ are two \apriori different Q-systems.

Here is a proof that this ansatz indeed solves \eqref{eq:TT}:
\be
&&T_{a,s}^+T_{a,s}^--T_{a,s+1}T_{a,s-1} =
\nonumber
\\
&&\left\langle \left(Q_{(a)}^+\wedge Q_{(a)}^-\right)^{[s]},\left(\tilde Q_{(a^*)}^+\wedge \tilde Q_{(a^*)}^-\right)^{[-s]} \right\rangle 
=
\left\langle \left(Q_{(a)}^+\wedge Q_{(a)}^-\right)_{L(\omega_{\rm max})}^{[s]},\left(\tilde Q_{(a^*)}^+\wedge \tilde Q_{(a^*)}^-\right)_{L(\omega_{\rm max})}^{[-s]} \right\rangle
\nonumber\\
&&=
\nonumber\\
&&\left\langle \left(\bigotimes\limits_{b,C_{ab}=-1} \hspace{-.8em}Q_{(b)}\right)_{L(\omega_{\rm max})}^{[s]},\left(\bigotimes\limits_{b,C_{ab}=-1}  \hspace{-.8em}\tilde Q_{(b^*)}\right)_{L(\omega_{\rm max})}^{[-s]} \right\rangle
=\left\langle \left(\bigotimes\limits_{b,C_{ab}=-1} \!\!\Q_{(b)}\right)^{[s]},\left(\bigotimes\limits_{b,C_{ab}=-1} \!\!\tilde Q_{(b^*)}\right)^{[-s]} \right\rangle
\nonumber\\
&&= \prod\limits_{b,C_{ab}=-1} T_{b,s}\,,
\ee
where we used \eqref{eq:QQi} and, notably, the following projection relations of the Q-functions
\begin{align}
\label{eq:Q2}
\left(Q_{(a)}^+ \wedge Q_{(a)}^-\right)_{L(\omega)} &=0\,,
&
\left(\prod_{b,C_{ab}=-1}Q_{(b)}\right)_{L(\omega)} &=0\,,
& 
\text{for all }\ \ \omega &<\omega_{\rm max}=\sum_{b,C_{ab}=-1} \omega_b\,.
\end{align}

There are cases when the T-system has a boundary. For instance, one has $s\geq 0$ in the transfer matrix interpretation and moreover one fixes $T_{a,0}=1$ since these functions have meaning of the transfer matrices in the trivial representation. In addition, Hirota equation should work for $s=0$ if one sets $T_{a,-1}=0$. 

These features can be reproduced if we identify $Q$ and $\tilde Q$. After slight redefinitions, one sets
\be
\label{eq:Tas}
T_{a,s}=\langle Q_{(a)}^{[s+\frac \Cox2]}, Q_{(a^*)}^{[-s-\frac \Cox2]}\rangle\,.
\ee
This ansatz solves \eqref{eq:TT} and it has the following additional properties: $T_{a,0}=1$ which is the quantisation relation \eqref{eq:qrh}, and moreover $T_{a,s}=0$ for $s=-1,-2,\ldots 1-\Cox$, this is the projection relation \eqref{eq:prh}.

\subsection{Character solution}
Choose  an element of the Cartan algebra $H$ and consider the ansatz
\be\label{eq:charsol}
Q_{(a)}=\spm^{H}\Ac_{(a)}\,,
\ee
where $\Ac_{(a)}$ are vectors that do not depend on the spectral parameter. For this ansatz $Q^{[2]}_{(a)}=\q^{H}Q_{(a)}$ and hence all equations on Q-functions reduce to polynomial equations on $\Ac_{(a)}$. A good parameterisation for $H$ is $H=\sum_{i=1}^{\rank'} H_i \log_\q x_i$ in which case the coefficients of these polynomial equations are Laurent polynomials in $x_i$ with integer coefficients. It is pertinent to choose $H_i$ as generators in the orthogonal basis, see \eg \cite{Feger:2019tvk}. For $\sl_{\rank+1}$, this is actually the basis of $\gl_{\rank+1}$ meaning that $\rank'=\rank+1$ and that the $Q$-vector should not be sensible to the shifts $H_i\to H_i+C$ which is achieved by setting $\prod_{i}x_i=1$. For the $\so_{2\rank}$ case, the orthogonal basis is explicitly described in Section~\ref{sec:DAlgebras}.

A solution of equations for $\Ac_{(a)}$  exists always as we can conclude based on the following two facts: First, Theorems~\ref{thm:uni}~and~\ref{thm:unie} ensure that the extended Q-system exists for any $Q_{(a),1}$ and hence for $Q_{(a),1}=z^{\omega_a(H)}\Ac_{(a),1}$. Second, all the QQ-relations have multiplicative nature, cf.~\eqref{eq:fr}, implying that  the analytic dependence of other Q-functions on $z$ can be maintained in the form \eqref{eq:charsol} if we start from $Q_{(a),1}=z^{\omega_a(H)}\Ac_{(a),1}$. Furthermore, we see that the solution is unique if we fix the values of $\Ac_{(a),1}$. Indeed, while a solution is defined  modulo symmetry $Q_{(a)}\to g\, Q_{(a)}$, the only group elements commuting with $z^H$ are elements of the maximal torus whose action amounts in rescaling of $\Ac_{(a),1}$ assumed to be fixed.

For any $\groupG$-invariant combination $F(Q)$ of Q-functions one has $F^{[2]}=q^{H} F=F$. Hence all such combinations are independent of $z$. Furthermore, for any group element $g$, $F(Q)=F(gQ)$. On the other hand, $g Q = z^{g H g^{-1}} g \Ac_{(a)}$ can be interpreted as the solution \eqref{eq:charsol} of the Q-system with $H\to g H g^{-1}$. We hence conclude that combinations $F(Q)$ are class functions of $q^H$ considered as a group element. 

In particular, $T_{a,s}$ computed by \eqref{eq:Tas} are class functions. Since they are $z$-independent and satisfy \eqref{eq:TT} they should be characters in the corresponding representations. While these are irreps in the case of $\sl_{\rank+1}$, these representations are typically reducible for the case of other Lie algebras, the reason is that they are actually irreps of the relevant quantum algebra.

We build the explicit character solution of the extended Q-system for $\so_{2\rank}$ series in Section~\ref{sec:charT}. The explicit solution for $\sl_{\rank+1}$ is given for instance by (3.9) of \cite{Kazakov:2015efa}. 

\subsection{Analytic Bethe Ansatz}\label{sec:AnalyticBetheAnsatz}
Until now, we mostly avoided discussing the explicit analytic properties of  Q-functions. Specifying them is precisely what defines the physical model we are dealing with. In this section we propose the analytic structures that are supposed to describe rational, trigonometric, and elliptic spin chains. The story is a fairly straightforward generalisation of what was done for the $A_n$ case in \cite{Krichever:1996qd}~\footnote{Following the seminal work \cite{Reshetikhin1983}, the name Analytic Bethe Ansatz is often associated to ans{\"a}tze for T-functions. We extend this terminology imposing analytic requirements directly on Q-functions.}$^,$\footnote{For $A_n$ case, one can alternatively impose analytic requirements using supersymmetrisation and working with the distinguished Q-functions only, as was realised for rational \cite{Marboe:2016yyn} and trigonometric cases \cite{Nepomechie:2020ixi}. We do not attempt to generalise this approach here.}, our parameterisation builds upon (3.62) of \cite{Ryan:2018fyo} though shift conventions are not the same.

To give a uniform presentation, we shall use the additive spectral parameter $\spa$ and agree to relate it to $\spm$ by $e^{2\pi\,u}=z$, correspondingly $e^{2\pi\,\hbar}=\q$.

The spectrum of spin chains should be described by Bethe equations
\be
\label{eq:NBAE1}
\prod_{\ell=1}^L \frac{\phi(u_{a,k}-\theta_{\ell}+\frac{\hbar}{2}m_a^{\ell})}{\phi(u_{a,k}-\theta_{\ell}-\frac{\hbar}{2}m_a^{\ell})}=
-\prod_{b}e^{2\pi\,\hbar\,C_{ab}h_b}\prod_{k'=1}^{M_b} \frac{\phi(u_{a,k}-u_{b,k'}+\frac{\hbar}{2}A_{ab})}{\phi(u_{a,k}-u_{b,k'}-\frac{\hbar}{2}A_{ab})}\,,
\ee
where $\phi(u)=u$ for the rational case, $\phi(u)=\sinh(2\pi\,u)$ for the trigonometric case, and $\partial_{u}^2\ln\phi(u)=-\wp(u)$ for the elliptic case. For the trigonometric case, it is assumed that $\hbar$ is not rational and $\q$ is not a root of unity. For the elliptic case $\hbar$ is not commensurate with the periods $1,\tau$.

For the rational and the trigonometric cases, the physical meaning of the other parameters in \eqref{eq:NBAE1} are: $[m_1^{\ell},\ldots,m_{\rank}^{\ell}]$ are Dynkin labels of the representation assigned to the $\ell$'th node (it is a quantum algebra irrep but generically it might be reducible as a representation of the Lie algebra $\algg$);  $h_b:=\omega_b(H)$~\footnote{$H$ is the same as in \eqref{eq:charsol}} specify twisted boundary conditions of the spin chain; and $\theta_{\ell}$ are inhomogeneities. For the elliptic case, there are certain restrictions on admissible values of $M_b$ since $\phi$ are not periodic functions on the torus, also physical models in the elliptic case were not built to the same level of generality as in the case of rational and trigonometric systems.

From the character solution \eqref{eq:charsol}, Bethe equations \eqref{eq:NBAE}, and using experience with $A_r$ system, it is natural to guess the following ansatz for Q-functions
\be\label{eq:ABA}
\mbox{$\displaystyle Q_{(a),i}(\spa)=N_{(a),i}\times \Ac_{(a),i}\prod_{j=1}^{\rank'}x_j^{\frac{2\pi\,u}{\hbar}\gamma_i(H_j) }\times \dressing_{a}(\spa)\times q_{(a),i}(\spa)$\,.}
\ee
It is split into four factors. The first factor is a number whose sole purpose is to adjust the normalisation such that there is an equality sign in \eqref{eq:QQi}. It has no physical importance. The second factor is the character solution \eqref{eq:charsol} $e^{2\pi u H}\Ac_{(a)}$, we just wrote it in components. The aim of the third  factor is to reproduce the \lhs (the source term) in Bethe equations \eqref{eq:NBAE}. We shall study it in a moment. Finally, the last factor is  ``polynomial''
\be
q_{(a),i}=\prod_{k=1}^{M_{(a),i}}\phi(\spa-u_{(a),i,k})\,
\ee
which, in the trigonometric case, was featured in (5.2) of Baxter's original work \cite{Baxter:1971cs} about the six-vertex model (written there as a polynomial in the multiplicative spectral parameter). Zeros of $q_{(a),1}$, $u_{a,k}\equiv u_{(a),1,k}$ satisfy conventional Bethe equations \eqref{eq:NBAE} whose explicit form is \eqref{eq:NBAE1}, $M_a=M_{(a),1}$. Zeros of $q_{(a),\sigma(1)}$, where $\sigma$ is an element of the Weyl group, satisfy Weyl-dual Bethe equations.
\newline
\newline
The dressing factor $\sigma_a$ does not depend on $i$ which reflects that all the Weyl-dual Bethe equations have structurally the same source term. By recalling that Bethe equations in terms of Q-functions are written as \eqref{eq:BAEOW} and requiring that the \lhs of \eqref{eq:NBAE1} is reproduced from $\sigma_a$ when we substitute the Ansatz \eqref{eq:ABA} into \eqref{eq:BAEOW}, one gets the following equation on $\sigma$
\be\label{eq:dreq}
\sum_{b}[C_{ab}]_{\CD} \log \sigma_{b} = - \sum_{\ell=1}^{L} [m_{a}^{\ell}]_{\CD} \log \phi(u-\theta_{\ell})\,,
\ee
where we used  a notation for ``$\CD$-deformed''  numbers $[n]_\CD:=\CD^{n-1}+\CD^{n-3}+\ldots +\CD^{1-n}$ for $n>0$, $[n]_\CD:=-[-n]_\CD$ for $n<0$. Equation \eqref{eq:dreq} is formally solved by 
\be
\log\sigma_a=-\sum_b\sum_{\ell=1}^{L} \left([C]_\CD^{-1}\right)_{ab} [m_b^{\ell}]_\CD\log \phi(u-\theta_{\ell})\,,
\ee
one can also provide a precise meaning for this formal solution, \cf \cite{Volin:2010cq}.

We see that the dressing factors are, in a sense, the inverse deformed Cartan matrices describing interaction between Bethe roots and source terms which is reminiscent of integrable relativistic integrable models where the dressing factors are the same inverse deformed Cartan matrices describing interactions between particles \cite{1998solv.int.10007Z}.
\newline
\newline
The twist factor does not have nice periodicity properties to be considered as a function on a cylinder or a torus. To improve on this issue, we can perform the gauge transformation
\begin{align}
Q_{(a)}^{\rm gt} &= e^{-2\pi\,u\,H}Q_{(a)}\,, & U^{\rm gt}= e^{-2\pi\,(u+\hbar)\,H}e^{2\pi\,u\,H}=e^{-2\pi \hbar\, H}\,.
\end{align}
In the new gauge, Q-functions do not have the $u$-dependent part of the twist factor. We get a non-trivial constant finite-difference connection $U$ instead.

The dressing factor is also not a particularly pleasant function of the spectral parameter. For a torus, this would be an object with everywhere dense set of poles, and so one must work on a universal cover to be able of defining it. However we should recall that Q-functions are projective coordinates meaning that the dressing factor nearly cancels out from the equations and the piece remaining due to non-locality is well-defined without resorting to the universal cover. Consider for instance the QQ-relation \eqref{eq:QQi}. Explicitly in terms of $q$ and in the case $H=0$ it becomes
\be
\left( q_{(a)}^+ \wedge q_{(a)}^- \right)_{L(\omega_{\rm max})} &=& J_a\times  \left(\bigotimes_{b,C_{ab}=-1} q_{(b)}\right)_{L(\omega_{\rm max})}\,,
\nonumber\\
\label{eq:514}
J_a=\prod_b \sigma_b^{-[C_{ab}]_D} &=& \prod_{\ell=1}^{L}\prod_{k=-\frac{m_b-1}{2}}^{\frac{m_b-1}{2}}\phi(u-\theta_{\ell}+k\,\hbar)\,.
\ee
Note that the dressing factor determines the position of poles and zeros where the fused flag fails to be non-degenerate. Indeed, in the above example, by remembering the relation $\left( q_{(a)}^+ \wedge q_{(a)}^- \right)_{L(\omega<\omega_{\rm max})}=0$ which holds always, we see that vectors $q_{(a)}^+$ and $q_{(a)}^-$ are collinear at $u=\theta_{\ell}+k\,\hbar$ with $k$ being in the specified by \eqref{eq:514} range. These are regular singularities of a $(G,q)$-oper in \cite{Koroteev:2018jht, Frenkel:2020iqq}.

Curiously, while the extended Q-system must be a non-degenerate fused flag almost everywhere, the prescription of degeneration points is an essential ingredient for selecting physically relevant solutions.

\paragraph{Completeness and faithfulness conjectures} Based on the results established for the $A_r$ spin chains in the vector representation \cite{2013arXiv1303.1578M,Chernyak:2020lgw} we conjecture that the extended Q-system is always the right object to correctly encode the spectrum of the corresponding integrable model, at least for spin chain nodes in a fundamental representation of the quantum algebra. In contrast, we know already for the $A_r$ case that Bethe equations \eqref{eq:NBAE} have shortcomings, see for instance discussion in \cite{Chernyak:2020lgw}. 

The completeness conjecture is that the algebraic number of extended Q-systems that verify analytic structure \eqref{eq:ABA} is equal to the dimension of the corresponding weight subspace of the Hilbert space. The number of Q-systems should be computed modulo residual symmetries. For generic twist $H$, only action of Cartan subalgebra $\mathfrak{h}\in\algg$ is a symmetry (it only changes normalisations and hence is inessential), while for zero twist $H=0$, rotation by any element of $\algg$ is a symmetry. The weight subspace is defined as a space of highest-weight vectors with respect to action of the residual symmetry~\footnote{If the symmetry is only $\mathfrak{h}$, then it is simply a space of vectors of given weight.}, of weight given by the Dynkin labels $[d_1,\ldots, d_\rank]$ computed as
\be
d_a=\sum_{\ell=1}^L m_{a}^{\ell} - \sum_{b=1}^{\rank }C_{ab}M_b\,.
\ee
Beyond the $A_r$ case, we performed a verification of the completeness conjecture by an explicit computation for various examples of $\so_8$ and $\so_{10}$ rational spin chains, this result will be published separately \cite{EV-in-preparation}.

The faithfulness conjecture is that the Bethe algebra restricted to the weight subspace is isomorphic to the algebra of  Q-functions of analytic form \eqref{eq:ABA} and satisfying equations of the extended Q-system.

In the rational and trigonometric cases, the algebra of Q-functions can be considered as a polynomial quotient ring whose variables are coefficients of Baxter polynomials.  At least in these cases, isomorphism to the Bethe algebra would imply that the spectrum of the Bethe algebra restricted to the weight subspace is simple.

\section{$D_\rank$ series}\label{sec:DAlgebras}
In previous sections we discussed properties of extended Q-systems including their geometric interpretation as fused flags, how they are used to solve Hirota equations, and their character solution. In this section we demonstrate these properties on the specific example of $\so_{2\rank}$. 
\subsection{Notations}
We use the same notation as in Section~\ref{sec:sl4so6} for vectors, spinors and co-spinors
\begin{align}
Q_{(1),\vecf{i}} &= \qV_i\,, &
Q_{(\rank-1),\alpha} &= \qS_\alpha\,, &
Q_{(\rank),\dot \alpha} &= \qC_{\dot \alpha}\,.
\end{align}
The Q-vectors in the remaining fundamental representations are rank-$a$ antisymmetric tensors $V_{(a)}$, $a\leq \rank-2$, whose components shall be written using the multi-index notation $\qV_{I} = \qV_{i_1,\dots,i_a}$. It will be also convenient to consider $V_{(a)}$ for $a=\rank-1,\rank$, these tensors correspond to non-fundamental representations.

When labelling the components of vectors and spinors we shall have the orthogonal basis of $\so_{2\rank}$ in mind. It is spanned by $\orthb_{1},\ldots,\orthb_{r}$ with the inner product $\langle \orthb_{a},\orthb_{b}\rangle = \delta_{ab}$. Simple roots of $\so_{2\rank}$ are expressed in this basis as
\be
    \alpha_a = \orthb_a -\orthb_{a+1}\,, \quad a\leq \rank-1\,, \quad
    \quad \alpha_{r} = \orthb_{r-1}+\orthb_{r}\,,
\ee
and $\langle \alpha_a,\alpha_b\rangle = C_{ab}$ is the Cartan matrix of $\so_{2\rank}$. A weight expressed in terms of Dynkin labels can be converted into the orthogonal basis using
\begin{align}\label{eq:ConvertingBasis}
    \underbrace{[0\dots010\dots 0]}_{\text{Non zero at $a\leq r-2$}} &= \sum_{b=1}^{a}\orthb_{b}\,, &
    [0\dots0 1 0] &= \frac{1}{2}\left((\sum_{a=1}^{\rank-1} \orthb_{a})-\orthb_{r}\right)\,, &  [0\dots 0 1] &= \frac{1}{2}\sum_{a=1}^{\rank} \orthb_{a}\,.
\end{align}

The weights comprising the vector representation are $\pm\orthb_{a}$, $a=1,\ldots, r$, and so we will suggestively use the set $\{1,\ldots,\rank,-\rank,\ldots,-1\}$ for the possible values of the index $i$.   $\qV_{a}$ means that $i=a$ is positive, and $\qV_{-a}$ means that $i=-a$ is negative. The square norm of a vector in the explicit index notation is $\metric^{ij}V_{i}V_{j} = 2\sum\limits_{a=1}^{\rank}\qV_a \qV_{-a}$.

While indices $\alpha,\dot\alpha$ shall assume integer values from $1$ till $2^{\rank-1}$, it will be sometimes convenient to convert $\qS_{\alpha}$ and $\qC_{\dot\alpha}$ to the Cartan notation $\qZ_{A}$ \cite{Cartan:104700}, where $A=\{a_1,\dots,a_k\}$ is a multi-index. It is defined as follows: we note that all the weights in the spinor irreps  are sums $\frac 12\sum_{a=1}^{\rank}\pm\orthb_{a}$, and then $A=\{a_1,\dots,a_k\}$ are the positions of the minus signs in the corresponding sum, $\qZ_{A}= \qC_{\dot \alpha}$ when the number of entries in $A$ is even and $\qZ_{A}=\qS_{\alpha}$ when the number of entries in $A$ is odd. For an explicit example see \eqref{eq:example}. Note that $\qZ_{\emptyset} = \qC_{1}$ and $\qZ_{r} = \qS_{1}$ are the highest-weight components of the spinors $\qC$ and $\qS$, \cf \eqref{eq:ConvertingBasis}. 

The standard QQ-relations \eqref{eq:Plucker3} involving the Q-functions on the Dynkin diagram and their immediate Weyl transforms are 
\begin{align}
    W(V_{1\dots k},V_{1\dots k-1,k+1}) 
    &= 
    V_{1\dots k+1 }V_{1\dots k-1}\,, 
    \quad 
    \nonumber
    k< r-2\,, \\
    W(V_{1\dots r-2},V_{1\dots r-3,r-1})
    &=
    V_{1\dots r-3}\psi_1 \eta_1\,, 
    \nonumber
    \\
    W(\psi_1,\psi_2) &= V_{1\dots r-2}\,, 
    \quad
    \\
    \nonumber
    W(\eta_1,\eta_2) &= V_{1\dots r-2}\,. 
\end{align}

To relate all components of spinors and vectors, we need to introduce gamma-matrices. While our expressions will be often written covariantly, we also build an explicit basis that shall be used in the character solution to the Q-system. Such a basis was constructed in Section~\ref{sec:sl4so6}, and we generalize it now to an arbitrary rank. Once again, the building blocks are the three matrices $\sigma^z,\sigma^\pm$ defined in \eqref{eq:Basic22matrices}. We use them to define the $2^\rank\times 2^\rank$ gamma-matrices as
\be\label{eq:GammaBasis}
    \Gamma_{\pm a} = -\underbrace{\sigma^{z}\otimes\dots \otimes \sigma^{z}}_{a-1\text{ times}} \otimes \sigma^{\mp} \otimes \underbrace{\mathbf{1} \otimes \dots \mathbf{1}}_{\rank-a\text{ times}}\,.
\ee
The matrices $\Gamma_{\pm a}$ satisfy the anti-commutation relations
\be
    \{\Gamma_{i},\Gamma_{j}\} = \delta_{i+j,0}\,
\ee
consistent~\footnote{We use a definition of the Clifford algebra without $2$ in front of the metric} with our usage of $\metric_{ij} =\delta_{i+j,0}$. We will write $\Gamma_{I}$ with $I$ a multi-index to denote the weighted antisymmetric product, for example $\Gamma_{ij} = \frac{1}{2}(\Gamma_{i}\Gamma_{j}-\Gamma_{j}\Gamma_{i})$.

We pick
\be
    C= (-1)^{\frac{(\rank+1)(\rank+2)}{2}}\prod_{a=1}^{\rank}(\Gamma_{a}+ \Gamma_{-a})\,
\ee
as the charge conjugation matrix. The sign in front of $C$ is for future convenience. The charge conjugation matrix can be written in terms of tensor products of $\sigma^x$ and $\sigma^y$ as shown explicitly for $\so_6$ in section~\ref{sec:sl4so6}. The product of $\Gamma_{I}$ and $C$ is either a symmetric or an antisymmetric matrix: $(C\Gamma_{I})^T= (-1)^{\frac{1}{2}(r-|I|-1)(r-|I|)} C\Gamma_{I}$.

We emphasize the Weyl nature of our spinors by writing
\begin{subequations}
\begin{align}
    &\psi_{\alpha}(C\Gamma_{I})^{\alpha\beta}\psi_{\beta} = \psi_{\alpha}\gamma_{I}^{\alpha\beta}\psi_{\beta}\,,
    & 
    &\psi_{\alpha}(C\Gamma_{I})^{\alpha\dot\beta}\eta_{\dot\beta} = \psi_{ \alpha}\gamma_{I}^{ \alpha\dot\beta}\eta_{\dot\beta}\,, 
    \\
    &\eta_{\dot\alpha}(C\Gamma_{I})^{\dot\alpha\beta}\psi_{\beta} = \eta_{ \dot\alpha}\bar{\gamma}_{I}^{\dot\alpha \beta}\psi_{\beta}\,
    & 
    &\eta_{\dot \alpha}(C\Gamma_{I})^{\dot \alpha\dot \beta}\eta_{\dot \beta} = \eta_{\dot \alpha}\bar{\gamma}_{I}^{\dot\alpha\dot\beta}\eta_{\dot \beta}\,,     
\end{align}
\end{subequations}
and use $C$ to raise and lower indices so that for example
\begin{equation*}
    \psi^{\alpha}\psi_{\alpha}= \psi_{\beta}C^{\beta\alpha}\psi_{\alpha}\,.
\end{equation*}
One can contract spinors of the same type for even rank and spinors of the opposite type for odd rank.

\subsection{Isotropic spaces and fused flag}\label{sec:IsotropicSpFl}
Recall that $\gamma = e^{\frac{2\pi \ii}{\Cox}}$, where $\Cox=2\rank-2$ is the Coxeter number of $\so_{2\rank}$. To derive various relations between Q-functions, we need explicitly the Perron-Frobenius vector of the incidence matrix:
\begin{align}
\mu_a &=[a]_{\gamma^{1/2}}\,,\quad a \leq r-2\,;
&
\mu_{r-1}&=\mu_r=\frac 12 [r-1]_{\gamma^{1/2}}\,.
\end{align}

We focus first on the tensor product of two vectors, its decomposition into irreps is $\qV_{(1)}\otimes \qV_{(1)} = \text{Sym}^2(\qV_{(1)})+\wedge^2 \qV_{(1)} + \mathbf{1}$.  Recall that $\mu_{1}$ is the maximal eigenvalue of \eqref{eq:Ladef2} in the vector representation and $\Psi_{(1)}$ is the corresponding $S$-solution of \eqref{eq:dA}. The associated eigenvalues of $\Psi_{(1)}^{[m]}\otimes \Psi_{(1)}^{[-m]}$ are then $\gamma^{m/2}\mu_{1}+\gamma^{-m/2}\mu_1$. We shall not be interested in the projection to the symmetric part. For the antisymmetric part and $m=1$ we get a fusion relation between $V_{(1)}^+ \wedge V^{-}_{(1)}$ and $\qV_{(2)}$. In components it reads
\be
    V_{ij} = W(\qV_i,\qV_j)\,.
\ee
The trivial representation has eigenvalue zero. Since $\mu_1>0$ it follows that the projection of $\Psi_{(1)}^{[m]}\otimes \Psi_{(1)}^{[-m]}$ to the singlet must vanish until the factors of $\gamma^{m/2}$ and $\gamma^{-m/2}$ cancel with each other; this gives a set of projection relations and one quantisation condition:
\begin{align}
\label{eq:SO2NVProj}
    \qV^{[m]}_i (\qV^i)^{[-m]} &=0\,,\quad m=0,1\dots, \frac{\Cox}{2}-1\,;  & \qV^{[\frac{\Cox}{2}]}_i (\qV^i)^{[-\frac{\Cox}{2}]} &=1\,.
\end{align}
Much like $\qV_{(2)}$, other antisymmetric tensors can be all expressed using $\qV_{(1)}$ as
\be
\label{eq:VWdet}
\qV_{i_1\dots i_a} &=& W(\qV_{i_1},\dots,\qV_{i_a})\,,\quad a \leq \rank\,.
\ee
It follows from the explicit form of $\qV_I$ that they satisfy QQ-relations analogous to \eqref{eq:Plucker5}
\be
    W(\qV_{Ii},\qV_{Ij}) = \qV_{Iij}\qV_{I}\,.
\ee
Since $\qV_{(a)}$ is an antisymmetrisation of $a$ vectors, this form describes an $a$-dimensional hyperplane in $\mathbb{C}^{2\rank}$, however, in contrast to the Q-system in the $\sl_{\rank+1}$ case, the hyperplane is not arbitrary: For $a\leq \rank-1$, it is spanned by vectors that have vanishing inner product with each other. Such a hyperplane is, by definition, an isotropic space \cite{Cartan:104700}.

In the case of $\groupG=\SO_{2\rank}$ the complete flag $\groupG/\groupB$ has the structure
\be\label{eq:GB}
    \groupG/\groupB=\{W_1\subset W_{2}\subset \dots \subset W_{\rank-1}\subset \mathbb{C}^{2\rank}, \langle W_a,W_a\rangle=0\}\,,
\ee
where $W_a$ is an $a$-dimensional hyperplane and $\langle W_a,W_a\rangle=0$ means that it is isotropic. From the explicit expressions for $\qV_{(a)}$ \eqref{eq:VWdet} and from the projection properties \eqref{eq:SO2NVProj}, it is seen that $\{\qV^{[p_1]}_{(1)},\qV^{[p_2]}_{(2)},\dots,\qV^{[p_{\rank-1}]}_{(\rank-1)}\}\in \groupG/\groupB$ for all $p_{a}$ such that $p_{a}-p_{a+1}=\pm 1$, $a=1,\ldots,\rank-2$.  This is the structure of the fused flag for $\so_{2\rank}$.  We can also split $V_{(r-1)}$ into fermions ``$V_{(r-1)}=\psi \eta$'' as described in the next subsection, and then, for the Q-functions to form a complete flag, the  fermions should be shifted $\psi^{[p_{r-1}]}$, $\eta^{[p_r]}$ in a way that the conditions $p_r-p_{r-2}=\pm 1$ and $p_{r-1}-p_{r-2}=\pm 1$ are satisfied~\footnote{\label{ftnote} This includes a curious case with $p_{r-1}\neq p_{r}$. \label{laref} $\psi^+\otimes \eta^-$ projected to the irrep $L(\omega_r+\omega_{r-1})$ (the same one $V_{(r-1)}$ belongs to) is not equal to $V_{(r-1)}$, however it corresponds to an $S^*$-solution of the linear problem \eqref{eq:dA} associated to an eigenvalue of $\Lambda$ on the next to the maximal concentric circle of the Coxeter plane, and it is also uniquely defined. Let us call it $V_{(r-1)}^*$, it is a Q-function of new type such that $\{\qV^{[p_1]}_{(1)},\qV^{[p_2]}_{(2)},\dots,\qV^{[p_{\rank-2}]}_{(\rank-2)},\qV^{*[p_{\rank-2}]}_{(\rank-1)}\}\in \groupG/\groupB$.}.

Let us now consider \eqref{eq:VWdet} for $a=\rank$. $V_{(r)}$ belongs to the representation $\Lambda^r L(\omega_1)$ which is not irreducible. It is a direct sum of self- and anti-self dual irreps that are isomorphic, respectively, to $L(2\omega_{\rank})$ and $L(2\omega_{\rank-1})$. These irreps have isotropic planes in the $\groupG$-orbits of their highest-weight vectors, however specifying these planes is not necessary for defining a point in $\groupG/\groupB$, the information provided by \eqref{eq:GB} suffices. At the same time, the hyperplane $V_{(r)}$ is not isotropic and its projections $V_{(\pm),(r)}$ to the irreps are not hyperplanes meaning in particular that they are not on the mentioned $\groupG$-orbits. 

\subsection{Pure spinors and fused Fierz relations}
There is another way to explore the extended Q-system for $\so_{2\rank}$ by building it up from the spinors. In this case we study the linear problem \eqref{eq:dA} to explore tensor products between the spinors
\be\label{eq:tpr}
   \psi^{[m]} \otimes \psi^{[-m]}, \quad \psi^{[m]} \otimes \eta^{[-m]}, \quad
   \eta^{[m]} \otimes \eta^{[-m]}\,.
\ee
As exemplified in the case of $\mathfrak{su}(4)\simeq \so_6$, we compare the eigenvalues of this linear problem with those for antisymmetric powers of the vector representation. To this end one exploits the following relation
\be\label{eq:PFR}
 \mu_{r-1}(\gamma^{m/2}+\gamma^{-m/2})=\mu_{r-1-m}
\ee
which is immediate to verify using that $\gamma^{\pm\frac{r-1}{2}}=\pm\ii\,.$ 

We observe the following pattern: if one wants to project the tensor product of spinors to the fundamental representation $L(\omega_a)$, the shift in \eqref{eq:tpr} should be $m=r-1-a$. If the shift is smaller than this value, projection to $L(\omega_a)$ vanishes. Graphically on the Dynkin diagram, this means that we start at the bifurcation node for $m=1$ and walk away from the spinor nodes by increasing $m$.  For an even rank, we demonstrate this observation by the diagram in Figure~\ref{fig:Dn}. The case of an odd rank is obtained by swaps $\qS \otimes \qS \Longleftrightarrow \qS \otimes \qC$ starting from the vector node of the diagram, we have for example $V_{(1)}=\qS^{[-r+2]}\gamma_{(1)}\qS^{[r-2]}$.
\begin{figure}
\begin{center}
\begin{tikzpicture}[thick, scale=2.5]

\draw[black,thick](-4,0)--(-2,0);   
\draw[black,dashed](-2,0)--(-1,0);
\draw[black,thick](-1,0)--({1/sqrt(2)-1},{1/sqrt(2)});
\draw[black,thick](-1,0)--({1/sqrt(2)-1},{-1/sqrt(2)});

\filldraw[black,fill=white] (-4,0) circle (3pt) node[anchor=north] {};
\node[] at (-4,-0.25) {\scalebox{0.8}{$V_{(1)}$}};
\node[] at (-4,0.25) {\scalebox{0.8}{$\qS^{[-\rank+2]}\gamma_{(1)}\qC^{[\rank-2]}$}};

\filldraw[black,fill=white] (-3,0) circle (3pt) node[anchor=south] {};
\node[] at (-3,-0.25) {\scalebox{0.8}{$V_{(2)}$}};
\node[] at (-3,0.25) {\scalebox{0.8}{$\psi^{[-\rank+3]} \gamma_{(2)}\psi^{[\rank-3]}$}};

\filldraw[black,fill=white] (-2,0) circle (3pt) node[anchor=north]{}; \node[] at (-2,-0.25) {\scalebox{0.8}{$V_{(3)}$}};
\node[] at (-2,0.25) {\scalebox{0.8}{$\qS^{[-\rank+4]}\gamma_{(3)}\qC^{[\rank-4]}$}};

\filldraw[black,fill=white] (-1,0) circle (3pt) node[anchor=west] {};
\node[] at (-1-0.1,0.25) {\scalebox{0.8}{$\qS^- \gamma_{(r-2)} \qS^+$}};
\node[] at (-1,-0.25) {\scalebox{0.8}{$\qV_{(r-2)}$}};

\filldraw[black,fill=white] ({1/sqrt(2)-1},{1/sqrt(2)}) circle (3pt) node[anchor=south] {};
\node[] at ({1/sqrt(2)-1},{1/sqrt(2)+0.25}) {\scalebox{1}{$\qS$}};

\filldraw[black,fill=white] ({1/sqrt(2)-1},{-1/sqrt(2)}) circle (3pt) node[anchor=north] {};
\node[] at ({1/sqrt(2)-1},{-1/sqrt(2)-0.25}) {\scalebox{1}{$\qC$}};

\filldraw[black,dashed,fill=white] ({1/sqrt(2)-1},{0}) circle (3pt) node[anchor=north] {};
\node[] at ({1/sqrt(2)-1},{0-0.25}) {\scalebox{0.8}{$\qV_{(\rank-1)}$}};
\node[] at ({1/sqrt(2)-1},{0+0.25}) {\scalebox{0.8}{$\qS \gamma_{(\rank-1)} \qC$}};

\filldraw[black,dashed,fill=white] ({1/sqrt(2)},{0}) circle (3pt) node[anchor=north] {};
\node[] at ({1/sqrt(2)},{0-0.25}) {\scalebox{0.8}{$\qV_{(-),(\rank)}$}};
\node[] at ({1/sqrt(2)},{0+0.25}) {\scalebox{0.8}{$\qS^- \gamma_{(-),(\rank)} \qS^+$}};

\end{tikzpicture}
\end{center}
\caption{An illustration of fused Fierz identities showing a relation between tensor representations and fused products of spinors along the Dynkin diagram. The extra ``$A$-type'' nodes relate the non-fundamental representations $V_{(r-1)}$ and $V_{(r)}$ to the spinors.}
\label{fig:Dn}
\end{figure}
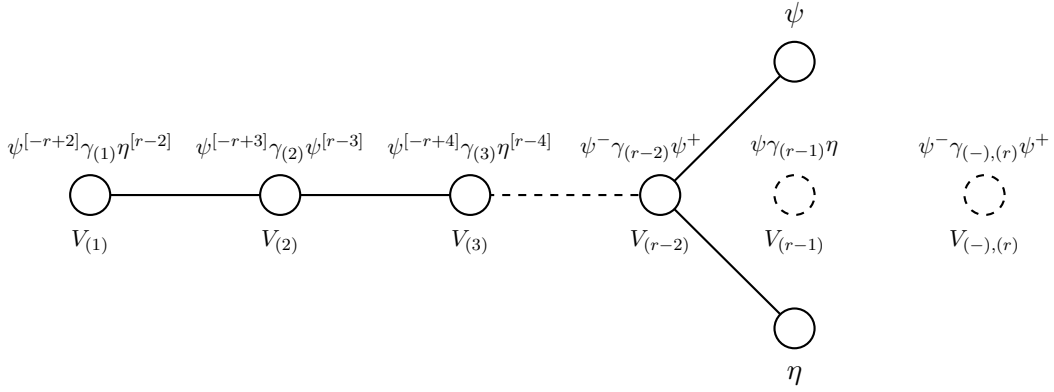

We summarize our findings for how to relate spinors to vectors using the index notation. First, we have projection relations for products of spinors of the same type
\be
    \gamma_{I}^{\alpha\beta}\qS^{[-m]}_{\alpha}  \qS^{[m]}_\beta= \bar\gamma_{I}^{\dot\alpha\dot\beta}\qC^{[-m]}_{\dot\alpha}  \qC^{[m]}_{\dot\beta} = 0\,, \quad m=0,1, \dots, \rank-2-|I| \,,
\ee
and of different type
\be
     \gamma_{I}^{\alpha\dot\beta}\qS^{[-m]}_{\alpha}  \qC^{[m]}_{\dot\beta} =\bar{\gamma}_{I}^{\dot\alpha\beta}\qC^{[-m]}_{\dot\alpha}  \qS^{[m]}_{\beta} = 0\,, \quad m=-\rank+2+|I|,\dots, \rank-2-|I|\,. 
\ee
A consequence of these projection properties is that $\qS_{\alpha}$ and $\qC_{\dot\alpha}$ are pure Weyl spinors. That is, they satisfy 
\be
    \psi_{\alpha} \gamma^{\alpha\beta}_{I} \psi_{\beta} =0\,, 
    \quad 
    \eta_{\dot \alpha} \bar\gamma^{\dot \alpha\dot \beta}_{I} \eta_{\dot \beta} =0\,, 
    \quad 
    |I|<\rank\,.
\ee
Geometrically these equations mean that $\psi_{\alpha}$ and $\eta_{\dot\alpha}$ describe two $\rank$-dimensional isotropic hyperplanes. We remark that these isotropic hyperplanes are \emph{not} equal to $V_{(\pm),(r)}$, see \eqref{eq:fused520}. Furthermore, the conditions
\be
    \psi_{\alpha}\gamma_{I}^{\alpha\dot\beta}\eta_{\dot \beta} = 0\,, \quad |I|<r-1\,
\ee
imply that the intersection of the two hyperplanes constructed from $\psi$ and $\eta$ is an $(\rank-1)$-dimensional isotropic hyperplane, this is exactly $V_{(r-1)}$. For a complete discussion regarding  pure spinors see \cite{Cartan:104700}. 

When $m=r-1-|I|$, we find fusion relations, which we call fused Fierz relations, relating the spinors with the other representations:
\begin{subequations}\label{eq:519}
\begin{align}
     V_{I}&=\gamma_{I}^{\alpha\beta}\qS^{[-r+1+|I|]}_{\alpha}  \qS^{[r-1-|I|]}_\beta = 
     \bar \gamma_{I}^{\dot\alpha\dot\beta}\qC^{[-r+1+|I|]}_{\dot\alpha}\qC^{[r-1-|I|]}_{\dot\beta}\,, & |I| &=r-2,r-4,\ldots\,,
     \\
     V_{I}&=\gamma_{I}^{\alpha\dot \alpha}\qS^{[-r+1+|I|]}_{\alpha}\qC^{[r-1-|I|]}_{\dot \alpha}
     =\bar{\gamma}_{I}^{\dot \alpha \alpha}\eta^{[-\rank + 1 +|I|]}_{\dot \alpha} \psi^{[\rank-1-|I|]}_{\alpha}
     \,, & |I| &=r-3,r-5,\ldots\,.
\end{align}
\end{subequations}
In Figure~\ref{fig:Dn}, we have also indicated further ``$A$-type'' nodes to show the effect of projecting to the non-fundamental representations and getting $V_{(r-1)}$ and $V_{(\pm),(r)}$  featured in the discussion of Section~\ref{sec:IsotropicSpFl}. The fusion relation between spinors and $\qV_{(r-1)}$ is
\begin{subequations}
\label{eq:520}
\be
     \gamma_{(r-1)}^{\alpha\dot\beta}\qS_{\alpha}\qC_{\dot\beta} = V_{(r-1)}
\ee 
and the relations for $\qV_{(+),(r)}$ and $\qV_{(-),(r)}$ are 
\be
\label{eq:fused520}
    (\bar\gamma_{(+),(r)})^{\dot\alpha\dot\beta}\eta^-_{\dot\alpha}\eta^+_{\dot\beta} = V_{(+),(r)}\,, \quad (\gamma_{(-),(r)})^{\alpha\beta}\qS^-_{\alpha}\qS^+_{\beta} = V_{(-),(r)}\,.
\ee
\end{subequations}
Finally there are quantisation conditions 
\begin{subequations}
\begin{align}
\label{eq:521}
    &(\psi^{\alpha})^{[-\frac{\Cox}{2}]}(\psi_{\alpha})^{[\frac{\Cox}{2}]} = 1\,, 
    \quad 
    (\eta^{\dot \alpha}) ^{[-\frac{\Cox}{2}]}(\eta_{\dot\alpha})^{[\frac{\Cox}{2}]} = 1\,, \quad
    \text{$r$ is even}\,, \\
    &(\psi^{\dot \alpha})^{[-\frac{\Cox}{2}]}(\eta_{\dot\alpha})^{[\frac{\Cox}{2}]} = 1\,, 
    \quad 
    (\eta^{ \alpha})^{[-\frac{\Cox}{2}]}(\psi_{\alpha})^{[\frac{\Cox}{2}]} = 1\,, \quad
    \text{$r$ is odd}\,.
\end{align}
\end{subequations}

\subsection{T-functions}\label{sec:SoTFunctions}
We proposed in Section~\ref{sec:SolvingHirota} that T-functions are to be constructed using the inner products between Q-vectors and their contra-gradient representations \eqref{eq:TfromQ}. We list here the explicit expressions for $\so_{2\rank}$. They can be used, for instance, in solution of TBA for $O(2\rank)$ sigma-model worked out in \cite{Balog:2005yz}.

For the vector representation and antisymmetric tensor representations, the T-functions are
\be\label{eq:AntiTensorCS}
    T_{a,s} = (-1)^{(\rank+1)a}\frac{1}{a!}\qV^{[\rank-1+s]}_{i_1\dots i_a} (\qV^{i_1\dots i_a})^{[-\rank+1-s]}\ , \quad 1\leq a\leq \rank-2 \ .
\ee
The off-set shift of $\frac{\Cox}{2}=\rank-1$ is determined from the projection properties \eqref{eq:SO2NVProj}. The conditions $T_{a,0}=1$ are the quantisation conditions. The extra sign appearing for even ranks can be eliminated upon redefining $g_{ij}= -\delta_{i+j,0}$, we will however stick with the original definition $g_{ij}=\delta_{i+j,0}$.

For the spinor representations, the statement is slightly rank dependent. For even $\rank$, the inner product is between spinors of the same type and the T-functions becomes 
\be
    T_{\rank-1,s} = (\qS^{\alpha})^{[-\rank+1-s]}(\qS_{\alpha})^{[\rank-1+s]}\,, \quad T_{\rank,s} = (\qC^{\dot\alpha})^{[-\rank+1-s]}(\qC_{\dot\alpha})^{[\rank-1+s]}\,,
\ee
while for odd $\rank$ we must contract the two different spinor representations with each other giving
\be
   T_{\rank-1,s} = (\qC^{\alpha})^{[-\rank+1-s]}(\qS_{\alpha})^{[\rank-1+s]}\,, \quad T_{\rank,s} = (\qS^{\dot \alpha})^{[-\rank+1-s]}(\qC_{\dot\alpha})^{[\rank-1+s]}\,. 
\ee
To check these expressions we solved the Q-system corresponding to a length-two rational spin chain with sites in the vector representation for $D_3,D_4$ and $D_5$ and computed $T_{1,1},T_{\rank-1,1}$ and $T_{\rank,1}$. We compared our result with the direct diagonalisation of the transfer matrices from \cite{Witten:1978bc} and found a perfect match up to a shift of the spectral parameter stemming from different conventions being used.

\subsection{Character solution for $\so_{2\rank}$}\label{sec:charT}
Consider first vectors, the character ansatz \eqref{eq:charsol} becomes 
\be
    \qV_i = \Ac_{i}\, x^{\frac{u}{\hbar}}_{i}\,,
\ee
where $x_{-a} = \frac{1}{x_a}$ and $\Ac_i$ are constants we wish to determine.

There exists a trick to quickly find the products $\Ac_{a}\Ac_{-a}$. The main observation is that we have $r$ conditions coming from the projection relations and  from the quantisation condition \eqref{eq:SO2NVProj}. By writing out the inner product explicitly
\be
    \qV^{[s]}_{i}(\qV^{i})^{[-s]} = \sum_{a=1}^{r}\Ac_a \Ac_{-a}(x_a^{s}+\frac{1}{x^{s}_{a}})\,,
\ee
we see that there are exactly $r$ coefficients, $\mathsf{A}_{a}\mathsf{A}_{-a}$, to fix. The solution can be found by taking a determinant ansatz, inspired by Weyl's character formula, for the sum
\be
    \qV^{[s]}_{i} (\qV^{[-s]})^{i} = \sum_{a=1}^{\rank} \mathsf{A}_{a}\mathsf{A}_{-a}(x_a^{s}+\frac{1}{x^{s}_a}) = 
    \begin{vmatrix}
    x^s_{1}+\frac{1}{x^s_1} &  &  & \hdots &  \\
    x^s_{2}+\frac{1}{x^s_2} &  &  & \hdots & \\
   \vdots & \Vec{\mathcal{W}}_{r-1} & \Vec{\mathcal{W}}_{r-2} & \hdots & \Vec{\mathcal{W}}_{1}
   \\
    x^{s}_{r}+\frac{1}{x^s_r} & & & \hdots & 
    \end{vmatrix}\,.
\ee
The projection properties fix the vectors $\Vec{\mathcal{W}}_{a}$ to be $\Vec{\mathcal{W}}_{a}\!\propto\! (x^{a-1}_{1}\!+\!\frac{1}{x^{a-1}_{1}},\dots,x^{a-1}_r\!+\!\frac{1}{x^{a-1}_r})^T$ while the quantisation condition fixes the overall normalisation. One gets then
\be\label{eq:STchar}
    T_{1,s} \equiv  (-1)^{\rank+1}\qV_i^{[s+\rank-1]}\,(\qV^i)^{[1-s-\rank]} = \frac{\det\limits_{1\leq a,b \leq \rank }(x^{\rank-b+\delta_{b,1}s}_a+x^{-(\rank-b+\delta_{b,1}s)}_a)}{\det\limits_{1\leq a,b\leq \rank}(x^{\rank-b}_a+x_a^{-(\rank-b)})} \,
\ee
which we recognize as the character of the symmetric traceless tensor representation.

From \eqref{eq:STchar} we find $\mathsf{A}_a\mathsf{A}_{-a}$ by expanding the determinant in minors:
\be
     \mathsf{A}_a\mathsf{A}_{-a} = \prod_{b\neq a}\frac{\sqrt{x_b}}{x_b-x_a}\prod_{b\neq a}\frac{\sqrt{x_b}}{x_b-\frac 1{x_a}}\,.
\ee
Now we recall that the Q-system is invariant under rescalings induced by the action of the maximal torus which allows us to set $\mathsf{A}_a$, $a=1,\ldots,\rank$, to any value. The expression for $\mathsf{A}_a\mathsf{A}_{-a}$ suggests  a natural normalisation to choose:
\be\label{eq:ParAVec}
    \Ac_{\pm a}=\prod_{b\neq a}\frac{\sqrt{x_b}}{x_b-x^{\pm 1}_a}\,.
\ee
The vector $V_i$ can be used to construct all other antisymmetric tensors. To see the pattern, start with $\qV_{ij}$ which is constructed from $W(\qV_i,\qV_j)$:
\begin{align}
    V_{ij} = \Ac_{i}\Ac_{j}\frac{1}{\sqrt{x_{i}}\sqrt{x_j}}(x_{i}-x_{j})x^{\frac{u}{\hbar}}_{i}x^{\frac{u}{\hbar}}_{j}\,. 
\end{align}
The result for $\qV_{(k)}$ is a generalisation where the last term is replaced by a Vandermonde-like determinant, 
\be\label{eq:antiCharQ}
    \qV_{i_1\dots i_k} = \left(\prod_{a=1}^{k}\mathsf{A}_{i_a}\sqrt{x_{i_a}}^{(1-k)}x^{\frac{u}{\hbar}}_{i_a}\right) \det_{1\leq a,b \leq k}(x_{i_a}^{k-b})\, .
\ee
This becomes particularly nice when all indices are positive
\be\label{eq:positiveInd}
V_A=\mathsf{A}_A x^{\frac{u}{\hbar}}_{A}\,,\quad \Ac_A=\frac{(-1)^{\frac{1}{2}|A|(|A|-1)}}{\Delta(x_A)} \times \prod\limits_{b\in \bar{A},a\in A}\frac{\sqrt{x_b}}{(x_b-x_a)}\,.
\ee
Here $\Delta(x_{A})$ is the Vandermonde determinant for $x_{A}$'s: $\Delta(x_{A}) = \prod_{a<b\in A}(x_a-x_b)$.

The expressions for T-functions $T_{a,s}$, $a\leq \rank-2$, follow from the inner product between antisymmetric tensors, see \eqref{eq:AntiTensorCS}. Just as for the vector case, we expect that these expressions are the actual characters of the corresponding representations. It is known that the character solution of $D$-type Hirota equations is \cite{Kirillov:1990,Kuniba:2010ir}
\begin{align}\label{eq:SO2nCS}
    T_{a,s} = \sum_{k_{a_0}+k_{a_0+2}+\dots +k_{a}=s}\chi(k_{a_0} L(\omega_{a_0})\oplus k_{a_0+2}L(\omega_{a_0+2}) \oplus \dots k_{a-2}L(\omega_{a-2}) \oplus k_{a}L(\omega_a)).
\end{align}
If $a$ is even, $\omega_{a_0}=\omega_{0}$ which corresponds to the trivial representation and, if  $a$ is odd, $\omega_{a_0}=\omega_{1}$. Using Mathematica we have checked numerically that for small representations and rank \eqref{eq:AntiTensorCS} and \eqref{eq:SO2nCS} agree but we have not proved in an explicit way that this is the case in general. However, since $T_{a,s}$ for $a=2,\ldots,s-2$ follow unambiguously from $T_{1,s}$ using Cherednik-Bazhanov-Reshetikhin formulae \cite{Cherednik1987sbi,Bazhanov:1989yk}, equality between \eqref{eq:AntiTensorCS} and \eqref{eq:SO2nCS} is guaranteed.

Having specified the character solution we can now use the analytic Bethe Ansatz \eqref{eq:ABA} to write down more explicitly the expressions for $T$-functions with non-trivial dependence on $\spa$. Consider the case of the vector representation and let $\qV_{i} = A_{i} x^{\frac{u}{\hbar}}_{i}{q}_{i}$, where $q_i$ is a polynomial in $\spa$ whose zeros are Bethe roots of Bethe equations and their Weyl transforms. Expanding out the inner product again allows us to write down the transfer matrices explicitly including twist as
\be
    T_{1,s} = (-1)^{(\rank+1)}\sum_{a=1}^{r}\frac{ x_a^{\rank-1+s}{{q}}^{[r-1+s]}_{a}{{q}}^{[-r+1-s]}_{-a}+x_a^{-\rank+1-s}{{q}}^{[-r+1-s]}_{a}{{q}}^{[r-1+s]}_{-a}}{\prod\limits_{b\neq a}\left(x_b+\frac 1{x_b}-x_a-\frac 1{x_a}\right)}\,.
\ee
$T_{a,s}$ for $a\leq r-2$ is obtained in the same way from \eqref{eq:AntiTensorCS}.
\newline
\newline
We now turn to spinors. The ansatz is
\be\label{eq:AnalyticalAnsatzSpinor}
    \qZ_{A} = \Bc_{A} \frac{\prod_{a=1}^{\rank}\sqrt{x_a}^{\frac{u}{\hbar}}}{\prod_{a\in A}x_a^{\frac{u}{\hbar}}}\,,
\ee
where $\Bc_A$ are the constants, with respect to $\spa$, that we want to fix. To relate spinors with vectors we will use the fusion relations \eqref{eq:fused520} between $\qV_{(\pm),(r)}$ and the fused symmetric square of $\qS_\alpha$ and $\qC_{\dot \alpha}$. In particular we have that $\eta^-_1\eta^+_1 = -\qV_{12\dots r}$ and $\psi^-_1\psi^+_1 =- \qV_{12\dots r-1,-r}$. Using the explicit form of $\qV_{I}$ this gives
\be\label{eq:qZ0Vr}
    \qC^-_1 \qC^+_1   =\frac{(-1)^{\frac{1}{2}r(r-1)+1}}{\Vandermonde}\prod_{a=1}^{\rank}x^{\frac{u}{\hbar}}_a\,, \quad \qS^-_{1}\qS^+_1  =\frac{(-1)^{\frac{1}{2}r(r-1)+1}}{\Vandermonde} x^{\rank-1}_{\rank}x^{-\frac{u}{\hbar}}_{\rank}\prod_{a=1}^{\rank-1} x^{\frac{u}{\hbar}}_a\,,
\ee
where $\Delta$ is the Vandermonde determinant. From \eqref{eq:AnalyticalAnsatzSpinor} we see that the shifts of the spectral parameter cancel each other and will not play a part. To get the other spinor components we act on \eqref{eq:qZ0Vr} with the elements of the Weyl group that flip signs of the vector indices in \eqref{eq:antiCharQ}~\footnote{Acting with the Weyl group can potentially induce the extra signs in the relations between Q-functions, \cf \eqref{eq:ss}, however this does not happen for this particular class of Weyl group elements.}.  After some calculations the resulting expression is
\be\label{eq:BSquared}
    \Bc^2_{A} = (-1)^{\frac{1}{2}r(r-1)+1}\bc_{A}^2, \quad \bc_{A} \equiv \frac{1}{\sqrt{\Vandermonde}}\prod_{a\in A}\sqrt{x}^{\rank-1}_{a}\prod_{a<b\in A}\frac{x_a-x_b}{x_ax_b-1}\,.
\ee
with $A$ an ordered set of indices. To take the square-root of this expression we use the equations
\be
   \gamma^{\alpha\dot\beta}_{(r-1)}\psi_{\alpha}\eta_{\dot \beta} = V_{(r-1)}\,.
\ee
These equations fix all spinors up to an overall factor, \cf discussion after \eqref{eq:Pluco}. In reality we do not actually need to use all the equations contained in the above expression, it is enough to focus on the Weyl orbit of the highest weight. 

The overall normalization is then finally fixed, up to a sign, from
\be
    W(\psi_1,\psi_2) = V_{12\dots r-2}\,.
\ee
We omit the algebra and state the final result
\be
    \Bc_A=\pm(-1)^{|A|}\ii^{\frac{1}{2}\rank(\rank-1)+1}\bc_A\,.
\ee

\section{Exceptional algebras}\label{sec:ExceptionalAlgebras}
Study of exceptional cases emphasises strongly that a fused flag is a non-trivially constrained system compared to an ordinary bundle with the flag manifold in the fiber. To define locally a section of an ordinary bundle, we need as many functions as $\dim\groupG/\groupB$, whereas local definition of a fused flag could use, in principle, only as many functions as the rank of the algebra. This $\rank$-dimentional functional freedom is implicit in a covariant description coming through a variety of relations  satisfied by the Q-vectors. Representation theory of exceptional algebras is very rich producing many remarkable such relations. Below we list some of them, however, without doubt, it is only a tip of an iceberg.

The notations for Q-vectors follow the enumeration for the nodes of Dynkin diagrams shown in Fig~\ref{fig:Egraphs}. To explore possible relations, we used LieArt 2.0 package \cite{Feger:2019tvk} and the explicit knowledge of $\Lambda$-eigenvalues following from the results of Section~\ref{sec:33}.
\begin{figure}[t]
\begin{minipage}[c]{0.3\hsize}
            \begin{tikzpicture}[scale=.3]
                \tikzset{node/.style={draw,circle,thick,inner sep=0pt,minimum size=7pt}}
                \draw (-1,0) node[anchor=east]  {$E_6:$~};
                \node[node, label=below:$1$] (n1) at (0, 0) {};
                \node[node, label=below:$2$] (n2) at (2, 0) {};
                \node[node, label=below:$3$] (n3) at (4, 0) {};
                \node[node, label=below:$4$] (n4) at (6, 0) {};
                \node[node, label=below:$5$] (n5) at (8, 0) {};
                \node[node, label=above:$6$] (n6) at (4, 2) {};
                \draw[thick] (n1) -- (n2) -- (n3) -- (n4) -- (n5);
                \draw[thick] (n3) -- (n6);
            \end{tikzpicture}
        \end{minipage}
        \begin{minipage}[c]{0.3\hsize}
            \begin{tikzpicture}[scale=.3]
                \tikzset{node/.style={draw,circle,thick,inner sep=0pt,minimum size=7pt}}
                \draw (-1,0) node[anchor=east]  {$E_7:$~};
                \node[node, label=below:$1$] (n1) at (0, 0) {};
                \node[node, label=below:$2$] (n2) at (2, 0) {};
                \node[node, label=below:$3$] (n3) at (4, 0) {};
                \node[node, label=below:$4$] (n4) at (6, 0) {};
                \node[node, label=below:$5$] (n5) at (8, 0) {};
                \node[node, label=below:$6$] (n6) at (10, 0) {};
                \node[node, label=above:$7$] (n7) at (4, 2) {};
                \draw[thick] (n1) -- (n2) -- (n3) -- (n4) -- (n5) -- (n6);
                \draw[thick] (n3) -- (n7);
            \end{tikzpicture}
        \end{minipage}~~~~
        \begin{minipage}[c]{0.3\hsize}
            \begin{tikzpicture}[scale=.3]
                \tikzset{node/.style={draw,circle,thick,inner sep=0pt,minimum size=7pt}}
                \draw (-1,0) node[anchor=east]  {$E_8:$~};
                \node[node, label=below:$1$] (n1) at (0, 0) {};
                \node[node, label=below:$2$] (n2) at (2, 0) {};
                \node[node, label=below:$3$] (n3) at (4, 0) {};
                \node[node, label=below:$4$] (n4) at (6, 0) {};
                \node[node, label=below:$5$] (n5) at (8, 0) {};
                \node[node, label=below:$6$] (n6) at (10, 0) {};
                \node[node, label=below:$7$] (n7) at (12, 0) {};
                \node[node, label=above:$8$] (n8) at (4, 2) {};
                \draw[thick] (n1) -- (n2) -- (n3) -- (n4) -- (n5) -- (n6) -- (n7);
                \draw[thick] (n3) -- (n8);
            \end{tikzpicture}
        \end{minipage}
            \caption{Enumration of Dynkin nodes for exceptional Lie algebras.}
    \label{fig:Egraphs}
\end{figure}
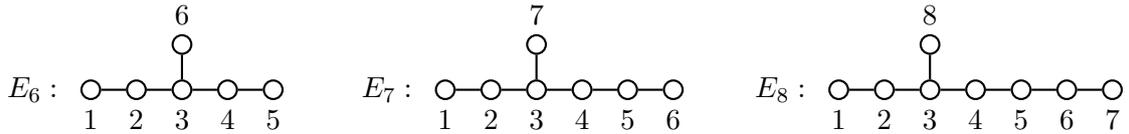

To keep the presentation short, we use the following convention. Expressions of type
\be
\label{eq:2051}
Q_{(a)}^{[m_0]}\otimes Q_{(b)}^{[-m_0]}\otimes \ldots \to {\rm \rhs}
\ee
for a fixed integer $m_0$ means that the \rhs is in an irrep of the Lie algebra and one gets an equality between \lhs and \rhs by restricting the \lhs to this irrep. If we write $=$ instead of $\to$ then this means that the \lhs is also an irrep. 

The fusion relations \eqref{eq:2051} come always with the associated projection relations: If we consider $Q_{a}^{[m]}\otimes Q_{b}^{[-m]}\otimes \ldots$ with $0\leq m<m_0$ then the restriction of this expression to the irrep of the \rhs is zero. We won't write the projection relations explicitly.

\subsection{$E_6$}
\label{sec:E6}
This is the Lie algebra of dimension $\bf{78}$, with Coxeter number of the associated Weyl group $\Cox=12$.

$E_6$ is the only exceptional algebra which has representations that are not the same as their contragradients. The contragradient representation is obtained by the reflection of the Dynkin diagram and so $L(\omega_1)^*=L(\omega_5)$, and $L(\omega_2)^*=L(\omega_4)$. Hence computation of transfer matrices involves pairing of different Q-functions, for instance
\begin{align}
T_{1,s}&=\langle Q_{(1)}^{[s+6]}, Q_{(5)}^{[-s-6]} \rangle\,,
&
T_{5,s}&=\langle Q_{(5)}^{[s+6]}, Q_{(1)}^{[-s-6]} \rangle\,.
\end{align}

The 27-dimensional fundamental representation $L(\omega_1)$ has all its components on the Weyl orbit of the highest weight. This representation and its conjugate are analogs of the vector representations for algebras from classical series, in particular in the sense that the Q-vectors at other nodes of the Dynkin diagram can be obtained using familiar Wronskian formulae (with no projections to irreps needed):

\begin{subequations}
\begin{align}
Q_{(1)}^+\wedge Q_{(1)}^- &= Q_{(2)} \,,
&
Q_{(1)}^{[2]}\wedge Q_{(1)}\wedge Q_{(1)}^{[-2]} &= Q_{(3)} \,,
\\
Q_{(5)}^+\wedge Q_{(5)}^- &= Q_{(4)} \,,
&
Q_{(5)}^{[2]}\wedge Q_{(5)}\wedge Q_{(5)}^{[-2]} &= Q_{(3)} \,.
\end{align}

\end{subequations}
Hence, we can use an embedding of lines into planes intuition to describe (at least partial) flags, however these lines are special: tensor powers of $L(\omega_1)$ have several irreps ~\footnote{Cube of $L(\omega_1)$ contains in total 10 irreps. We show only two for simplicity.} 
\begin{subequations}
\be
L(\omega_1)\otimes L(\omega_1) &=& L(\omega_{\rm max}=2\omega_1)+L(\omega_1)+L(\omega_2)\,,
\\
L(\omega_1)\otimes L(\omega_1)\otimes L(\omega_1)&=& L(\omega_{\rm max}=3\omega_1)+\ldots +L(0)+\ldots
\ee
\end{subequations}
and projection to all of them, except for the maximal ones,  of the corresponding tensor products of Q-functions is  zero. The projection relations are analogs of the null-vector/pure spinor conditions. In the fused flag, these projections are paired with the following fusion properties:
\begin{subequations}
\be
Q_{(1)}^{[\pm 6]} \otimes Q_{(5)}^{[\mp 6]} &\to& 1\,,
\ee
\begin{align}
Q_{(1)}^{[4]} \otimes Q_{(1)}^{[-4]} &\to Q_{(1)}\,, &
Q_{(1)}^{[8]} \otimes Q_{(1)}^{[-8]} \otimes Q_{(1)} &\to 1\,,
\\
Q_{(5)}^{[4]} \otimes Q_{(5)}^{[-4]} &\to Q_{(5)}\,, &
Q_{(5)}^{[8]} \otimes Q_{(5)}^{[-8]} \otimes Q_{(5)} &\to 1\,.
\end{align}
\end{subequations}

Furthermore, there is a Fierz-type relation to get $Q_{(6)}$
\be
    Q_{(1)}^{[\pm 3]} \otimes Q_{(5)}^{[\mp 3]} &\to Q_{(6)}\,.
\ee

Representation $L(\omega_6)$ is the adjoint representation of $E_6$. Hence $L(\omega_6)\wedge L(\omega_6)$ is definitely reducible. Indeed, for {\it any} simple Lie algebra with commutation relations $[J^i,J^j]=f^{ij}_k J^k$, $L_{\rm adj}\wedge L_{\rm adj}$ contains $L_{\rm adj}$ as an irrep spanned by $f_{ij}^k J^i\otimes J^j$, where raising/lowering of indices is done by the Killing form. If $h$ is the Coxeter number and $Q_{(\rm adj)}$ is the Q-vector built from the $S$-solution of \eqref{eq:dA} with the maximal positive eigenvalue of $\Lambda_{\rm adj}$ then
\be
\label{eq:adj}
Q_{(\rm adj)}^{[h/3]} \wedge Q_{(\rm adj)}^{[-h/3]} \to Q_{(\rm adj)}\,,
\ee
the equation is only meaningful in the sense of $S$-solutions of $\eqref{eq:dA}$ if $h/3$ is an even number.

The first example where the adjoint  is a fundamental representation and $h/3$ is not even is $D_5$ whose Coxeter number is $\Cox=8$. For this case, $Q_{\rm (adj)}\equiv Q_{(2)}$. This Q-function originates from $\Psi_{(2)}$, where the half-rotated $\Psi_{(2)}^\pm$  are the $S$-solutions of \eqref{eq:dA} corresponding to the complex eigenvalues $\gamma^{\pm 1/2}\mu_2$. There is also an $S^*$-solution $\Psi_{(2)}^*$ of \eqref{eq:dA}  with the real eigenvalue $\mu_2^*$ such that $\mu_2^*/\mu_2=\sqrt{2-\sqrt{2}}$. We found that 
\begin{align}
\label{eq:D5SS}
Q_{(2)}^{[3]}\wedge Q_{(2)}^{[-3]} &\to Q_{(2)}^* &(\text{example from }D_5)\,.
\end{align}

Returning back to $E_6$, one has $L(\omega_6)\wedge L(\omega_6)=L(\omega_3)\oplus L(\omega_6)$, and the corresponding fusion relations are
\begin{subequations}
\begin{align}
Q_{(6)}^+\wedge Q_{(6)}^- &\to Q_{(3)}\,,
\\
Q_{(6)}^{[4]}\wedge Q_{(6)}^{[-4]} &\to Q_{(6)}\,.
\end{align}
\end{subequations}

Symmetric power of the adjoint representation decomposes as $S^2(L(\omega_6))=L(0)\oplus L(\omega_1+\omega_5)$,  for the symmetric trace-less part one then derives
\be
Q_{(6)}^{[3]}\otimes Q_{(6)}^{[-3]} \to (Q_{(1)}\otimes Q_{(5)})_{L(\omega_1+\omega_5)}\,.
\ee

\subsection{$E_7$}
\label{sec:E7}
This is the Lie algebra of dimension ${\bf 133}$, with Coxeter number of the associated Weyl group $\Cox=18$.

Its ``vector'' representation ${\bf 56}\equiv L(\omega_6)$ has an interesting property. Alongside with the standard quadratic invariant existing because ${\bf 56}$ is its own contra-gradient, there exists also an independent symmetric quartic invariant. There is the only way to multiply four solutions of $\eqref{eq:dA}$ in the irrep ${\bf 56}$ such they have a non-trivial cone of applicability for the constant solution. We use this to conclude that, for the quartic invariant denoted as $\langle \cdot,\cdot,\cdot,\cdot \rangle$, it should be~\footnote{For this conclusion we assumed that this invariant is non-zero for combinatorial reasons if we take this particular combinations of $Q$-functions and normalised it accordingly to get $1$ on the \rhs of \eqref{eq:2114}.}
\be\label{eq:2114}
\langle Q_{(6)}^{[9]},Q_{(6)}^{[9]},Q_{(6)}^{[-9]},Q_{(6)}^{[-9]} \rangle = 1\,.
\ee
The associated projection relations are of the form $\langle Q_{(6)}^{[s_1]},Q_{(6)}^{[s_2]},Q_{(6)}^{[s_3]},Q_{(6)}^{[s_4]}\rangle =0$ if $-9\leq s_i\leq 9$ and $s_i$ are different from those featured in \eqref{eq:2114}. On the other hand, by leaving the cone of applicability, one  constructs an entirely novel family of ``quartic transfer matrices'':
\begin{align}
\label{eq:qtm}
T_{\{s_1,s_2,s_3,s_4\}}^{[s]} &=\langle Q_{(6)}^{[s_1]},Q_{(6)}^{[s_2]},Q_{(6)}^{[s_3]},Q_{(6)}^{[s_4]}\rangle\,, & s &=\sum\limits_{i=1}^4 s_i\,.
\end{align}
In the set $\{s_1,s_2,s_3,s_4\}$, the order of the entries $s_i$ is of no importance.

The quadratic invariant, similarly to the symplectic case, is antisymmetric in its entries,  meaning that wedging the vector representation to get the other fundamentals along the bottom line of the Dynkin diagram will require the subsequent projection to the corresponding irrep:
\begin{align}
Q_{(6)}^+\wedge Q_{(6)}^- &\to Q_{(5)}\,,
\\
Q_{(6)}^{[2]}\wedge Q_{(6)}\wedge Q_{(6)}^{[-2]} &\to Q_{(4)}\,,
\\
Q_{(6)}^{[3]}\wedge Q_{(6)}^{+}\wedge Q_{(6)}^{-}\wedge Q_{(6)}^{[-3]} &\to Q_{(3)}\,.
\end{align}
Adjoint ${\bf 133}=L(\omega_1)$ sits in the symmetric square of ${\bf 56}$:
\be
Q_{(6)}^{[5]}\otimes Q_{(6)}^{[-5]} \to  Q_{(1)}\,.
\ee
We can then use 
\be
Q_{(1)}^+\wedge Q_{(1)}^- \to Q_{(2)}\,
\ee
to generate the Q-vector in $L(\omega_2)$ and a Fierz-type relation
\be
Q_{(1)}^{[3]}\otimes Q_{(6)}^{[-4]} \to Q_{(7)}\,
\ee
to get the Q-vector in $L(\omega_7)$.

We list also several other fused relations which make more direct transitions between Q-vectors in fundamental representations:
\begin{subequations}
\label{eq:821}
\begin{align}
Q_{(5)}^{[5]}\otimes Q_{(5)}^{[-5]} &\to Q_{(2)}\,,
\\
Q_{(5)}^{[2]}\otimes Q_{(5)}^{[-2]} &\to Q_{(3)}\,,
\\
Q_{(4)}^{[5]}\otimes Q_{(4)}^{[-5]} &\to Q_{(3)}\,,
\\
Q_{(4)}^{[7]}\otimes Q_{(4)}^{[-7]} &\to Q_{(5)}\,,
\\
Q_{(1)}^{[11]}\otimes Q_{(7)}^{[-4]} &\to Q_{(6)}\,,
\\
Q_{(1)}^{[3]}\otimes Q_{(7)}^{[-2]} &\to Q_{(4)}\,,
\\
Q_{(6)}^{[11]}\otimes Q_{(7)}^{[-3]} &\to Q_{(1)}\,.
\end{align}
\end{subequations}
Besides, there are many fusion relations featuring $S^*$-solutions, like \eqref{eq:D5SS}. We do not present them here.

\subsection{$E_8$}
\label{sec:E8}
This is the Lie algebra of dimension ${\bf 248}$, with Coxeter number of the associated Weyl group $\Cox=30$. In addition to the quadratic invariant, this algebra has octic invariant \cite{Cederwall:2007qb}, and hence one can introduce ``octic transfer matrices'', similarly to \eqref{eq:qtm}.

A unique feature of $E_8$ is that it does not posses a vector representation. The minimal nontrivial representation is the adjoint ${\bf 248}=L(\omega_7)$. Its eight-dimensional zero-weight subspace is not on the Weyl orbit of the highest weight. To generate it from the Weyl-orbit components (which is important for the proof of Theorem~\ref{thm:unie}) we use \eqref{eq:adj} which explicitly becomes
\be
\label{eq:fijk}
f^{ij}_k Q_{(7),i}^{[10]} Q_{(7),j}^{[-10]} = Q_{(7),k}\,,
\ee
where $f_{ij}^k$ are the structure constants of $E_8$. Because Cartan generators commute between themselves, zero-weight components on the \rhs of \eqref{eq:fijk} are obtained from products of the Weyl-orbit components on the \lhs of \eqref{eq:fijk}.

All the other Q-vectors can be obtained from the adjoint representation using for instance the following fusion relations
\begin{subequations}
\begin{align}
Q_{(7)}^{[1]}\otimes Q_{(7)}^{[-1]} &\to Q_{(6)}\,,
\\
Q_{(7)}^{[6]}\otimes Q_{(7)}^{[-6]} &\to Q_{(1)}\,,
\\
Q_{(6)}^{[7]}\otimes Q_{(6)}^{[-7]} &\to Q_{(5)}\,,
\\
Q_{(6)}^{[6]}\otimes Q_{(6)}^{[-6]} &\to Q_{(2)}\,,
\\
Q_{(6)}^{[2]}\otimes Q_{(6)}^{[-2]} &\to Q_{(4)}\,,
\\
Q_{(1)}^{[7]}\otimes Q_{(1)}^{[-7]} &\to Q_{(8)}\,,
\\
Q_{(8)}^{[1]}\otimes Q_{(8)}^{[-1]} &\to Q_{(3)}\,.
\end{align}
\end{subequations}

\section{Conclusions and outlook}
In this work, we introduced a concept of the extended Q-system for simply-laced Lie algebras and studied its most essential properties. Quite remarkably, Q-functions of this system form a fused flag which can be defined as follows: if $Q_{(a)}$ is a vector of \Plucker coordinates of the minimal flag $\groupG/\groupP_a$ then $\{Q_{(a)}^\pm,Q_{(a')}\}$ are \Plucker coordinates of the flag $\groupG/\groupP_{aa'}$, simultaneously for {\it both} directions of the shift if $a,a'$ are any adjacent nodes of the Dynkin diagram. Then, by Lemma~\ref{thm:lm}, $\{Q_{(1)}^{[p_1]},\ldots, Q_{(\rank)}^{[p_r]}\}$ are \Plucker coordinates of the complete flag $\groupG/\groupB$ for {\it any} choice of the Coxeter height function $p_a$ which is the definition of a fused flag used in the main text. The fused flag is gauge equivalent to an oper, one should choose and fix an arbitrary Coxeter height function $p_a$ to define the equivalence.

It is instructive to compare the extended Q-system to other collections of Baxter Q-functions. The simplest option is to choose $Q_{(a),1}$ -- the Q-functions along a Dynkin diagram. Their zeros satisfy nested Bethe equations \eqref{eq:BAEOW} which become explicitly \eqref{eq:NBAE1} for the case of spin chains. They, in general position, contain in principle all information about the spectrum of an integrable model. However, Bethe equations are not always the best system to solve in practice, and working with $Q_{(a),1}$ lacks covariance which brings bogus complexity to various computations. The functions $Q_{(a),1}$ are often supplemented with their first descendants $Q_{(a),2}$. The obtained pairs of Q-functions form the QQ-system. The advantage over Bethe equations is a polynomial-type formulation of equations on the spectrum, however QQ-relations \eqref{eq:Plucker3} include typically non-physical solutions, exception is the $\sl_2$ case. The next addition is the Q-system on the Weyl orbit, where all $Q_{(a),\sigma(1)}$ are considered. It is likely that this system already features completeness and faithfulness, \ie solutions of \eqref{eq:Plucker4} with right analytic properties of $Q_{(a),\sigma(1)}$ are in a precise bijection with eigenspaces of a maximal commutative subalgebra acting on the Hilbert space. The extended Q-system is a special further extension of the Weyl-orbit Q-system which we demonstrated to be unique. The added value of this extension is covariance of the obtained Q-vectors under the action of (the Langlands dual of) the symmetry algebra which enables concise derivations of various remarkable relations. For instance, transfer matrices can be represented by simple bilinear combinations of Q-functions \eqref{eq:Tas} which should be compared with the expansion over Young tableaux \eqref{eq:Tchar}. Of course, both expressions are eventually equivalent, the point is that \eqref{eq:Tas} is a recipe to resum \eqref{eq:Tchar} in a particular universal way applicable at once for all Kirillov-Reshetikhin representations of the auxiliary space.

Whereas we derived relations of the extended Q-system using a particular linear problem \eqref{eq:dA} and the machinery of ODE/IM correspondence, we demonstrated that the extended Q-system is a universal concept. For any given $Q_{(a),1}$ there is a unique, up to symmetries, extension to the extended Q-system. Hence formal functional freedom one can enjoy has as many independent functions as the rank of the algebra. However, demanding that Q-functions belong to a certain analyticity class significantly restricts this freedom. In Section~\ref{sec:AnalyticBetheAnsatz} we proposed an explicit ansatz for analytic structure of Q-functions describing rational, trigonometric, and elliptic spin chains, and we conjecture that all Q-functions obeying this ansatz provide a complete and faithful description of the commuting charges spectra. ``Complete'' means that the number of solutions of QQ-relations is the right one, ``faithful'' means that the algebra of Q-functions is isomorphic to the Bethe algebra. The key point of this conjecture is that {\it all} Q-functions should satisfy the ansatz and then statements are true always and not only in general position~\footnote{In the proved case of rational $\gl_N$ spin chains with nodes in the vector representation \cite{2013arXiv1303.1578M,Chernyak:2020lgw}, the only requirement is that a spin chain is a cyclic representation of Yangian.
We have a numeric evidence supporting the conjecture  for other fundamental representations. For the case of symmetric powers however, it was observed \cite{RLV} that an additional constraint on the analytic class of certain T-functions or equivalently on the supersymmetric extension \cite{Tsuboi:2009ud,Marboe:2016yyn} of the Q-system is needed.}, demanding analyticity only for $Q_{(a),1}$, or only for $Q_{(a),1}$ and $Q_{(a),2}$ would  be not always enough.
\newline
\newline
The message of our work can also be re-stated from the point of view of representation theory. In the case of quantum algebras, the representation theory is not developed to the same level as in the case of Lie algebras. One of the problems is not sufficient understanding of relations among quantum characters, for instance not all transfer matrices have known explicit expressions in terms of Q-functions. We believe that the development is hindered by attempts to express all structures through prefundamental representations corresponding to functions $Q_{(a),1}$ and suggest that simultaneous usage of all members of the extended Q-system could be beneficial for better understanding of the character ring. Curiously enough, the extended Q-system might be still not the final object. We gave some examples of functions $Q^*$, \cf \eqref{eq:D5SS} and the footnote on page~\pageref{ftnote}, which correspond to $S^*$-solutions of the linear problem \eqref{eq:dA}. They are neither members of the extended Q-system nor transfer matrices but are nevertheless well-defined. It would be really interesting to understand how these objects are interpreted in terms of quantum characters.
\newline
\newline
This work was only focused on simply-laced cases, we discussed explicitly $D_\rank$ series in detail and also exceptional algebras though more schematically. Generalisation of the extended Q-system to the non-simply laced case may be done in a ``naive'' way, simply by choosing the corresponding Cartan matrix in the described constructions \cite{MV05,Frenkel:2020iqq}. However, this approach will lead to ``wrong'' Bethe equations (from the point of view of the most common integrable models). The called-for Bethe equations arise when we consider a twisted affine Lie algebra \cite{Masoero:2015rcz}, we plan to address the question of how the extended Q-system and fused flag look like in this case in a future publication. 
\newline
\newline
Among potential applications, let us mention integrable systems which arise in the context of AdS/CFT correspondence and which are naturally based on various high-rank symmetries. To study the archetypal examples, we must generalise our findings to supersymmetric algebras. In this way we should get a new insight in description of AdS$_5$/CFT$_4$ \cite{Gromov:2013pga} and AdS$_4$/CFT$_3$ \cite{Cavaglia:2014exa} quantum spectral curves as well as new means to attempt deriving  quantum spectral curve for AdS$_3$/CFT$_2$ which one expects for various backgrounds \cite{Zarembo:2010yz}. For the AdS$_5$/CFT$_4$ case, a substantial progress was done in \cite{Kazakov:2015efa} but it is yet unclear whether we reached a complete understanding of $\glmn$ extended Q-systems, in particular one still needs to explore a supersymmetric version of Langlands duality. 

There is also a special twisting limit of AdS/CFT integrability that breaks supersymmetry and leads to fishnet \cite{Gurdogan:2015csr} and, via holography, to fishcain \cite{Gromov:2019aku} models based on (a real form of) $\so_M$ symmetry. Extended Q-systems should be directly helpful in solution of the corresponding TBA equations and separation of variables for arbitrary $M$ \cite{Basso:2019xay,Derkachov:2019tzo,Derkachov:2020zvv}.

\paragraph{Note added} When the results of this work were ready and we were preparing the paper for publication, the paper by Ferrando, Frassek and Kazakov appeared \cite{Ferrando:2020vzk}. Their results  intersect considerably with our results applied to the case of $D_\rank$ algebras. However, the methods and some of key messages of their and our work are different.

\acknowledgments
We would like to thank Luca Cassia, Dmitry Chernyak,  Katsushi Ito, Peter Koroteev, S\'ebastien Leurent, Stefano Negro,  Paul Ryan,  Konstantin Zarembo, and Anton M. Zeitlin for useful discussions. D.V. is grateful to his former students Danny Bennet, Luke Corcoran, Bl\'{a}ith\'{i}n Power, and Jessica Weitbrecht with whom he had pleasure to learn the subject of ODE/IM correspondence.

The work of the authors was supported by the Knut and Alice Wallenberg Foundation under grant ``Exact Results in Gauge and String Theories''  Dnr KAW 2015.0083.

\bibliographystyle{halpha}

\newcommand{\etalchar}[1]{$^{#1}$}
\begin{thebibliography}{GLMRV20}

\bibitem[Bax71]{Baxter:1971cs}
R.J. Baxter.
\newblock {Generalized ferroelectric model on a square lattice}.
\newblock {\em \href{https://doi.org/10.1002/sapm197150151}{Stud. Appl.
  Math.,50:51–69}}, 1971.

\bibitem[Bax72]{Baxter:1972hz}
Rodney~J Baxter.
\newblock {Partition function of the Eight-Vertex lattice model}.
\newblock {\em \href{http://dx.doi.org/10.1016/0003-4916(72)90335-1}{{Ann.
  Phys. (N. Y).} {70}:193--228}}, 1972.

\bibitem[BCF{\etalchar{+}}17]{Bombardelli:2017vhk}
Diego Bombardelli, Andrea Cavaglià, Davide Fioravanti, Nikolay Gromov, and
  Roberto Tateo.
\newblock {The full Quantum Spectral Curve for $AdS_4/CFT_3$}.
\newblock {\em \href{http://dx.doi.org/10.1007/JHEP09(2017)140}{{JHEP}
  {09}:140}}, 2017, [\href{http://arxiv.org/abs/1701.00473}{{\tt
  hep-th/1701.00473}}].

\bibitem[BD05]{2005math......1398B}
Alexander {Beilinson} and Vladimir {Drinfeld}.
\newblock {Opers}.
\newblock 2005, \href{https://arxiv.org/pdf/math/0501398.pdf}{\tt
  {math/0501398}}.

\bibitem[BDS04]{Beisert:2004hm}
N.~Beisert, V.~Dippel, and M.~Staudacher.
\newblock {A novel long range spin chain and planar N = 4 super Yang- Mills}.
\newblock {\em
  \href{https://iopscience.iop.org/article/10.1088/1126-6708/2004/07/075}{JHEP,
  07:075}}, 2004,
  [{\tt\href{https://arxiv.org/pdf/hep-th/0405001.pdf}{hep-th/0405001}}].

\bibitem[BFKZ20]{Basso:2019xay}
Benjamin Basso, Gwena\"el Ferrando, Vladimir Kazakov, and De-liang Zhong.
\newblock {Thermodynamic Bethe Ansatz for Biscalar Conformal Field Theories in
  any Dimension}.
\newblock {\em \href{http://dx.doi.org/10.1103/PhysRevLett.125.091601}{Phys.
  Rev. Lett., 125(9):091601}}, 2020,
  [{\tt\href{https://arxiv.org/pdf/1911.10213.pdf}{hep-th/1911.10213}}].

\bibitem[BFL{\etalchar{+}}11]{Bazhanov:2010jq}
Vladimir~V. Bazhanov, Rouven Frassek, Tomasz Lukowski, Carlo Meneghelli, and
  Matthias Staudacher.
\newblock {Baxter Q-Operators and Representations of Yangians}.
\newblock {\em \href{http://dx.doi.org/10.1016/j.nuclphysb.2011.04.006}{{Nucl.
  Phys. B} {850}:148--174}}, 2011, [\href{http://arxiv.org/abs/1010.3699}{{\tt
  math-ph/1010.3699}}].

\bibitem[BGK{\etalchar{+}}17]{Boos:2017mqq}
Hermann Boos, Frank G\"ohmann, Andreas Kl\"umper, Khazret~S. Nirov, and
  Alexander~V. Razumov.
\newblock {Oscillator versus prefundamental representations. II. Arbitrary
  higher ranks}.
\newblock {\em \href{https://doi.org/10.1063/1.5001336}{J. Math. Phys.,
  58(9):093504}}, 2017, [{\tt
  \href{https://arxiv.org/pdf/1701.02627.pdf}{math-ph/1701.02627}}].

\bibitem[BH05]{Balog:2005yz}
Janos Balog and Arpad Hegedus.
\newblock {TBA equations for the mass gap in the O(2r) non-linear
  sigma-models}.
\newblock {\em \href{http://dx.doi.org/10.1016/j.nuclphysb.2005.07.032}{Nucl.
  Phys. B, 725:531–553}}, 2005,
  [{\tt\href{https://arxiv.org/pdf/hep-th/0504186.pdf}{hep-th/0504186}}].

\bibitem[BHK02]{Bazhanov:2001xm}
Vladimir~V. Bazhanov, Anthony~N. Hibberd, and Sergey~M. Khoroshkin.
\newblock {Integrable structure of W(3) conformal field theory, quantum
  Boussinesq theory and boundary affine Toda theory}.
\newblock {\em \href{https://doi.org/10.1016/S0550-3213(01)00595-8}{Nucl. Phys.
  B, 622:475–547}}, 2002,
  [{\tt\href{https://arxiv.org/pdf/hep-th/0105177.pdf}{hep-th/0105177}}].

\bibitem[BK04]{Brundan2004}
Jonathan Brundan and Alexander Kleshchev.
\newblock {Shifted Yangians and finite W-algebras}.
\newblock {\em \href{https://doi.org/10.1016/j.aim.2004.11.004}{Advances in
  Mathematics, 200(1):136–195}}, jul 2004,
  \href{https://arxiv.org/abs/math/0407012}{\tt [math/0407012]}.

\bibitem[BLM89]{BLM89}
Stephen Berman, Yeon Lee, and Robert~V. Moody.
\newblock The spectrum of a coxeter transformation, affine coxeter
  transformations, and the defect map.
\newblock {\em \href{https://doi.org/10.1016/0021-8693(89)90070-7}{{Journal of
  Algebra} {121}:339--357}}, 1989.

\bibitem[BLZ96]{Bazhanov:1996dr}
Vladimir~V. Bazhanov, Sergei~L. Lukyanov, and Alexander~B. Zamolodchikov.
\newblock {Integrable structure of conformal field theory. 2. Q operator and
  DDV equation}.
\newblock {\em \href{http://dx.doi.org/10.1007/s002200050240}{Commun. Math.
  Phys., 190:247–278}}, 1996,
  [\href{https://arxiv.org/abs/hep-th/9604044}{\tt hep-th/9604044}].

\bibitem[BLZ99]{Bazhanov:1998dq}
Vladimir~V. Bazhanov, Sergei~L. Lukyanov, and Alexander~B. Zamolodchikov.
\newblock {Integrable structure of conformal field theory. 3. The Yang-Baxter
  relation}.
\newblock {\em \href{http://dx.doi.org/10.1007/s002200050531}{{Commun. Math.
  Phys.} {200}:297--324}}, 1999,
  [\href{http://arxiv.org/abs/hep-th/9805008}{{\tt hep-th/9805008}}].

\bibitem[BLZ01]{Bazhanov:1998wj}
Vladimir~V. Bazhanov, Sergei~L. Lukyanov, and Alexander~B. Zamolodchikov.
\newblock {Spectral determinants for Schrodinger equation and Q operators of
  conformal field theory}.
\newblock {\em \href{https://doi.org/10.1023/A:1004838616921}{J. Statist.
  Phys., 102:567–576}}, 2001,
  [{\tt\href{https://arxiv.org/abs/hep-th/9812247}{hep-th/9812247}}].

\bibitem[BLZ03]{Bazhanov:2003ni}
Vladimir~V. Bazhanov, Sergei~L. Lukyanov, and Alexander~B. Zamolodchikov.
\newblock {Higher level eigenvalues of Q operators and Schroedinger equation}.
\newblock {\em \href{http://dx.doi.org/10.4310/ATMP.2003.v7.n4.a4}{{Adv. Theor.
  Math. Phys.} {7}:711--725}}, 2003,
  [\href{http://arxiv.org/abs/hep-th/0307108}{{\tt hep-th/0307108}}].

\bibitem[BR90]{Bazhanov:1989yk}
Vladimir~V. Bazhanov and N.~Reshetikhin.
\newblock {Restricted Solid on Solid Models Connected With Simply Based
  Algebras and Conformal Field Theory}.
\newblock {\em \href{http://dx.doi.org/10.1088/0305-4470/23/9/012}{{J. Phys. A}
  { 23}:1477}}, 1990.

\bibitem[Car66]{Cartan:104700}
Elie~Joseph Cartan.
\newblock {\em {The theory of spinors}}.
\newblock Dover books on mathematics. Hermann, Paris, 1966.
\newblock Trans. of : Leçons sur la théorie des spineurs. Paris, Hermann,
  1938.

\bibitem[Car19]{kar72958}
Neal Carr.
\newblock {\em The massive ODE/IM correspondence for simply-laced Lie
  algebras}.
\newblock PhD thesis, University of Kent, March 2019.

\bibitem[CF08]{Chervov:2007bb}
Alexander Chervov and Gregorio Falqui.
\newblock {Manin matrices and Talalaev's formula}.
\newblock {\em \href{http://dx.doi.org/10.1088/1751-8113/41/19/194006}{{J.
  Phys. A} {41}:194006}}, 2008, [\href{http://arxiv.org/abs/0711.2236}{{\tt
  math-QA/0711.2236}}].

\bibitem[CFGT14]{Cavaglia:2014exa}
Andrea Cavaglià, Davide Fioravanti, Nikolay Gromov, and Roberto Tateo.
\newblock {Quantum Spectral Curve of the $\mathcal N=$ 6 Supersymmetric
  Chern-Simons Theory}.
\newblock {\em \href{http://dx.doi.org/10.1103/PhysRevLett.113.021601}{Phys.
  Rev. Lett., 113(2):021601}}, 2014,
  [{\tt\href{https://arxiv.org/pdf/1403.1859.pdf}{hep-th/1403.1859}}].

\bibitem[Che86]{Cherednik86}
I.V. Cherednik.
\newblock {“Quantum” deformations of irreducible finite-dimensional
  representations of gl(N)}.
\newblock {\em
  \href{http://www.mathnet.ru/php/archive.phtml?wshow=paper&jrnid=dan&paperid=8653&option_lang=eng}{Dokl.
  Akad. Nauk SSSR}}, 287(5):1076--1079, 1986.

\bibitem[Che87]{Cherednik1987sbi}
I.~V. Cherednik.
\newblock An analogue of the character formula for hekke algebras.
\newblock {\em \href{http://dx.doi.org/10.1007/BF01078042}{{Functional Analysis
  and Its Applications} {21}:172--174}}, 1987.

\bibitem[CLV20]{Chernyak:2020lgw}
Dmitry Chernyak, Sébastien Leurent, and Dmytro Volin.
\newblock {Completeness of Wronskian Bethe equations for rational
  $\mathfrak{gl}_{m|n}$ spin chain}.
\newblock 2020, [\href{http://arxiv.org/abs/2004.02865}{{\tt
  math-ph/2004.02865}}].

\bibitem[CP94]{Chari:1994pz}
V.~Chari and A.~Pressley.
\newblock {\em {A guide to quantum groups}}.
\newblock 1994.

\bibitem[CP07]{Cederwall:2007qb}
Martin Cederwall and Jakob Palmkvist.
\newblock {The Octic E(8) invariant}.
\newblock {\em \href{http://dx.doi.org/10.1063/1.2748615}{{J. Math. Phys.}
  {48}:073505}}, 2007, [\href{http://arxiv.org/abs/hep-th/0702024}{{\tt
  hep-th/0702024}}].

\bibitem[CT06]{Chervov:2006xk}
A.~Chervov and D.~Talalaev.
\newblock {Quantum spectral curves, quantum integrable systems and the
  geometric Langlands correspondence}.
\newblock 2006, [\href{http://arxiv.org/abs/hep-th/0604128}{{\tt
  hep-th/0604128}}].

\bibitem[DDM{\etalchar{+}}07]{Dorey:2006an}
Patrick Dorey, Clare Dunning, Davide Masoero, Junji Suzuki, and Roberto Tateo.
\newblock {Pseudo-differential equations, and the Bethe ansatz for the
  classical Lie algebras}.
\newblock {\em \href{http://dx.doi.org/10.1016/j.nuclphysb.2007.02.029}{{Nucl.
  Phys. B} {772}:249--289}}, 2007,
  [\href{http://arxiv.org/abs/hep-th/0612298}{{\tt hep-th/0612298}}].

\bibitem[DDT07]{Dorey:2007zx}
Patrick Dorey, Clare Dunning, and Roberto Tateo.
\newblock {The ODE/IM Correspondence}.
\newblock {\em \href{http://dx.doi.org/10.1088/1751-8113/40/32/R01}{{J. Phys.
  A} { 40}:R205}}, 2007, [\href{http://arxiv.org/abs/hep-th/0703066}{{\tt
  hep-th/0703066}}].

\bibitem[DKM03]{Derkachov:2003qb}
Sergey~E. Derkachov, G.P. Korchemsky, and A.N. Manashov.
\newblock {Baxter Q operator and separation of variables for the open SL(2,R)
  spin chain}.
\newblock {\em \href{http://dx.doi.org/10.1088/1126-6708/2003/10/053}{{JHEP}
  {10}:053}}, 2003, [\href{http://arxiv.org/abs/hep-th/0309144}{{\tt
  hep-th/0309144}}].

\bibitem[DM11]{Derkachov:2010qe}
Sergey~E. Derkachov and Alexander~N. Manashov.
\newblock {Noncompact sl(N) spin chains: BGG-resolution, Q-operators and
  alternating sum representation for finite dimensional transfer matrices}.
\newblock {\em \href{http://dx.doi.org/10.1007/s11005-011-0472-2}{Lett. Math.
  Phys. 97:185--202}}, 2011, [\href{https://arxiv.org/abs/1008.4734}{\tt
  nlin-SI/1008.4734}].

\bibitem[DO20a]{Derkachov:2019tzo}
Sergey Derkachov and Enrico Olivucci.
\newblock {Exactly solvable magnet of conformal spins in four dimensions}.
\newblock {\em \href{http://dx.doi.org/10.1103/PhysRevLett.125.031603}{Phys.
  Rev. Lett., 125(3):031603}}, 2020,
  [{\tt\href{https://arxiv.org/pdf/1912.07588.pdf}{hep-th/1912.07588}}].

\bibitem[DO20b]{Derkachov:2020zvv}
Sergey Derkachov and Enrico Olivucci.
\newblock {Exactly solvable single-trace four point correlators in
  $\chi$CFT$_4$}.
\newblock 7 2020, \href{https://arxiv.org/abs/2007.15049}{\tt
  hep-th/2007.15049}.

\bibitem[Dri88]{Drinfeld1988}
V~G Drinfel'd.
\newblock {Quantum groups}.
\newblock {\em
  \href{https://link.springer.com/article/10.1007/BF01247086}{Journal of Soviet
  Mathematics, 41(2):898–915}}, 1988.

\bibitem[DS85]{Drinfeld1985}
V.~G. {Drinfel'd} and V.~V. {Sokolov}.
\newblock {Lie algebras and equations of Korteweg-de Vries type}.
\newblock {\em \href{http://dx.doi.org/10.1007/BF02105860}{{Journal of Soviet
  Mathematics} {30}:1975--2036}}, 1985.

\bibitem[DT99]{Dorey:1998pt}
Patrick Dorey and Roberto Tateo.
\newblock {Anharmonic oscillators, the thermodynamic Bethe ansatz, and
  nonlinear integral equations}.
\newblock {\em \href{https://doi.org/10.1088/0305-4470/32/38/102}{J. Phys. A,
  32:L419–L425}}, 1999,
  [{\tt\href{https://arxiv.org/abs/hep-th/9812211}{hep-th/9812211}}].

\bibitem[ESV]{ESV-in-preparation}
Simon Ekhammar, Hongfei Shu, and Dmytro Volin.
\newblock {Work in progress}.

\bibitem[EV]{EV-in-preparation}
Simon Ekhammar and Dmytro Volin.
\newblock {In preparation}.

\bibitem[Fad96]{Faddeev:1996iy}
L.~D. Faddeev.
\newblock {How Algebraic Bethe Ansatz works for integrable model}.
\newblock 1996, \href{https://arxiv.org/abs/hep-th/9605187}{\tt
  hep-th/9605187}.

\bibitem[FF11]{feigin2011}
Boris Feigin and Edward Frenkel.
\newblock Quantization of soliton systems and langlands duality.
\newblock In {\em Exploring New Structures and Natural Constructions in
  Mathematical Physics}, pages 185--274.
  {\href{https://doi.org/10.2969/aspm/06110185}{Mathematical Society of
  Japan}}, 2011, [\href{https://arxiv.org/abs/0705.2486}{\tt
  math-QA/0705.2486}].

\bibitem[FFK20]{Ferrando:2020vzk}
Gwenaël Ferrando, Rouven Frassek, and Vladimir Kazakov.
\newblock {QQ-system and Weyl-type transfer matrices in integrable SO(2r) spin
  chains}.
\newblock 2020, \href{https://arxiv.org/abs/2008.04336}{{\tt
  hep-th/2008.04336}}.

\bibitem[FH04]{Fulton-Harris}
William Fulton and Joe Harris.
\newblock {\em Representation theory : a first course}.
\newblock Springer, 2004.

\bibitem[FH15]{Frenkel:2013uda}
Edward Frenkel and David Hernandez.
\newblock {Baxter's relations and spectra of quantum integrable models}.
\newblock {\em \href{http://dx.doi.org/10.1215/00127094-3146282}{{Duke Math.
  J.} { 164}:2407--2460}}, 2015, [\href{https://arxiv.org/abs/1308.3444}{{\tt
  math-QA/1308.3444}}].

\bibitem[FH18]{Frenkel:2016gxg}
Edward Frenkel and David Hernandez.
\newblock {Spectra of quantum KdV Hamiltonians, Langlands duality, and affine
  opers}.
\newblock {\em \href{http://dx.doi.org/10.1007/s00220-018-3194-9}{{Commun.
  Math. Phys.} {362}:361--414}}, 2018,
  [\href{https://arxiv.org/abs/1606.05301}{{\tt math-QA/1606.05301}}].

\bibitem[Fio05]{Fioravanti:2004cz}
Davide Fioravanti.
\newblock {Geometrical loci and CFTs via the Virasoro symmetry of the mKdV-SG
  hierarchy: An Excursus}.
\newblock {\em \href{http://dx.doi.org/10.1016/j.physletb.2005.01.037}{{Phys.
  Lett. B} {609}:173--179}}, 2005,
  [\href{http://arxiv.org/abs/hep-th/0408079}{{\tt hep-th/0408079}}].

\bibitem[FKS20]{Feger:2019tvk}
Robert Feger, Thomas~W. Kephart, and Robert~J. Saskowski.
\newblock {LieART 2.0 -- A Mathematica Application for Lie Algebras and
  Representation Theory}.
\newblock {\em \href{http://dx.doi.org/10.1016/j.cpc.2020.107490}{{Comput.
  Phys. Commun.} {257}:107490}}, 2020,
  [\href{https://arxiv.org/abs/1912.10969}{{\tt hep-th/1912.10969}}].

\bibitem[FKSZ20]{Frenkel:2020iqq}
Edward Frenkel, Peter Koroteev, Daniel~S. Sage, and Anton~M. Zeitlin.
\newblock {q-Opers, QQ-Systems, and Bethe Ansatz}.
\newblock 2020, [\href{https://arxiv.org/abs/2002.07344}{{\tt
  math-AG/2002.07344}}].

\bibitem[FLO91]{FLO91}
A.~{Fring}, H.~C. {Liao}, and D.~I. {Olive}.
\newblock {The mass spectrum and coupling in affine Toda theories}.
\newblock {\em \href{https://doi.org/10.1016/0370-2693(91)90747-E}{{Physics
  Letters B} {266}:82-86}}, 1991.

\bibitem[FPT20]{Frassek20}
Rouven Frassek, Vasily Pestun, and Alexander Tsymbaliuk.
\newblock {Lax matrices from antidominantly shifted Yangians and quantum affine
  algebras}.
\newblock 2020, {\tt \href{https://arxiv.org/abs/2001.04929}{math/2001.04929}}.

\bibitem[FR92]{Frenkel:1991gx}
I.B. Frenkel and N.Yu. Reshetikhin.
\newblock {Quantum affine algebras and holonomic difference equations}.
\newblock {\em \href{http://dx.doi.org/10.1007/BF02099206}{{Commun. Math.
  Phys.} {146}:1--60}}, 1992.

\bibitem[FR96]{1996CMaPh.178..237F}
Edward {Frenkel} and Nikolai {Reshetikhin}.
\newblock {Quantum affine algebras and deformations of the Virasoro and
  W-algebras}.
\newblock {\em \href{http://dx.doi.org/10.1007/BF02104917}{{Commun. Math.
  Phys.} {178}:237--264}}, 1996,
  [\href{http://arxiv.org/abs/q-alg/9505025}{{\tt q-alg/9505025}}].

\bibitem[FR98]{1998math.....10055F}
E.~Frenkel and N.~Reshetikhin.
\newblock The q-characters of representations of quantum affine algebras and
  deformations of w-algebras.
\newblock 1998, \href{http://arxiv.org/abs/math/9810055}{{\tt math/9810055}}.

\bibitem[Fra20]{Frassek:2020nki}
Rouven Frassek.
\newblock {Oscillator realisations associated to the D-type Yangian: Towards
  the operatorial Q-system of orthogonal spin chains}.
\newblock {\em \href{http://dx.doi.org/10.1016/j.nuclphysb.2020.115063}{{Nucl.
  Phys. B} {956}:115063}}, 2020, [\href{https://arxiv.org/abs/2001.06825}{{\tt
  math-ph/2001.06825}}].

\bibitem[Fre03]{Frenkel:2003qx}
Edward Frenkel.
\newblock {Opers on the projective line, flag manifolds and Bethe Ansatz}.
\newblock 2003, \href{http://arxiv.org/abs/math/0308269}{{\tt math/0308269}}.

\bibitem[Fre04]{Frenkel:2004qy}
Edward Frenkel.
\newblock {Gaudin model and opers}.
\newblock In {\em {Workshop on Infinite Dimensional Algebras and Quantum
  Integrable Systems}}, 2004, \href{http://arxiv.org/abs/math/0407524}{{\tt
  math/0407524}}.

\bibitem[Fre07]{Frenkel:2005pa}
Edward Frenkel.
\newblock {Lectures on the Langlands program and conformal field theory}.
\newblock In {\em {\href{https://doi.org/10.1007/978-3-540-30308-4_11}{Les
  Houches School of Physics: Frontiers in Number Theory, Physics and Geometry,
  pages 387–533}}}, 2007, [\href{https://arxiv.org/abs/hep-th/0512172}{\tt
  hep-th/0512172}].

\bibitem[FRSTS98]{STS1}
E.~Frenkel, N.~Reshetikhin, and M.A. Semenov-Tian-Shansky.
\newblock {Drinfeld-Sokolov Reduction for Difference Operators and Deformations
  of W-a lgebras I. The Case of Virasoro Algebra}.
\newblock {\em
  \href{https://link.springer.com/article/10.1007/s002200050311}{Communications
  in Mathematical Physics, 192(3):631–647}}, apr 1998,
  [\href{https://arxiv.org/abs/q-alg/9704011}{\tt q-alg/9704011}].

\bibitem[FZ98]{1998math......7079F}
Sergey {Fomin} and Andrei {Zelevinsky}.
\newblock {Recognizing Schubert cells}.
\newblock 1998, \href{http://arxiv.org/abs/math/9807079}{{\tt math/9807079}}.

\bibitem[GK16]{Gurdogan:2015csr}
\"Omer G\"urdo\u{g}an and Vladimir Kazakov.
\newblock {New Integrable 4D Quantum Field Theories from Strongly Deformed
  Planar $\mathcal N = $ 4 Supersymmetric Yang-Mills Theory}.
\newblock {\em \href{http://dx.doi.org/10.1103/PhysRevLett.117.201602}{Phys.
  Rev. Lett., 117(20):201602}}, 2016,
  [\href{https://arxiv.org/abs/1512.06704}{\tt hep-th/1512.06704}].
\newblock [Addendum: Phys.Rev.Lett. 117, 259903 (2016)].

\bibitem[GKLO05]{Gerasimov:2005qz}
A.~Gerasimov, S.~Kharchev, D.~Lebedev, and S.~Oblezin.
\newblock {On a class of representations of the Yangian and moduli space of
  monopoles}.
\newblock {\em Commun. Math. Phys.}, 260:511--525, 2005,
  [{\tt\href{https://arxiv.org/abs/math/0409031}{math/0409031}}].

\bibitem[GKLT11]{Gromov:2010km}
Nikolay Gromov, Vladimir Kazakov, Sebastien Leurent, and Zengo Tsuboi.
\newblock {Wronskian Solution for AdS/CFT Y-system}.
\newblock {\em \href{http://dx.doi.org/10.1007/JHEP01(2011)155}{{JHEP}
  {01}:155}}, 2011, [\href{http://arxiv.org/abs/1010.2720}{{\tt
  hep-th/1010.2720}}].

\bibitem[GKLV12]{Gromov:2011cx}
Nikolay Gromov, Vladimir Kazakov, Sebastien Leurent, and Dmytro Volin.
\newblock {Solving the AdS/CFT Y-system}.
\newblock {\em \href{http://dx.doi.org/10.1007/JHEP07(2012)023}{{JHEP} {
  07}:023}}, 2012, [\href{https://arxiv.org/abs/1110.0562}{{\tt
  hep-th/1110.0562}}].

\bibitem[GKLV14]{Gromov:2013pga}
Nikolay Gromov, Vladimir Kazakov, Sebastien Leurent, and Dmytro Volin.
\newblock {Quantum Spectral Curve for Planar $\mathcal{N} = 4$ Super-Yang-Mills
  Theory}.
\newblock {\em \href{http://dx.doi.org/10.1103/PhysRevLett.112.011602}{{Phys.
  Rev. Lett.} {112}:011602}}, 2014,
  [\href{https://arxiv.org/abs/1305.1939}{{\tt hep-th/1305.1939}}].

\bibitem[GKLV15]{Gromov:2014caa}
Nikolay Gromov, Vladimir Kazakov, Sébastien Leurent, and Dmytro Volin.
\newblock {Quantum spectral curve for arbitrary state/operator in
  AdS$_{5}$/CFT$_{4}$}.
\newblock {\em \href{http://dx.doi.org/10.1007/JHEP09(2015)187}{{JHEP}
  {09}:187}}, 2015, [\href{https://arxiv.org/abs/1405.4857}{{\tt
  hep-th/1405.4857}}].

\bibitem[GKV09]{Gromov:2009tv}
Nikolay Gromov, Vladimir Kazakov, and Pedro Vieira.
\newblock {Exact Spectrum of Anomalous Dimensions of Planar N=4 Supersymmetric
  Yang-Mills Theory}.
\newblock {\em \href{http://dx.doi.org/10.1103/PhysRevLett.103.131601}{{Phys.
  Rev. Lett.} {103}:131601}}, 2009,
  [\href{http://arxiv.org/abs/hep-th/0901.3753}{{\tt hep-th/0901.3753}}].

\bibitem[GLMRV20]{Gromov:2019wmz}
Nikolay Gromov, Fedor Levkovich-Maslyuk, Paul Ryan, and Dmytro Volin.
\newblock {Dual Separated Variables and Scalar Products}.
\newblock {\em \href{http://dx.doi.org/10.1016/j.physletb.2020.135494}{{Phys.
  Lett. B} {806}:135494}}, 2020, [\href{https://arxiv.org/abs/1910.13442}{{\tt
  hep-th/1910.13442}}].

\bibitem[GLMS16]{Gromov:2015wca}
Nikolay Gromov, Fedor Levkovich-Maslyuk, and Grigory Sizov.
\newblock {Quantum Spectral Curve and the Numerical Solution of the Spectral
  Problem in AdS5/CFT4}.
\newblock {\em \href{http://dx.doi.org/10.1007/JHEP06(2016)036}{{JHEP}
  {06}:036}}, 2016, [\href{https://arxiv.org/abs/1504.06640}{{\tt
  hep-th/1504.06640}}].

\bibitem[GLMS17]{Gromov:2016itr}
Nikolay Gromov, Fedor Levkovich-Maslyuk, and Grigory Sizov.
\newblock {New Construction of Eigenstates and Separation of Variables for
  SU(N) Quantum Spin Chains}.
\newblock {\em \href{http://dx.doi.org/10.1007/JHEP09(2017)111}{{JHEP}
  {09}:111}}, 2017, [\href{https://arxiv.org/abs/1610.08032}{{\tt
  hep-th/1610.08032}}].

\bibitem[Gro17]{Gromov:2017blm}
Nikolay Gromov.
\newblock {Introduction to the Spectrum of $N=4$ SYM and the Quantum Spectral
  Curve}.
\newblock 2017, \href{https://arxiv.org/abs/1708.03648}{{\tt
  hep-th/1708.03648}}.

\bibitem[GS87]{1987RuMaS..42..133G}
I.~M. {Gel'fand} and V.~V. {Serganova}.
\newblock {Combinatorial geometries and torus strata on homogeneous compact
  manifolds}.
\newblock {\em
  \href{http://dx.doi.org/10.1070/RM1987v042n02ABEH001308}{{Russian
  Mathematical Surveys} {42}:133--168}}, 1987.

\bibitem[GS19]{Gromov:2019aku}
Nikolay Gromov and Amit Sever.
\newblock {Derivation of the Holographic Dual of a Planar Conformal Field
  Theory in 4D}.
\newblock {\em \href{http://dx.doi.org/10.1103/PhysRevLett.123.081602}{Phys.
  Rev. Lett., 123(8):081602}}, 2019,
  [\href{https://arxiv.org/pdf/1903.10508.pdf}{\tt hep-th/1903.10508}].

\bibitem[GV08]{Gromov:2007ky}
Nikolay Gromov and Pedro Vieira.
\newblock {Complete 1-loop test of AdS/CFT}.
\newblock {\em \href{http://dx.doi.org/10.1088/1126-6708/2008/04/046}{{JHEP}
  {04}:046}}, 2008, [\href{https://arxiv.org/abs/0709.3487}{{\tt
  hep-th/0709.3487}}].

\bibitem[HJ12]{Hernandez:2011ama}
David Hernandez and Michio Jimbo.
\newblock {Asymptotic representations and Drinfeld rational fractions}.
\newblock {\em \href{https://doi.org/10.1112/S0010437X12000267}{Compos. Math.,
  148:1593–1623}}, 2012, [\href{https://arxiv.org/abs/1104.1891}{\tt
  math/1104.1891}].

\bibitem[IL14]{Ito:2013aea}
Katsushi Ito and Christopher Locke.
\newblock {ODE/IM correspondence and modified affine Toda field equations}.
\newblock {\em \href{http://dx.doi.org/10.1016/j.nuclphysb.2014.06.007}{{Nucl.
  Phys. B} {885}:600--619}}, 2014, [\href{https://arxiv.org/abs/1312.6759}{{\tt
  hep-th/1312.6759}}].

\bibitem[Jim85]{Jimbo:1985zk}
Michio Jimbo.
\newblock {A q difference analog of U(g) and the Yang-Baxter equation}.
\newblock {\em \href{https://doi.org/10.1007/BF00704588}{Lett. Math. Phys.,
  10:63–69}}, 1985.

\bibitem[KBI93]{KBI93}
V.~E. Korepin, N.~M. Bogoliubov, and A.~G. Izergin.
\newblock {\em Quantum Inverse Scattering Method and Correlation Functions}.
\newblock Cambridge Monographs on Mathematical Physics. Cambridge University
  Press, 1993.

\bibitem[KLT12]{Kazakov:2010iu}
Vladimir Kazakov, Sebastien Leurent, and Zengo Tsuboi.
\newblock {Baxter's Q-operators and operatorial Backlund flow for quantum
  (super)-spin chains}.
\newblock {\em \href{http://dx.doi.org/10.1007/s00220-012-1428-9}{{Commun.
  Math. Phys.} {311}:787--814}}, 2012,
  [\href{https://arxiv.org/abs/1010.4022}{{\tt math-ph/1010.4022}}].

\bibitem[KLV16]{Kazakov:2015efa}
Vladimir Kazakov, Sebastien Leurent, and Dmytro Volin.
\newblock {T-system on T-hook: Grassmannian Solution and Twisted Quantum
  Spectral Curve}.
\newblock {\em \href{http://dx.doi.org/10.1007/JHEP12(2016)044}{JHEP 12:044}},
  2016, \href{https://arxiv.org/abs/1510.02100}{\tt [hep-th/1510.02100]}.

\bibitem[KLWZ97]{Krichever:1996qd}
I.~Krichever, O.~Lipan, P.~Wiegmann, and A.~Zabrodin.
\newblock {Quantum integrable models and discrete classical Hirota equations}.
\newblock {\em \href{http://dx.doi.org/10.1007/s002200050165}{{Commun. Math.
  Phys.} {188}:267--304}}, 1997,
  [\href{http://arxiv.org/abs/hep-th/9604080}{{\tt hep-th/9604080}}].

\bibitem[Kni95]{KNIGHT1995187}
H.~Knight.
\newblock Spectra of tensor products of finite dimensional representations of
  yangians.
\newblock {\em \href{https://doi.org/10.1006/jabr.1995.1123}{Journal of
  Algebra, 174(1):187-196}}, 1995.

\bibitem[KNS11]{Kuniba:2010ir}
Atsuo Kuniba, Tomoki Nakanishi, and Junji Suzuki.
\newblock {T-systems and Y-systems in integrable systems}.
\newblock {\em \href{http://dx.doi.org/10.1088/1751-8113/44/10/103001}{{J.
  Phys. A} {44}:103001}}, 2011, [\href{https://arxiv.org/abs/1010.1344}{{\tt
  hep-th/1010.1344}}].

\bibitem[Koj08]{Kojima:2008zza}
T.~Kojima.
\newblock {Baxter's Q-operator for the W-algebra WN}.
\newblock {\em \href{http://dx.doi.org/10.1088/1751-8113/41/35/355206}{{J.
  Phys.} {A41}(35):355206}}, 2008, \href{https://arxiv.org/abs/0803.3505}{{\tt
  [nlin-SI/0803.3505]}}.

\bibitem[Kos59]{Kostant}
B.~Kostant.
\newblock The principal three-dimensional subgroup and the betti numbers of a
  complex simple lie group.
\newblock 1959.

\bibitem[KOSY02]{Kuniba:2001ub}
A.~Kuniba, M.~Okado, J.~Suzuki, and Y.~Yamada.
\newblock {Difference L operators related to q characters}.
\newblock {\em \href{http://dx.doi.org/10.1088/0305-4470/35/6/307}{J. Phys. A,
  35:1415–143}}, 2002, \href{https://arxiv.org/abs/math/0109140}{\tt
  [math/0109140]}.

\bibitem[KR83]{Kulish:1983md}
P.P. Kulish and N.Yu. Reshetikhin.
\newblock {Quantum linear problem for the Sine-Gordon equation and higher
  representation}.
\newblock {\em \href{https://doi.org/10.1007/BF01084171}{J. Sov. Math.,
  23:2435–2441}}, 1983.

\bibitem[KR86]{Kulish86}
P~P Kulish and N~Y Reshetikin.
\newblock Gl$_3$-invariant solutions of the yang-baxter equation and associated
  quantum systems.
\newblock {\em
  \href{https://doi.org/10.1007/BF01095104https://doi.org/10.1007/BF01095104}{{{J.
  Okla. State Med. Assoc.; (United States)} {34:5}}}}, 1986.

\bibitem[KR90]{Kirillov:1990}
A.~N. Kirillov and N.~Yu. Reshetikhin.
\newblock {Representations of Yangians and multiplicities of occurrence of the
  irreducible components of the tensor product of representations of simple Lie
  algebras}.
\newblock {\em \href{http://dx.doi.org/10.1007/BF02342935}{{Journal of Soviet
  Mathematics} {52}:3156}}, 1990.

\bibitem[KRS81]{Kulish:1981gi}
P.P. Kulish, N.Yu. Reshetikhin, and E.K. Sklyanin.
\newblock {Yang-Baxter Equation and Representation Theory. 1.}
\newblock {\em
  \href{https://link.springer.com/article/10.1007/BF02285311}{Lett. Math.
  Phys., 5:393–403}}, 1981.

\bibitem[KS95]{Kuniba_1995}
Atsuo Kuniba and Junji Suzuki.
\newblock Analytic bethe ansatz for fundamental representations of yangians.
\newblock {\em \href{http://dx.doi.org/10.1007/bf02101234}{{Commun. Math. Phys}
  {173}:225–264}}, 1995, [\href{https://arxiv.org/abs/hep-th/9406180}{{\tt
  hep-th/9406180}}].

\bibitem[KSZ07]{Kazakov:2007fy}
Vladimir Kazakov, Alexander~Savelievich Sorin, and Anton Zabrodin.
\newblock {Supersymmetric Bethe ansatz and Baxter equations from discrete
  Hirota dynamics}.
\newblock {\em \href{http://dx.doi.org/10.1016/j.nuclphysb.2007.06.025}{Nucl.
  Phys., B790:345–413}}, 2007,
  [\href{https://arxiv.org/abs/hep-th/0703147}{\tt hep-th/0703147}].

\bibitem[KSZ18]{Koroteev:2018jht}
Peter Koroteev, Daniel~S. Sage, and Anton~M. Zeitlin.
\newblock {$(SL(N),q)$-opers, the $q$-Langlands correspondence, and
  quantum/classical duality}.
\newblock 2018, \href{http://arxiv.org/abs/1811.09937}{{\tt
  math-RT/1811.09937}}.

\bibitem[KZ84]{Knizhnik:1984nr}
V.G. Knizhnik and A.B. Zamolodchikov.
\newblock {Current Algebra and Wess-Zumino Model in Two-Dimensions}.
\newblock {\em \href{http://dx.doi.org/10.1016/0550-3213(84)90374-2}{{Nucl.
  Phys. B} {247}:83--103}}, 1984.

\bibitem[KZ20]{Koroteev:2020mxs}
Peter Koroteev and Anton~M. Zeitlin.
\newblock {Toroidal q-Opers}.
\newblock 2020, \href{https://arxiv.org/abs/2007.11786}{\tt math/2007.11786}.

\bibitem[Lac18]{Lacroix:2018njs}
Sylvain Lacroix.
\newblock {\em {Integrable models with twist function and affine Gaudin
  models}}.
\newblock PhD thesis, Lyon, Ecole Normale Superieure, 2018,
  \href{https://arxiv.org/abs/1809.06811}{\tt hep-th/1809.06811}.

\bibitem[Lag66]{Lagrange}
Joseph-Louis Lagrange.
\newblock {Solution de différens problèmes du calcul integral}.
\newblock {\em Mélanges de philosophie et de mathématique de la Société
  royale de Turin}, 3:179--380, 1766.

\bibitem[LM20]{Levkovich-Maslyuk:2019awk}
Fedor Levkovich-Maslyuk.
\newblock {A review of the AdS/CFT Quantum Spectral Curve}.
\newblock {\em \href{http://dx.doi.org/10.1088/1751-8121/ab7137}{{J. Phys. A}
  {53}:283004}}, 2020, [\href{https://arxiv.org/abs/1911.13065}{{\tt
  hep-th/1911.13065}}].

\bibitem[LRV]{RLV}
Sebastien Leurent, Paul Ryan, and Dmytro Volin.
\newblock {Work in progress}.

\bibitem[LZ10]{Lukyanov:2010rn}
S.L. Lukyanov and A.B. Zamolodchikov.
\newblock {Quantum Sine(h)-Gordon Model and Classical Integrable Equations}.
\newblock {\em \href{http://dx.doi.org/10.1007/JHEP07(2010)008}{{JHEP}
  {07}:008}}, 2010, [\href{https://arxiv.org/abs/1003.5333}{{\tt
  math-ph/1003.5333}}].

\bibitem[Man88]{Maninbook}
Y.E. Manin.
\newblock {\em {Quantum Groups and Non Commutative Geometry}}.
\newblock Centre De Recherches Mathematiques, Montreal, jun 1988.

\bibitem[MN18]{Maillet:2018bim}
J.~M. Maillet and G.~Niccoli.
\newblock {On quantum separation of variables}.
\newblock 2018, \href{https://arxiv.org/abs/1807.11572}{{\tt
  math-ph/1807.11572}}.

\bibitem[MN19]{Maillet:2018rto}
J.M. Maillet and G.~Niccoli.
\newblock {Complete spectrum of quantum integrable lattice models associated to
  $\mathcal{U}_{q} (\widehat{gl_{n}})$ by separation of variables}.
\newblock {\em \href{http://dx.doi.org/10.1088/1751-8121/ab2930}{{J. Phys. A}
  {52}:315203}}, 2019, [\href{https://arxiv.org/abs/1811.08405}{{\tt
  math-ph/1811.08405}}].

\bibitem[MO12]{maulik2018quantum}
Davesh Maulik and Andrei Okounkov.
\newblock Quantum groups and quantum cohomology, 2012,
  \href{https://arxiv.org/abs/1211.1287}{\tt math/1211.1287}.

\bibitem[MR18]{Masoero:2018rel}
Davide Masoero and Andrea Raimondo.
\newblock {Opers for higher states of quantum KdV models}.
\newblock 2018, \href{https://arxiv.org/abs/1812.00228}{{\tt
  math-ph/1812.00228}}.

\bibitem[MR20]{Masoero:2019wqf}
Davide Masoero and Andrea Raimondo.
\newblock {Opers for higher states of the quantum Boussinesq model}.
\newblock {\em \href{http://dx.doi.org/10.1007/s00220-020-03792-3}{{Commun.
  Math. Phys.} {378}:1--74}}, 2020,
  [\href{http://arxiv.org/abs/1908.11559}{{\tt math-ph/1908.11559}}].

\bibitem[MRV16]{Masoero:2015lga}
Davide Masoero, Andrea Raimondo, and Daniele Valeri.
\newblock {Bethe Ansatz and the Spectral Theory of Affine Lie Algebra-Valued
  Connections I. The simply-laced Case}.
\newblock {\em \href{http://dx.doi.org/10.1007/s00220-016-2643-6}{{Commun.
  Math. Phys.} {344}:719--750}}, 2016,
  [\href{https://arxiv.org/abs/1501.07421}{{\tt math-ph/1501.07421}}].

\bibitem[MRV17]{Masoero:2015rcz}
Davide Masoero, Andrea Raimondo, and Daniele Valeri.
\newblock {Bethe Ansatz and the Spectral Theory of Affine Lie algebra--Valued
  Connections II: The Non Simply--Laced Case}.
\newblock {\em \href{http://dx.doi.org/10.1007/s00220-016-2744-2}{{Commun.
  Math. Phys.} {349}:1063--1105}}, 2017, [\href{
  http://arxiv.org/abs/1511.00895}{{\tt math-ph/1511.00895}}].

\bibitem[MTV13]{2013arXiv1303.1578M}
E.~{Mukhin}, V.~{Tarasov}, and A.~{Varchenko}.
\newblock {Spaces of quasi-exponentials and representations of the Yangian
  $Y(gl_N)$}.
\newblock 2013, {\href{https://arxiv.org/abs/1303.1578}{{\tt
  math-AG/1303.1578}}}.

\bibitem[MV02]{mukhin2002populations}
E.~Mukhin and A.~Varchenko.
\newblock Populations of solutions of the xxx bethe equations associated to
  kac-moody algebras.
\newblock 2002, \href{http://arxiv.org/abs/math/0212092}{{\tt math/0212092}}.

\bibitem[MV03]{Mukhin2003}
E.~Mukhin and A.~Varchenko.
\newblock {Solutions to the XXX type Bethe ansatz equations and flag
  varieties}.
\newblock {\em \href{http://dx.doi.org/10.2478/bf02476011}{Central European
  Journal of Mathematics, 1(2):238–271}}, (2), nov 2003,
  [{\tt\href{http://arxiv.org/abs/math/0211321}{math/0211321}}].

\bibitem[MV05]{MV05}
Evgeny {Mukhin} and Alexander {Varchenko}.
\newblock {Discrete Miura Opers and Solutions of the Bethe Ansatz Equations}.
\newblock {\em
  \href{http://dx.doi.org/10.1007/s00220-005-1288-7}{{Communications in
  Mathematical Physics} {256}:565--588}}, 2005,
  [\href{http://arxiv.org/abs/math/0401137}{{\tt math/0401137}}].

\bibitem[MV17]{Marboe:2016yyn}
Christian Marboe and Dmytro Volin.
\newblock {Fast analytic solver of rational Bethe equations}.
\newblock {\em
  \href{https://iopscience.iop.org/article/10.1088/1751-8121/aa6b88}{J.
  Phys.{A50}(20)204002}}, 2017, [\href{https://arxiv.org/abs/1608.06504}{\tt
  math-ph/1608.06504}].

\bibitem[MV18a]{Marboe:2017dmb}
Christian Marboe and Dmytro Volin.
\newblock {The full spectrum of AdS5/CFT4 I: Representation theory and one-loop
  Q-system}.
\newblock {\em \href{http://dx.doi.org/10.1088/1751-8121/aab34a}{{J. Phys.}
  {A51}:165401}}, 2018, [\href{https://arxiv.org/abs/1701.03704}{{\tt
  hep-th/1701.03704}}].

\bibitem[MV18b]{Marboe:2018ugv}
Christian Marboe and Dmytro Volin.
\newblock {The full spectrum of AdS$_5$/CFT$_4$ II: Weak coupling expansion via
  the quantum spectral curve}.
\newblock 2018, \href{https://arxiv.org/abs/1812.09238}{{\tt
  hep-th/1812.09238}}.

\bibitem[Nep20]{Nepomechie:2020ixi}
Rafael~I. Nepomechie.
\newblock {The $A_m^{(1)}$ Q-system}.
\newblock {\em \href{http://dx.doi.org/10.1142/S0217732320502600}{Mod. Phys.
  Lett. A, 35(31):2050260}}, 2020, \href{https://arxiv.org/abs/2003.06823}{\tt
  hep-th/2003.06823}.

\bibitem[NPS18]{Nekrasov:2013xda}
Nikita Nekrasov, Vasily Pestun, and Samson Shatashvili.
\newblock {Quantum geometry and quiver gauge theories}.
\newblock {\em \href{https://doi.org/10.1007/s00220-017-3071-y}{Commun. Math.
  Phys., 357(2):519–567}}, 2018, [\href{https://arxiv.org/abs/1312.6689}{\tt
  hep-th/1312.6689}].

\bibitem[OW86]{Ogievetsky:1986hu}
E.~Ogievetsky and P.~Wiegmann.
\newblock {Factorized S matrix and the Bethe Ansatz for simple Lie groups}.
\newblock {\em \href{http://dx.doi.org/10.1016/0370-2693(86)91644-8}{{Phys.
  Lett.} { B168}:360}}, 1986.

\bibitem[OWR87]{Ogievetsky:1987vv}
E.~Ogievetsky, P.~Wiegmann, and N.~Reshetikhin.
\newblock {The Principal Chiral Field in Two-Dimensions on Classical Lie
  Algebras: The Bethe Ansatz Solution and Factorized Theory of Scattering}.
\newblock {\em \href{http://dx.doi.org/10.1016/0550-3213(87)90138-6}{{Nucl.
  Phys. B} {280}:45--96}}, 1987.

\bibitem[PS99]{Pronko:1998xa}
G.~P. Pronko and Yu.~G. Stroganov.
\newblock {Bethe Equations "on the Wrong Side of Equator"}.
\newblock {\em \href{http://dx.doi.org/10.1088/0305-4470/32/12/007}{{J. Phys.}
  {A32}:2333--2340}}, 1999, [\href{http://arxiv.org/abs/hep-th/9808153}{{\tt
  hep-th/9808153}}].

\bibitem[PS00]{Pronko:1999gh}
G.P. Pronko and Yu.G. Stroganov.
\newblock {The Complex of solutions of the nested Bethe ansatz. The A$_2$ spin
  chain}.
\newblock {\em \href{http://dx.doi.org/10.1088/0305-4470/33/46/309}{{J. Phys.
  A} { 33}:8267}}, 2000, [\href{http://arxiv.org/abs/hep-th/9902085}{{\tt
  hep-th/9902085}}].

\bibitem[Res83]{Reshetikhin1983}
N.Yu. Reshetikhin.
\newblock {The functional equation method in the theory of exactly soluble
  quantum systems}.
\newblock {\em
  \href{http://www.jetp.ac.ru/cgi-bin/dn/e_057_03_0691.pdf}{{ZhETF}
  {84}:1190--1201}}, 1983.

\bibitem[Res85]{Reshetikhin:1986vd}
N.Yu. Reshetikhin.
\newblock {Integrable Models of Quantum One-dimensional Magnets With O(N) and
  Sp(2k) Symmetry}.
\newblock {\em
  \href{https://link.springer.com/article/10.1007/BF01017501}{Theor. Math.
  Phys., 63:555–569}}, 1985.

\bibitem[Res87]{Reshetikhin:1987}
N.~Yu. Reshetikhin.
\newblock The spectrum of the transfer matrices connected with kac-moody
  algebras.
\newblock {\em \href{http://dx.doi.org/10.1007/BF00416853}{Letters in
  Mathematical Physics, 14(3):235–246}}, 1987.

\bibitem[{Ros}15]{Rostami15}
Sean {Rostami}.
\newblock {On the Canonical Representatives of a Finite Weyl Group}.
\newblock 2015, [\href{https://arxiv.org/abs/1505.07442}{{\tt
  math-RT/1505.07442}}].

\bibitem[RV19]{Ryan:2018fyo}
Paul Ryan and Dmytro Volin.
\newblock {Separated variables and wave functions for rational gl(N) spin
  chains in the companion twist frame}.
\newblock {\em \href{http://dx.doi.org/10.1063/1.5085387}{{J. Math. Phys.}
  {60}:032701}}, 2019, [\href{https://arxiv.org/abs/1810.10996}{{\tt
  math-ph/1810.10996}}].

\bibitem[RV20]{Ryan:2020rfk}
Paul Ryan and Dmytro Volin.
\newblock {Separation of variables for rational gl(n) spin chains in any
  compact representation, via fusion, embedding morphism and Backlund flow}.
\newblock 2020, \href{https://arxiv.org/abs/2002.12341}{{\tt
  math-ph/2002.12341}}.

\bibitem[Sib75]{sibuya}
Yasutaka Sibuya.
\newblock {\em Global theory of a second order linear ordinary differential
  equation with a polynomial coefficient}.
\newblock North-Holland, 1975.
\newblock North-Holland Mathematics Studies, Vol. 18.

\bibitem[Skl85]{Sklyanin:1984sb}
E.K. Sklyanin.
\newblock {The Quantum Toda Chain}.
\newblock {\em
  \href{https://link.springer.com/chapter/10.1007/3-540-15213-X_80}{Lect. Notes
  Phys., 226:196–233}}, 1985.

\bibitem[Skl91]{Sklyanin:1991ss}
E.K. Sklyanin.
\newblock {Quantum inverse scattering method. Selected topics}.
\newblock 1991, \href{http://arxiv.org/abs/hep-th/9211111}{{\tt
  hep-th/9211111}}.

\bibitem[Skl96]{Sklyanin:1992sm}
E.K. Sklyanin.
\newblock {Separation of variables in the quantum integrable models related to
  the Yangian Y[sl(3)]}.
\newblock {\em \href{http://dx.doi.org/10.1007/BF02362784}{{J. Math. Sci.}
  {80}:1861--1871}}, 1996, [\href{http://arxiv.org/abs/hep-th/9212076}{{\tt
  hep-th/9212076}}].

\bibitem[{Smi}01]{Smi01}
Feodor~A. {Smirnov}.
\newblock {Separation of variables for quantum integrable models related to $
  U_q(\hat{sl}_N) $}.
\newblock 2001, [\href{http://arxiv.org/abs/math-ph/0109013}{{\tt
  math-ph/0109013}}].

\bibitem[STF79]{springerlink:10.1007/BF01018718}
E.~K. Sklyanin, L.~A. Takhtadzhyan, and L.~D. Faddeev.
\newblock Quantum inverse problem method. i.
\newblock {\em \href{http://dx.doi.org/10.1007/BF01018718}{Theoretical and
  Mathematical Physics, 40:688–706}}, 1979.
\newblock 10.1007/BF01018718.

\bibitem[STSS97]{STS2}
M.~A. Semenov-Tian-Shansky and A.~V. Sevostyanov.
\newblock {Drinfeld-Sokolov reduction for difference operators and deformations
  of W-algebras. II. General Semisimple Case}.
\newblock {\em
  \href{https://link.springer.com/article/10.1007/s002200050312}{Communications
  in Mathematical Physics, 192(3):631–647}}, 192, feb 1997, [{\tt
  \href{http://arxiv.org/abs/q-alg/9702016}{q-alg/9702016}}].

\bibitem[Sun12]{Sun:2012xw}
Juanjuan Sun.
\newblock {Polynomial relations for $q$-characters via the ODE/IM
  correspondence}.
\newblock {\em \href{http://dx.doi.org/10.3842/SIGMA.2012.028}{{SIGMA}
  {8}:028}}, 2012, [\href{https://arxiv.org/abs/1201.1614}{{\tt
  math-QA/1201.1614}}].

\bibitem[Suz00]{Suzuki:1999hu}
J.~Suzuki.
\newblock {Functional relations in Stokes multipliers and solvable models
  related to $U_{q}(A^{(1)}_n)$}.
\newblock {\em \href{http://dx.doi.org/10.1088/0305-4470/33/17/308}{{J. Phys.
  A} { 33}:3507--3522}}, 2000, [\href{http://arxiv.org/abs/hep-th/9910215}{{\tt
  hep-th/9910215}}].

\bibitem[Tal04]{Talalaev:2004qi}
D.~Talalaev.
\newblock {Quantization of the Gaudin system}.
\newblock 2004, \href{http://arxiv.org/abs/hep-th/0404153}{{\tt
  hep-th/0404153}}.

\bibitem[Tsu97]{Tsuboi:1997iq}
Zengo Tsuboi.
\newblock {Analytic Bethe ansatz and functional equations for Lie superalgebra
  sl(r+1$|$s+1)}.
\newblock {\em \href{http://dx.doi.org/10.1088/0305-4470/30/22/031}{J. Phys.,
  A30:7975–7991}}, 1997.

\bibitem[Tsu02]{Tsuboi:2002py}
Zengo Tsuboi.
\newblock {Difference L operators and a Casorati determinant solution to the T
  system for twisted quantum affine algebras}.
\newblock {\em \href{http://dx.doi.org/10.1088/0305-4470/35/19/316}{J. Phys. A,
  35:4363–4373}}, 2002, [{\tt
  \href{https://arxiv.org/abs/0911.5368}{math-ph/0911.5368}}].

\bibitem[Tsu10]{Tsuboi:2009ud}
Zengo Tsuboi.
\newblock {Solutions of the T-system and Baxter equations for supersymmetric
  spin chains}.
\newblock {\em \href{http://dx.doi.org/10.1016/j.nuclphysb.2009.08.009}{{Nucl.
  Phys.} {B826}:399--455}}, 2010, [\href{https://arxiv.org/abs/0906.2039}{{\tt
  math-ph/0906.2039}}].

\bibitem[Vol11]{Volin:2010cq}
Dmytro Volin.
\newblock {Quantum integrability and functional equations: Applications to the
  spectral problem of AdS/CFT and two-dimensional sigma models}.
\newblock {\em
  \href{http://dx.doi.org/10.1088/1751-8113/44/12/124003}{{J.Phys.A} {
  A44}:124003}}, 2011, [\href{https://arxiv.org/abs/1003.4725}{{\tt
  hep-th/1003.4725}}].

\bibitem[Vor83]{voros-quartic}
Andr{\'e} Voros.
\newblock The return of the quartic oscillator. {T}he complex {WKB} method.
\newblock {\em Annales de l'I.H.P. Physique Th\'eorique}, 39(3):211--338, 1983.

\bibitem[Vor00]{Voros_2000}
A~Voros.
\newblock Exact resolution method for general 1d polynomial schrödinger
  equation.
\newblock {\em \href{{https://doi.org/10.1088/0305-4470/33/32/501}}{Journal of
  Physics A: Mathematical and General, 33(32):5783–5784, aug}}, 2000,
  [\href{https://arxiv.org/abs/math-ph/9903045}{\tt math-ph/9903045}].

\bibitem[Was65]{wasow}
Wolfgang Wasow.
\newblock {\em Asymptotic expansions for ordinary differential equations}.
\newblock Pure and Applied Mathematics, Vol. XIV. Interscience Publishers John
  Wiley \& Sons, Inc., New York-London-Sydney, 1965.

\bibitem[Wit79]{Witten:1978bc}
Edward Witten.
\newblock {Instantons, the Quark Model, and the 1/N Expansion}.
\newblock {\em \href{http://dx.doi.org/10.1016/0550-3213(79)90243-8}{Nucl.
  Phys., B149:285}}, 1979.

\bibitem[YNZ06]{Yang:2005ce}
Wen-Li Yang, Rafael~I. Nepomechie, and Yao-Zhong Zhang.
\newblock {Q-operator and T-Q relation from the fusion hierarchy}.
\newblock {\em \href{https://doi.org/10.1016/j.physletb.2005.12.022}{Phys.
  Lett. B, 633:664–670}}, 2006,
  [\href{https://arxiv.org/abs/hep-th/0511134}{\tt hep-th/0511134}].

\bibitem[Zab97]{Zabrodin:1996vm}
A.~Zabrodin.
\newblock {Discrete Hirota's equation in quantum integrable models}.
\newblock {\em \href{http://dx.doi.org/10.1142/S0217979297001520}{Int. J. Mod.
  Phys., B11:3125}}, 1997, [\href{https://arxiv.org/abs/hep-th/9610039}{\tt
  hep-th/9610039}].

\bibitem[Zam90]{Zamolodchikov:1989cf}
A.~B. Zamolodchikov.
\newblock {Thermodynamic Bethe Ansatz in relativistic models. Scaling three
  state Potts and Lee-Yang models}.
\newblock {\em \href{http://dx.doi.org/10.1016/0550-3213(90)90333-9}{Nucl.
  Phys., B342:695–720}}, 1990.

\bibitem[Zar10]{Zarembo:2010yz}
K.~Zarembo.
\newblock {Algebraic Curves for Integrable String Backgrounds}.
\newblock 5 2010, \href{https://arxiv.org/abs/1005.1342}{\tt hep-th/1005.1342}.

\bibitem[{Zin}98]{1998solv.int.10007Z}
P.~{Zinn-Justin}.
\newblock {Quelques applications de l'Ansatz de Bethe (Some applications of the
  Bethe Ansatz)}.
\newblock 1998, \href{https://arxiv.org/abs/solv-int/9810007}{\tt
  solv-int/9810007}.

\end{thebibliography}
\biblio
\end{document}